\documentclass[11pt]{article}
\usepackage{amsmath, amsthm, amssymb, mathtools,comment}
\usepackage{color}
\usepackage[colorlinks=true,citecolor=blue,linkcolor=blue]{hyperref}
\hypersetup{
    colorlinks,
    linkcolor={red!50!black},
    citecolor={blue!50!black},
    urlcolor={blue!80!black}
}
\usepackage{macros}

\title{Discrepancy for Random Linear Codes}
\author{Dean Doron\thanks{Stein Faculty of Computer and Information Science, Ben-Gurion University. \texttt{deand@bgu.ac.il, leonovt@post.bgu.ac.il, mosheiff@bgu.ac.il}.} \and Tal Leonov\footnotemark[1] \and 
Jonathan Mosheiff\footnotemark[1] \and 
Henrique Navas\thanks{Instituto de Telecomunicações and Departamento de Matemática, Instituto Superior Técnico, Universidade de Lisboa. \texttt{h.campos.navas@tecnico.ulisboa.pt, jribeiro@tecnico.ulisboa.pt.}} \and
Nicolas Resch\thanks{Informatics Institute, University of Amsterdam. \texttt{n.a.resch@uva.nl}.} \and 
João Ribeiro\footnotemark[2]}
\date{}

\begin{document}
\maketitle

\begin{abstract}

    We show that random linear codes possess nearly optimal discrepancy-type
    properties in a broad range of settings. Our main results are two general
    discrepancy theorems: one controls all translates of a fixed test, and the
    other controls large families of Fourier-pseudorandom tests. As motivating applications of these two theorems, respectively, we show that:
\begin{itemize}

	\item Random linear codes behave essentially like unstructured random 	codes for list-decoding from errors \emph{above capacity}. More precisely, a random linear code $C\subseteq \F_q^n$ of rate $1 - \frac{1}{n}\log_q|B_\rho| + \eps$, where $|B_\rho|$ is the volume of a radius-$\rho$ Hamming ball in $\F_q^n$, satisfies
\begin{equation*}
	|C \cap B| = (1\pm o(1)) \frac{|C|\cdot |B|}{q^n}
\end{equation*}
simultaneously for all radius-$\rho$ Hamming balls $B$ in $\F_q^n$ with high probability.

This vastly generalizes the previously best known fact that random linear codes of this rate have covering radius at most $\rho n$ with high probability (Blinovsky, 1987).

	\item Over prime fields, random linear codes behave essentially like unstructured random codes for zero-error list-recovery above capacity. More precisely, for a prime $q>2$ and input list size $2\leq \ell\leq q-1$, a random linear code $C\subseteq \F_q^n$ of rate $1-\log_q \ell+\eps$ will satisfy
\begin{equation*}
	|C \cap S| = (1\pm o(1)) \frac{|C|\cdot \ell^n}{q^n}
\end{equation*}
simultaneously for all combinatorial rectangles $S=S_1\times S_2\times\cdots\times S_n$, where $|S_i|=\ell$ for all $i$, with high probability.

We use this to show the abundance of $n$-party linear \emph{ramp} secret sharing schemes over $\F_q$ with, say, privacy threshold approximately $\frac{n}{2\log q}$ and reconstruction threshold approximately $\frac{5n}{2\log q}$ that are resilient against \emph{balanced} local leakage functions. 
Prior work on the existence of leakage-resilient linear secret sharing was stuck at thresholds above $n/2$ for both threshold and ramp schemes, even for the special case of balanced leakage functions.
\end{itemize}

The translate-family result, and hence the list-decoding application, applies
over arbitrary finite fields, even when the field size grows with $n$.  The
list-recovery and leakage applications over prime fields hold under
moderate-growth conditions on $q$, for example $q\le n^{1/5-o(1)}$.  Our results are obtained through a careful
second-moment analysis of the evolution of intersection sizes as random
generators are added to $C$ one by one.
\end{abstract}

\newpage

\tableofcontents

\newpage

\addtocontents{toc}{\protect\setcounter{tocdepth}{2}}

\section{Introduction} \label{sec:intro}
A central theme in coding theory is to understand when random linear codes behave like unstructured uniformly random codes.
This question is especially natural for list-decoding and list-recovery, two important relaxations of unique decoding.

A code $C\subseteq\F_q^n$ is called \deffont{$(\rho,L)$-list-decodable} (from errors) if, whenever a channel corrupts a $\rho$-fraction of symbols, there are at most $L$ codewords that could have produced the received word.
Combinatorially, this means that for every $z\in\F_q^n$,
\[
    |C\cap B_\rho(z)|\leq L,
\]
where $B_\rho(z)$ denotes the Hamming ball of radius $\rho n$ centered at $z$.\footnote{$B_\rho(z)$ consists of all $y\in\F_q^n$ at Hamming distance at most $\rho n$ from $z$, i.e., all $y$ such that $|\{i:y_i\neq z_i\}|\leq \rho n$.}
Unique decoding is the special case $L=1$.

List-recovery is a further relaxation of list-decoding: instead of receiving a single word in $\F_q^n$, the receiver obtains input lists $(S_1,\dots,S_n)$, where each $S_i\subseteq \F_q$ has size at most $\ell$, and must return all codewords that largely agree with these lists.
We focus on the important special case of zero-error list-recovery.
A code $C$ is \deffont{$(\ell,L)$-list-recoverable} if for every such tuple $(S_1,\dots,S_n)$,
\[
    |C\cap (S_1\times S_2\times\cdots\times S_n)|\leq L.
\]
Thus, list-decoding asks for small intersections with Hamming balls, while zero-error list-recovery asks for small intersections with combinatorial rectangles of bounded side lengths.

With these notions in mind, a basic question is: for a code of a given rate $R$, how small can the list size $L$ be?
Both list-decoding and list-recovery exhibit phase transitions: below a certain rate, constant list size is possible; above it, every code must have exponentially large list size.
For list-decoding, fix a field $\F_q$ and an error radius $\rho$.
For every $\eps>0$, the following are known:
\begin{itemize}
    \item \textbf{Below capacity.} There exist codes of rate $1-\frac1n\log_q{\inabs{B_\rho}}-\eps$ that are $(\rho,L=O(1/\eps))$-list-decodable from errors.

    \item \textbf{Above capacity.} Every code $C\subseteq\F_q^n$ of rate $1-\frac1n\log_q{\inabs{B_\rho}}+\eps$ has some center $z\in\F_q^n$ for which
    \[
        |C\cap B_\rho(z)|\geq q^{\eps n}.
    \]
\end{itemize}
We therefore call $1-\frac1n\log_q{\inabs{B_\rho}}$ the \deffont{list-decoding capacity}.\footnote{Recall that $\frac1n\log_q{\inabs{B_\rho}} = h_q(\rho)-o(1)$, where
    \[
        h_q(x)=x\log_q\frac{q-1}{x}+(1-x)\log_q\frac{1}{1-x}
    \]
    is the $q$-ary entropy function. Hence, $1-h_q(\rho)$ is also often termed the list-decoding capacity.
}
The phase transition for zero-error list-recovery is analogous, with capacity $1-\log_q\ell$.

Unstructured uniformly random codes achieve the random-code benchmark on both sides of capacity.
A standard probabilistic-method argument shows that a uniformly random code of rate $1-\frac1n\log_q{\inabs{B_\rho}}-\eps$ is $(\rho,L=O(1/\eps))$-list-decodable with high probability.
Above capacity, a uniformly random code $C\subseteq\F_q^n$ of rate $1-\frac1n\log_q{\inabs{B_\rho}}+\eps$ satisfies, simultaneously for all $z\in\F_q^n$ with high probability,
\begin{equation}\label{eq:LD-above-cap}
    |C\cap B_\rho(z)|=(1\pm o(1))\frac{|C|\cdot |B_\rho|}{q^n}=(1\pm o(1))q^{\eps n}.
\end{equation}
Analogous statements hold for list-recovery.

A long line of work has asked whether random \emph{linear} codes match the performance of unstructured random codes for list-decoding and list-recovery.
Most prior work has focused on the \emph{below-capacity} setting, where random linear codes are known to match uniformly random codes in some parameter regimes~\cite{GHSZ02,GHK11,CGV13,W13,RW14,LW20,AGL24,DMRR25}, but not in others~\cite{Gur03,GLMRSW21,RY24,LMS25,CZ25,LS25}.
Our understanding of the \emph{above-capacity} setting is much poorer.
This leads to the question that motivates our work:
\begin{quote}
    \em
    How well do random linear codes approximate uniformly random codes in list-decoding and list-recovery at rates above capacity?
\end{quote}

Take list-decoding as a running example.
Above capacity, list-decoding for random linear codes is naturally a \emph{discrepancy} problem: the expected size for $|C\cap B_\rho(z)|$ is $|C|\cdot |B_\rho|/q^n\approx q^{\eps n}$, and the goal is to show that a random linear code has approximately this many codewords in \emph{every} Hamming ball.
This is qualitatively different from the below-capacity setting.
Below capacity, the benchmark $|C|\cdot |B_\rho|/q^n$ is exponentially small; moreover, for a linear code, the zero codeword always lies in $B_\rho(0)$, while for centers $z$ with $0\notin B_\rho(z)$ the fixed-center expectation is exponentially small.
Thus the below-capacity problem is not a uniform multiplicative-discrepancy question in the same sense.
As we discuss in detail later, this above-capacity viewpoint leads us to discrepancy-type questions that generalize list-decoding and list-recovery.

Toward proving~\cref{eq:LD-above-cap} for random linear codes, one could first ask for the much weaker but still nontrivial guarantee that a random linear code $C\subseteq\F_q^n$ of rate $1-\frac1n\log_q{\inabs{B_\rho}}+\eps$ satisfies
\[
    |C\cap B_\rho(z)|\geq 1
\]
for every $z\in\F_q^n$ with high probability.
This is precisely the statement that $C$ has covering radius at most $\rho n$.
Blinovsky proved this over arbitrary fields almost forty years ago~\cite{Bli87}; see also the discussion in the book of Cohen, Honkala, Litsyn, and Lobstein on covering codes~\cite[Theorem 12.3.5]{CHLL97}.
To the best of our knowledge, this covering result is essentially all that was previously known about list-decoding random linear codes above capacity.

Beyond being a natural question about the geometry of subspaces of $\F_q^n$, discrepancy questions for random linear codes above capacity arise in several other settings.
For example, zero-error list-recovery above capacity appears naturally in the study of \deffont{leakage-resilient} linear secret-sharing schemes.
Roughly speaking, such schemes should ensure that leaking, say, one bit of information about each share reveals almost nothing about the secret.
There is a natural way, due to Massey, to turn a linear code into a secret-sharing scheme.
For such a scheme, leakage resilience can be phrased as follows: for any functions $f_1,\dots,f_n:\F_q\to\{0,1\}$, the distribution of $(f_1(c_1),\dots,f_n(c_n))$, where $c=(c_1,\dots,c_n)$ is sampled uniformly from $C$, should be close in total variation distance to the distribution of $(f_1(x_1),\dots,f_n(x_n))$, where $x=(x_1,\dots,x_n)$ is sampled uniformly from $\F_q^n$.
For leakage bits $b_1,\dots,b_n\in\{0,1\}$, put $S_i=f_i^{-1}(b_i)$.
Then
\[
    \Pr[(f_1(c_1),\dots,f_n(c_n))=(b_1,\dots,b_n)]
    =\frac{|(S_1\times\cdots\times S_n)\cap C|}{|C|}.
\]
The numerator is the intersection of the code with a combinatorial rectangle, exactly the kind of quantity controlled by zero-error list-recovery.
In this application, assuming the sets $S_i$ are not too small, the relevant intersection size is expected to be exponential; the point is not to make the list small, but to show that its exponential size is the correct one.

\subsection{Our Contributions} \label{sec:our-contributions}

We show that random linear codes match the performance of uniformly random codes in list-decoding (over any field) and zero-error list-recovery (over prime fields) above capacity.
In fact, we go further than that. 
We prove two main discrepancy theorems for random linear codes which show that random linear codes have strong above-capacity discrepancy properties for a broad class of combinatorial tests, including the tests arising from list-decoding over any field and zero-error list-recovery over prime fields.
Our new theorems also imply new results about leakage-resilient secret sharing.

We state simplified versions of our general theorems here, specialized to the main motivating examples. The body of the paper gives the more general formulations.

We first introduce a useful normalization.
For a code $C\subseteq\F_q^n$ and a nonempty set $B\subseteq\F_q^n$, define the \deffont{relative deviation} of $C$ with respect to $B$ by
\begin{equation}\label{eq:setDeviation}
    \delta_{C,B}
    :=\frac{\Pr_{x\sim C}[x\in B]}{\Pr_{x\sim \F_q^n}[x\in B]}-1
    =\frac{q^n\cdot |C\cap B|}{|C|\cdot |B|}-1.
\end{equation}
Thus $|\delta_{C,B}|\leq \tau$ is equivalent to
\[
    (1-\tau)\frac{|C|\cdot |B|}{q^n}
    \leq |C\cap B| \leq
    (1+\tau)\frac{|C|\cdot |B|}{q^n}.
\]

\paragraph{List-Decoding and Beyond.}
Our first theorem applies to all translates of a single set.
As we discuss below, this is a more general setting than list-decoding.
\begin{theorem}[Informal; see \Cref{thm:singleFunction}] \label{thm:informal-single-set}
For any constant prime power $q$ and every $n \in \N$ the following holds.
    Let $B\subseteq\F_q^n$ be nonempty, and for $z\in\F_q^n$ define $B_z:=z+B$.
    Let
    \[
        \beta:=1-\frac{1}{n}\log_q |B|,
    \]
    and fix $\eta\in(0,1-\beta)$.
    Let $C$ be a random linear code of rate $R=\beta+\eta$.
    Then, with probability at least $1-q^{-\Omega(\eta n)}$,
    \[
        \forall z\in\F_q^n,\qquad |\delta_{C,B_z}|\leq q^{-\Omega(\eta n)}.
    \]
\end{theorem}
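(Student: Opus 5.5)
The plan is to build $C$ one random generator at a time and track, for every translate at once, how the normalized intersection count evolves; this is the second-moment analysis the abstract describes. Write $k=Rn$ and let $C_0=\{0\}\subseteq C_1\subseteq\cdots\subseteq C_k=C$, where $C_{t+1}=C_t+\mathrm{span}(g_{t+1})$ and the $g_t$ are independent and uniform in $\F_q^n$. For a set $A$ define
\[
f_t(z):=\frac{|C_t\cap (z+A)|}{|C_t|}\cdot\frac{q^n}{|A|},
\]
so that $\delta_{C_t,B_z}=f_t(z)-1$ when $A=B$.

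The crucial structural fact is that $f_t$ is a convolution. We have $f_t(z)=\frac{q^n}{|B|}\,\E_{c\sim C_t}[\mathbf 1_B(c-z)]$, and passing to $C_{t+1}$ averages over cosets:
\[
f_{t+1}(z)=\frac1q\sum_{a\in\F_q} f_t(z-a g_{t+1}).
\]
I will therefore use Fourier analysis on $\F_q^n$. We have $\hat f_t(\alpha)=\hat f_0(\alpha)\cdot \mathbf 1[\alpha\in C_t^\perp]$, where $\hat f_0$ is the normalized Fourier transform of $\mathbf 1_{-B}$, scaled so that $\hat f_0(0)=1$. Parseval then gives
\[
\sum_z (f_t(z)-1)^2 = q^n\sum_{0\neq\alpha\in C_t^\perp}|\hat f_0(\alpha)|^2 .
\]
Each nonzero $\alpha$ survives into $C^\perp$ with probability $q^{-k}$, and Parseval bounds the total $\sum_\alpha|\hat f_0(\alpha)|^2=q^n/|B|$. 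Hence
\[
\E\Big[\sum_z\delta_{C,B_z}^2\Big]\le q^{n-k}\cdot\frac{q^n}{|B|}=q^{2n-k-\log_q|B|}=q^{n(1+\beta-R)}=q^{n(1-\eta)}.
\]
This only shows that the average over $z$ of $\delta^2$ is $q^{-\eta n}$. A union bound over all $q^n$ centers would need $\delta^2\le q^{-\Omega(\eta n)}$ pointwise, and Markov applied to the sum is too weak by a factor of $q^n$. This is the main obstacle: converting $L^2$ control into $L^\infty$ control.

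To get $L^\infty$ control I will exploit the incremental structure. Since $f_{t+1}$ is an average of $f_t$ over a line, we have $\|f_{t+1}\|_\infty\le\|f_t\|_\infty$, and deviations can only shrink. Run the process in two phases.

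\textbf{Phase one} takes the first $k_1=(\beta+\eta/2)n$ generators. Only the counting scale matters here: $|C_{k_1}|\cdot|B|/q^n=q^{\eta n/2}$. Then $C_{k_1}$ is a random linear code whose expected intersection with each translate is exponentially large. For each fixed $z$, a Chebyshev bound over the pairwise-independent nonzero codeword combinations (the standard Blinovsky-style argument) gives $|\delta|\le q^{-\eta n/8}$ except with probability about $q^{-\eta n/4}$. That failure probability is not enough for a union bound over all $z$. So instead I aim to bound the number of bad $z$, which is at most $q^{n-\eta n/4}$ in expectation, and then use the averaging in phase two to wash out the bad centers.

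\textbf{Phase two} adds the remaining $\eta n/2$ generators. Each new generator replaces $f(z)$ by the average of $f$ over the $q$ points of a random line through $z$. I will use a second-moment (Chebyshev) bound on the set of good points after the first phase. For a fixed $z$, the expected fraction of bad points in the random subspace $z+\mathrm{span}(g_{k_1+1},\ldots,g_k)$ is small. Pairwise independence of the points $z+\sum a_ig_i$ then gives concentration with failure probability $q^{-\eta n/2}\cdot(\text{bad density})^{-1}$. I expect this step to be the hardest to balance.

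If the balancing fails, I will fall back on iterating the second-moment bound dyadically. Track $M_t=\max_z|f_t(z)-1|$ and show $M_t$ contracts once $|C_t|$ exceeds $q^n/|B|$. The $L^2$ budget computed above is the quantity to spend, and Bernstein-type control on the line averages is what drives each contraction step. The final bound is $q^{-\Omega(\eta n)}$ with probability $1-q^{-\Omega(\eta n)}$, with constants depending on $q$.
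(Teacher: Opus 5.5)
The gap is in phase two, and it cannot be closed with the tools you propose. Your Fourier computation of $\mathbb{E}\sum_z\delta_{C,B_z}^2$ is correct, and you correctly locate the difficulty in passing from $L^2$ to $L^\infty$ control.

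Write $D=C_{k_1}$, $\delta_1(w)=\delta_{D,B_w}$, and let $V$ be the span of the last $\eta n/2$ generators. Iterating your line-averaging identity, $\delta_{C,B_z}$ is (up to rank issues) the average of $\delta_1$ over the random coset $z+V$. Two things go wrong.

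\emph{Density versus magnitude.} A small density of bad points in $z+V$ does not make this average small, because nothing prevents $\delta_1$ from being as large as $q^{\Theta(n)}$ at bad points. You must control $\sum_{v\in V}\delta_1(z+v)$ itself. Chebyshev for this sum (mean $\delta_1(z)/|V|$, variance of order $q^{\beta n-k_1}/|V|=q^{-\eta n}$) just reproduces the direct second-moment bound for $C$, so splitting into phases gains nothing.

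\emph{Union bound.} More fundamentally, any bound resting on pairwise independence of the points of $z+V$ gives a per-center failure probability of order $q^{-\eta n}\Delta^{-2}$ at best, for target deviation $\Delta$. Your own estimate $q^{-\eta n/2}\epsilon^{-1}$, with $\epsilon$ the bad density, is at least $q^{-\eta n/2}$. The union bound over centers needs roughly $q^{-n}$. Even using that $\delta_{C,B_z}$ is constant on cosets of $C$, so only $q^{(1-R)n}$ centers are distinct, the total is about $q^{(1-\beta-2\eta)n}\Delta^{-2}$, which is useless once $\eta\le(1-\beta)/2$.

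Your fallback of making $M_t=\max_z|f_t(z)-1|$ contract step by step has the same defect. Each step involves a single random vector: the $q$ points of the new line through $z$ are determined by it, so Bernstein-type bounds are unavailable. Near the end of the process, the conditional Chebyshev bound for one center in one step is only $q^{-\Theta(\eta n)}$. That is too weak for a per-step union bound over centers when $\eta$ is small.

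The missing idea is to get a per-translate failure probability of $q^{-\Omega(n^2)}$ from independence \emph{across} the $k$ steps. Let $f=(q^n/|B|)\mathbf{1}_B$ and $F_f=f*\check f$. By \cref{lem:prob_step}, given $C_i$, the next deviation $\delta_{C_{i+1},f_z}$ has mean $\delta_{C_i,f_z}/q$ and variance at most $((q-1)/q)^2\,\delta_{C_i,F_f}$. Moreover, $\delta_{C_i,F_f}=\sum_{0\neq y\in C_i^\perp}|\widehat f(y)|^2=\mathbb{E}_z[\delta_{C_i,f_z}^2]$ is exactly your $L^2$ budget, and it is the same for every translate.

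The argument then runs as follows.
\begin{enumerate}
\item A direct second-moment bound (\cref{lem:randomCodeConcentration}, union-bounded over the $k+1$ steps) gives $\delta_{C_i,F_f}\le q^{k-i}\Gamma$ for all $i$. This is the only place a $q^{-\Omega(\eta n)}$ failure probability is paid.
\item Condition on this common event, and set $\Gamma'=q^t\max\{q^{-\eta n},\sqrt{\Gamma}\}$. Consider a translate whose deviation does not increase before it first violates $|\delta_{C_i,f_z}|\le q^{k-i}\Gamma'$. It can violate this only if, at $\Omega(t)$ steps, its deviation is large yet fails to shrink by roughly $1/q$.
\item Each such step has conditional probability $q^{-\Omega(t)}$ by Chebyshev, so the total is $q^{-\Omega(t^2)}$ (\cref{lem:fSmoothOrIncreasing}). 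Taking $t=\Theta(\eta n)$ beats the union bound over $q^n$ translates.
\item Steps at which a translate's deviation increases are handled by a path version of your remark that line-averaging cannot increase the maximum. Choose translates backward, $z_i=z_{i+1}+\gamma_i g_{i+1}$ with $\gamma_i\in\mathbb{F}_q$ maximizing $|\delta_{C_i,f_{z_{i+1}+\gamma g_{i+1}}}|$. This produces a translate that fails while being monotone up to its failure time (\cref{lem:fSmooth}).
\end{enumerate}

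Your fallback points toward spending the $L^2$ budget. Without this multi-step amplification and the treatment of increasing steps, the argument does not close.
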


For list-decoding, take $B=B_\rho(0)$.
Then $\beta=1-\frac1n\log_q{\inabs{B_\rho}}$, and \Cref{thm:informal-single-set} says that a random linear code of rate $1-\frac1n\log_q{\inabs{B_\rho}}+\eta$ has, with high probability, the correct intersection size
\[
    |C\cap B_\rho(z)|=(1\pm q^{-\Omega(\eta n)})\frac{|C|\cdot |B_\rho(0)|}{q^n}
\]
for every center $z\in\F_q^n$.
In particular, this recovers and significantly strengthens Blinovsky's covering-radius theorem~\cite{Bli87}, which only guarantees that every Hamming ball contains at least one codeword.

We additionally remark that defining a ``capacity'' with respect to an arbitrary set $B$ via the formula $1-\frac1n\log_q|B|$ is not new to our work; it appeared in an earlier work of Loeliger~\cite{L94}.

\paragraph{List-Recovery and Beyond.}
The translate theorem is well suited to list-decoding because all Hamming balls of a fixed radius are translates of one another.
For list-recovery, the relevant rectangles are not all translates of a single set.
Our second theorem handles large families of sets, provided the sets satisfy a Fourier pseudorandomness condition.
We say that a set $B\subseteq\F_q^n$ is \deffont{$\alpha$-Fourier-concentrated} if
\[
    \forall y\in\F_q^n,\qquad
    \left|\Eover{Z\sim B}{\chi_y(Z)}\right|\leq \alpha^{\mathrm{wt}(y)},
\]
where $Z$ is uniform over $B$ and $\chi_y$ is the Fourier character defined in \cref{sec:Notations}.
Informally, this condition says that $B$ has little correlation with any nontrivial linear test, with the correlation decaying exponentially in the Hamming weight of the test.

\begin{theorem}[Informal; see \Cref{thm:mainGeneral}] \label{thm:informal-general}
    For any constant prime power $q$ and every $n \in \N$ the following holds.
    Let $\cB$ be a family of nonempty $\alpha$-Fourier-concentrated subsets of $\F_q^n$, and let
    \[
        \beta:=1-\frac1n\min_{B\in\cB}\log_q |B|.
    \]
    Fix $\eta\in(0,1-\beta)$, and let $C$ be a random linear code of rate $\beta+\eta$.
    Then, with probability at least
    \[
        1-q^{-\Omega_\alpha(\eta n)}-|\cB|\,q^{-\Omega_\alpha(n^2)},
    \]
    we have
    \[
        \forall B\in\cB,\qquad |\delta_{C,B}|\leq q^{-\Omega(\eta n)}.
    \]
\end{theorem}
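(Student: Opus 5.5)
We plan to follow the strategy sketched in the abstract: build $C$ by adding uniformly random generators $g_1,\dots,g_k$ one at a time, $C_t=\mathrm{span}(g_1,\dots,g_t)$, and track the relative deviations $\delta_{C_t,B}$ for all $B\in\cB$ simultaneously. The basic identity is that, writing $f_B=\frac{q^n}{|B|}\mathbf 1_B-1$, we have $\delta_{C,B}=\sum_{y\in C^\perp\setminus\{0\}}\widehat{f_B}(y)$-type Fourier expressions. More directly, $\delta_{C_{t+1},B}$ is the average over $a\in\F_q$ of $\delta_{C_t,B-a g_{t+1}}$. Equivalently, in Fourier terms, $1+\delta_{C,B}=\sum_{y\in C^\perp}\overline{\widehat{\mathbf 1_B}(y)}\,q^n/|B|$. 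Hence $1+\delta_{C,B}=\sum_{y\in C^\perp}\E_{Z\sim B}\chi_y(Z)$, and the deviation is the sum of the normalized Fourier coefficients over nonzero dual codewords. Adding a generator shrinks $C^\perp$ by a factor $q$.

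\textbf{Step 1: second moment.} For a fixed $B$, we compute
\[
\E_{C}\,\delta_{C,B}^2=\sum_{y\neq y'}\Pr[y,y'\in C^\perp]\,\widehat\mu_B(y)\overline{\widehat\mu_B(y')}.
\]
Using that $C^\perp$ is a random subspace of codimension $k=Rn$, pairs of nonzero vectors lie in $C^\perp$ with probability about $q^{-k}$ if they are collinear and $q^{-2k}$ otherwise. By Parseval, $\sum_y|\widehat\mu_B(y)|^2=q^n/|B|=q^{\beta n}$. The collinear contribution is therefore about $q\cdot q^{\beta n-k}=q^{-\eta n+O(1)}$, and the non-collinear terms essentially cancel, since $\sum_{y}\widehat\mu_B(y)=q^n\mu_B(0)$ contributes negligibly. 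This gives $\E\delta_{C,B}^2\le q^{-\eta n+O(1)}$ for each fixed $B$.

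\textbf{Step 2: union bound.} Chebyshev alone gives only $q^{-\Omega(\eta n)}$ failure per set, which cannot be union-bounded over $|\cB|$ sets (nor over $q^n$ translates). We therefore split the dual sum by weight. Using $\alpha$-Fourier-concentration, the contribution to $\delta_{C,B}$ from $y\in C^\perp$ with $\mathrm{wt}(y)\ge w$ is at most $\sum_{y\in C^\perp,\mathrm{wt}(y)\ge w}\alpha^{\mathrm{wt}(y)}$. This quantity does not depend on $B$, so it is a single event about $C$. Its expectation is at most $q^{-k}\sum_{j\ge w}\binom nj(q-1)^j\alpha^j$, which is $q^{-\Omega(\eta n)}$ once $w$ is a suitable linear threshold, and Markov handles it once for all $B$; this produces the $1-q^{-\Omega_\alpha(\eta n)}$ term. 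For the low-weight part, we use that a random dual of codimension $Rn$ contains no nonzero vector of weight below a linear $w$ except with probability $q^{-\Omega(n)}$, by Gilbert–Varshamov counting. One checks that $\alpha$ can be chosen so that the two regimes overlap. If they do not, we instead analyze the generator-by-generator martingale $\delta_{C_t,B}$: conditioned on $C_t$, each increment is an average of $q$ nearly independent terms, and an Azuma/Freedman bound gives per-$B$ tail $q^{-\Omega(n^2)}$. That per-set tail is what a union bound over $\cB$ needs, and it explains the $|\cB|q^{-\Omega_\alpha(n^2)}$ term.

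\textbf{Main obstacle.} The key difficulty is obtaining concentration at $q^{-\Omega(n^2)}$ per set. We expect this to come from a bounded-differences argument over the $k\approx Rn$ generators, with each step changing $\delta$ by at most $q^{-\Omega(n)}$ on a good event. This requires controlling the increments uniformly, which in turn uses Fourier concentration to rule out the intermediate codes $C_t$ correlating with $B$. Tracking this good event through all $t$, including the early steps where $\delta_{C_t,B}$ is legitimately large (for small $t$, $C_t$ is far below capacity), is the delicate part. Our first attempt would be to start the martingale only from $t\approx\beta n$ and to handle the early phase by the weight-splitting bound above.
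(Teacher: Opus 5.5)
There is a genuine gap, and it begins in Step~2. The parameter $\alpha$ is given by hypothesis, so you cannot choose it to make the two weight regimes overlap. The $B$-independent quantity $\sum_{y\in C^\perp\setminus\{0\}}\alpha^{\mathrm{wt}(y)}$ has expectation about $q^{-k}(1+(q-1)\alpha)^n$. This is dominated by weights near $\rho^*n$, where $\rho^*=\frac{(q-1)\alpha}{1+(q-1)\alpha}\le 1-\frac1q$ and $h_q(\rho^*)+\rho^*\log_q\alpha=\log_q(1+(q-1)\alpha)$.

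Hence, whenever $\log_q(1+(q-1)\alpha)>R$, we get $h_q(\rho^*)>R$. So $\rho^*n$ lies strictly above the Gilbert--Varshamov distance $h_q^{-1}(R)n$ of $C^\perp$. Around that weight, $C^\perp$ has about $q^{n(h_q(\rho^*)-R)}$ vectors, and their total contribution to your triangle-inequality bound is about $q^{n(\log_q(1+(q-1)\alpha)-R)}$, which is exponentially large. In this middle band, smallness of $\delta_{C,B}$ can only come from cancellation among the $B$-specific phases of the Fourier coefficients, which an absolute-value bound cannot see.

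This is the main case, not an edge case. For the rectangles in the list-recovery application ($\alpha=\cos(\pi/q)$), one checks that $\log_q(1+(q-1)\alpha)>R$ throughout the admissible range. In the paper, your Markov event (``$\gamma$-dual-typicality'', Lemmas~\ref{lem:typical} and~\ref{lem:ConcentratedSmooth}) is used only for a function whose concentration parameter $\alpha'$ satisfies $1+(q-1)\alpha'\le q^{\eta/100}$.

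So your fallback carries the whole proof, and it lacks its key ingredient. The increments of $\delta_{C_t,B}$ are not a priori small: while $C_t$ is small, one new codeword in $B$ moves $\delta$ by order $q^{\beta n-t}$. The process is not a martingale, since its conditional mean is $\delta_{C_t,B}/q$. And you never specify the good event. The missing fact is Lemma~\ref{lem:prob_step}: the conditional variance of $\delta_{C_{t+1},f}$ is at most $((q-1)/q)^2\,\delta_{C_t,F_f}$, where $F_f=f*\check f$ is the mirrored self-convolution of $f=f_B$. This is again a $B$-dependent deviation, so controlling the steps requires controlling $F_f$ along the construction, which leads to a recursion.

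The paper closes the recursion as follows.
\begin{itemize}
\item Since $\widehat{F_f}=|\widehat f|^2$, iterating $f\mapsto F_f$ squares the concentration parameter. After $d$ iterations ($d$ depending only on $q,\alpha,\eta$), $1+(q-1)\alpha^{2^d}\le q^{\eta/100}$, and your single Markov event controls $f^{(d)}$ for every $B$ at once.
\item Smoothness is then pushed back down one level at a time (Lemma~\ref{lem:fSmooth}). The event that $|\delta_{C_i,F_g}|\le q^{k-i}\Gamma$ for all $i$, yet $|\delta_{C_i,g}|>q^{k-i+t}\max\{\|g\|_\infty q^{-k},\sqrt\Gamma\}$ for some $i$, has probability at most $q^{n-t^2/10}$.
\item This is not an Azuma bound. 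By Chebyshev, each ``bad'' step (where a still-large $|\delta|$ fails to shrink by nearly a factor $q$) has conditional probability $O(k^2t^{-2}q^{3-t})$, and a failure forces $\Omega(t)$ bad steps.
\item Steps where $|\delta|$ increases are removed by choosing, backward along the sequence, a translate $f_z$ that is monotone yet still fails. One then union-bounds over the $q^n$ translates, which all share $F_f$.
\end{itemize}
The $\sqrt\Gamma$ loss at each of the $d$ levels gives the final exponent $\Theta(\eta 2^{-d})$, which is why the deviation bound depends on $\alpha$. Without this self-convolution recursion, or some other way to control the step variances uniformly, your argument does not reach the $q^{-\Omega(n^2)}$ per-set tail.
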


As an example, consider zero-error list-recovery with input list size $\ell$.
For the upper-bound guarantee it suffices to control rectangles with side length exactly $\ell$, since smaller input lists can be padded.
Thus consider rectangles $S_1\times\cdots\times S_n$ with $S_i\in\binom{\F_q}{\ell}$.
Over prime fields, standard Fourier estimates imply that such rectangles are $\alpha$-Fourier-concentrated for some $\alpha=\alpha(q,\ell)<1$, assuming $2\leq\ell\leq q-1$.
Therefore, in any regime where the family size $\binom{q}{\ell}^n$ is $q^{o(n^2)}$, \Cref{thm:informal-general} implies that a random linear code of rate $1-\log_q\ell+\eta$ satisfies, with high probability,
\[
    \forall S_1,\dots,S_n\in\binom{\F_q}{\ell},\qquad
    \left(1-q^{-\Omega_{q,\ell}(\eta n)}\right)\frac{|C|\cdot \ell^n}{q^n}
    \leq |C\cap(S_1\times\cdots\times S_n)|
    \leq
    \left(1+q^{-\Omega_{q,\ell}(\eta n)}\right)\frac{|C|\cdot \ell^n}{q^n}.
\]
In particular, the upper bound gives the expected above-capacity zero-error list-recovery guarantee for random linear codes in these regimes.

\paragraph{Application to Leakage-Resilient Linear Ramp Secret Sharing.}

Our \cref{thm:informal-general} yields new results about the local leakage-resilience of linear secret sharing schemes.
We present here a brief overview of this topic and leave a more careful discussion to \cref{sec:leakage-resilience}.

Informally, a threshold secret sharing scheme encodes a secret into a tuple of $n$ shares such that any appropriately large subset of shares determines the secret, while any appropriately small subset of shares reveals no information about the secret.
For applications we often require that the secret sharing scheme be \emph{linear}.
Roughly speaking, this means that the secret and the shares are all elements of $\F_q$, and any linear combination of shares of two secrets yields a valid sharing of the linear combination of the two secrets.
When $q$ is much smaller than the number of parties $n$ (i.e., when we aim for small shares), there must be an $\Omega(n)$ gap between the \emph{reconstruction threshold} $\trec$ (all subsets of shares of size at least $\trec$ can reconstruct the secret) and the \emph{privacy threshold} $\tpriv$ (all subsets of shares of size at most $\tpriv$ give no information about the secret) unless both $\trec$ and $\tpriv$ are unusually large~\cite{CCX13}.
Such schemes with a gap between thresholds are called \emph{ramp secret sharing schemes}, and they are the focus of our discussion.

Motivated by side-channel attacks, recently there has been significant interest in understanding the resilience of linear secret sharing schemes against attacks that leak a small amount of side information from \emph{every} share.
More precisely, in the most popular model of \emph{bounded local leakage}~\cite{BDIR21}, an adversary chooses local leakage functions $g_i:\F_q\to\bits$ for $i\in[n]$, and, letting $S=(S_1,\dots,S_n)$ denote the $n$ shares of some secret, learns the $n$-bit leakage
\begin{equation*}
    g(S)=(g_1(S_1),g_2(S_2),\dots,g_n(S_n)).
\end{equation*}
Then, we want to establish that linear secret sharing schemes are information-theoretically resilient against this type of leakage, meaning that $g(S)$ and $g(S')$ should be very close in statistical distance for any sharings $S$ and $S'$ of two distinct secrets $s$ and $s'$, respectively.

Establishing leakage-resilience of secret sharing schemes is easier for larger reconstruction threshold.
Therefore, a central research direction has been to establish the existence of leakage-resilient linear secret sharing schemes with small reconstruction threshold (and, in the context of ramp secret sharing, with an as-small-as-possible gap between the privacy and reconstruction thresholds).
State-of-the-art results have only managed to show existence of such leakage-resilient schemes for reconstruction thresholds above $n/2$~\cite{MPSW21,TX22,KK23,Kas24,Ngu24}.
Remarkably, this holds true even if we focus only on the restricted sub-family of \emph{balanced} leakage functions, which, informally, satisfy $|g_i^{-1}(0)|\approx |g_i^{-1}(1)|$ for all $i\in[n]$.
In contrast, we can do much better for \emph{unbalanced} leakage functions~\cite{KK23}.
This suggests that balanced leakage functions are the barrier towards further progress in leakage-resilient secret sharing.

We use \cref{thm:informal-general} to obtain an improved existential result about locally leakage-resilient linear ramp secret sharing schemes against balanced leakage functions.
In fact, our result applies to a much weaker notion of ``balanced''.
To avoid complicating the exposition, we state here an informal and simplified version of our result and leave details \cref{sec:leakage-resilience}.

\begin{theorem}[Informal version; see \cref{thm:rampss-short,thm:rampss-full} and \cref{cor:rampss-simpler}]\label{thm:informal-lrss}
    Fix an arbitrary prime $q\geq 11$ and an arbitrary $\eta\in(0,\frac{1}{2\log q})$.
    Then, for all sufficiently large $n$ there exists an $n$-party linear ramp secret sharing scheme over $\F_q$ with reconstruction and privacy thresholds $\trec$ and $\tpriv$ satisfying
    \begin{equation*}
        \left(\eta-o(1)\right)n\leq \tpriv < \trec \leq \left(\eta+ \tfrac{2}{\log q} +  o(1)\right)n
    \end{equation*}
    that is resilient against all balanced leakage functions.
\end{theorem}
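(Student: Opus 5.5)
We would derive \cref{thm:informal-lrss} from \cref{thm:informal-general} through the Massey construction. Take a random linear code $C\subseteq\F_q^{n+1}$, where coordinate $0$ carries the secret and coordinates $1,\dots,n$ carry the shares. Equivalently, take a random $\F_q$-linear code $C'\subseteq\F_q^n$ together with a random linear functional giving the secret.

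Reconstruction and privacy are standard. Reconstruction for every set $T$ of size $\trec$ holds if the dual code has no nonzero codeword supported on $T\cup\{0\}$ that involves coordinate $0$. Privacy for every $T$ of size $\tpriv$ holds if no such primal dual-type dependency exists. For random linear codes of dimension $k$, a union bound over the $\binom{n}{t}$ sets gives thresholds $\tpriv\approx k'$ and $\trec\approx k+\text{(entropy slack)}$. Since $q\ge 11$ is a fixed prime, the gap contributed by the $\binom{n}{t}\approx 2^{n}$ union bound is $O(n/\log q)$ in the $q$-ary exponent. We would choose the dimension $k\approx(\eta+\tfrac{1}{\log q}+o(1))n$, with the dual distance giving $\tpriv\ge(\eta-o(1))n$.

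Leakage resilience is the substantive part. For fixed balanced $g_i$ and fixed leakage $b$, the probability $\Pr[g(S)=b\mid\text{secret}=s]$ equals $|(C_s)\cap S_b|/|C_s|$, where $C_s$ is the coset of the secret-$s$ subcode and $S_b=\prod_i g_i^{-1}(b_i)$. Each side $g_i^{-1}(b_i)$ has size roughly $q/2$, so by the prime-field Fourier estimate the rectangle $S_b$ is $\alpha$-Fourier-concentrated with $\alpha=\alpha(q)<1$; this is where primality and balance enter. Cosets are handled by noting that a translate of a Fourier-concentrated set is still Fourier-concentrated, since $|\E\chi_y|$ is translation invariant. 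We would therefore apply \cref{thm:informal-general} to the subcode of secret $0$, of rate $\beta+\eta'$ with $\beta=1-\log_q(q/2)\approx 1/\log q$, and to the family $\cB$ of all shifted rectangles. If we restrict to rectangles coming from balanced leakage, this family has size at most $q\cdot 2^{O(qn)}$, which is $q^{o(n^2)}$. Every such set then receives relative deviation $q^{-\Omega(\eta' n)}$ for every secret simultaneously. Hence $\Pr[g(S)=b\mid s]=(1\pm q^{-\Omega(\eta n)})\Pr[g(U)=b]$ for all $b$. Summing over $b$ gives statistical distance $q^{-\Omega(\eta n)}$ between any two secrets.

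The main obstacle is the parameter accounting. The leakage step needs the secret-$0$ subcode to have rate at least $1/\log q+\eta'$, which pushes the dimension, and hence $\trec$, up to about $(\eta+\tfrac{2}{\log q})n$. This total combines the $1/\log q$ leakage capacity, the $1/\log q$ union-bound slack, and $\eta$ of privacy. We also need the three high-probability events to hold simultaneously, which a union bound handles. The constraint $\eta<\tfrac{1}{2\log q}$ ensures that the total rate stays below $1$, and $q\ge 11$ ensures that $\alpha(q)$ is bounded away from $1$ with usable constants.
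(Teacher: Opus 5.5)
Your plan is essentially the paper's proof. It uses Massey's construction, with $\trec$ and $\tpriv$ obtained from the distance and dual distance of $C^+$ via a GV-type union bound in which $\binom{n}{t}\le 2^n=q^{n/\log q}$ costs $n/\log q$ on each side (the paper phrases this as $h_q(\rho)\le\rho+\frac{1}{\log q}$). Leakage resilience then follows by reducing to the discrepancy of the reduced (secret-$0$) code against the translation-closed family of balanced rectangles, which are $\cos(\pi/q)$-Fourier-concentrated, and summing $\frac{|S_b|}{q^n}|\delta_{C,S_b}|$ over $b$.

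One slip to fix: the criterion you state for reconstruction is really the privacy criterion. A set $T$ reconstructs iff no codeword of $C^+$ vanishes on $T$ with nonzero coordinate $0$, equivalently iff some dual codeword supported in $T\cup\{0\}$ has nonzero $0$-coordinate; privacy for $T$ is the absence of such a dual codeword. Your later accounting (distance for $\trec$, dual distance for $\tpriv$) does use the correct pairing.

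Also, in the paper the hypotheses $\eta<\frac{1}{2\log q}$ and $q\ge 11$ come from the condition $\eta\le\frac{50}{51}\beta$ of \cref{cor:SimplerMain} and from needing $R\le 1-\frac1q$ for the entropy bound. They do not come from keeping the rate below $1$ or making $\alpha$ usable, and your route via the general theorem and a direct union bound does not need them.
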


As mentioned above, in the setting of \cref{thm:informal-lrss} (constant $q$ and linear reconstruction threshold), an $\Omega_q(n)$ gap between $\trec$ and $\tpriv$ is unavoidable~\cite{CCX13}.
We achieve a gap of $(\frac{2}{\log q} +o(1))n$, and both our privacy and reconstruction thresholds are linear in $n$.
Furthermore, our reconstruction threshold can  be made smaller than $cn$ for any constant $c>0$ by taking $q$ to be a sufficiently large constant.
In contrast, prior results were stuck at thresholds above $n/2$ even only for very balanced leakage functions.

\paragraph{A Glimpse at our Techniques.} 
While we will provide a much more detailed overview of our techniques below in \Cref{sec:proof-overview}, we briefly give a birds-eye view of our approach. To construct a random linear code, we imagine constructing a random linear code by iteratively sampling uniformly random basis vectors. That is, define $C_0 := \{0\}$, and inductively put $C_i := C_{i-1}+\spn\{u_i\}$, where $u_i$ is sampled uniformly at random from $\F_q^n$: the final code is $C = C_k$, where $k=Rn$ is the target dimension.

The main task, then, is to control the growth of the intersections $|C_i \cap B|$ for increasing $i=0,1,\dots,k$, for all sets $B$ under consideration. A first, fairly obvious fact is that the growth rate is ``correct'' in expectation -- after all, we are trying to argue that the $|C_k \cap B|$ is always roughly its expectation! The main challenge is to get a tail bound: we need to argue that it is \emph{very unlikely} that any $|C_i \cap B|$ deviates far from its expectation. 

The main tool at our disposal is the fact that the events ``$x \in C$'' (over distinct $x$) are ``nearly'' \emph{pairwise} independent (there is a slight exception from colinear pairs, but this is negligible): thus, a second-moment method is natural. The challenge is that applying a second moment argument to \emph{each} $B$ under consideration and then applying a union bound is doomed to fail (there are more sets to union over than the reciprocal of the failure probability). Our main accomplishment is to identify a \emph{single common} event that implies \emph{all} the growth rates are adequately controlled; additionally, this event can be proved likely to occur via a basic second-moment argument. 

\subsection{Additional Context and Related Works} \label{sec:related-works}

Having now given a taste of our results and our techniques, we now discuss (additional) related works, to help situate our contribution.

\paragraph{Covering Radius.} As mentioned above, in studying codes above capacity the parameter that has seen the most attention is its covering radius $\rho$, and that work by Blinovsky from the 80's~\cite{Bli87} demonstrates that codes of rate $1-\frac 1n\log_q\inabs{B_\rho}+\eta$ are very likely to satisfy $|C \cap B_\rho(z)|\geq 1$ for all $z \in \F_q^n$. The textbook argument of this fact~\cite{CHLL97} indeed considers revealing the random linear codes basis vectors one at a time, analogously to us. However, a notable simplification here is that the property of having covering radius $\rho$ is \emph{monotone}: once a code covers a certain vector $z$, adding additional basis vectors cannot ``undo'' this. The argument of~\cite{CHLL97} exploits this fact, essentially by arguing that after a certain number of steps ``almost all'' points are covered, and then adding a handful of more basis vectors to handle the uncovered vectors (in particular, the union bound here can be much smaller). In contrast to this, since we establish a two-sided bound, we are not able to argue that certain vectors $z$ are ``satisfied,'' as every step could potentially cause an issue. In particular, while covering codes can be viewed as codes ``above capacity,'' the defining property is not a real ``discrepancy'' problem, as we focus on addressing here. 

\paragraph{Linear Hashing.} Discrepancy questions for random linear codes above capacity do arise in the study of linear hashing~\cite{ADMPT99,DD22,JKZ25,B26}. Here one fixes a subset $S\subseteq\F_q^n$ and, identifying $C$ with the kernel of a random linear map, asks that every translate $y+S$ have about the expected intersection size with $C$. The relevant regime is again just above the natural threshold: $C$ is chosen so that $|C|\cdot |S|=q^{n+O(1)}$. The main difference here is that we look at codes with constant \emph{relative} gap above capacity (i.e., $|C| \cdot |S| = q^{(n+\eps)n}$): it is an interesting open problem to determine if our techniques can additionally work with this narrower gap. We remark that most of these works achieve bounds only over the \emph{binary} field~\cite{ADMPT99,JKZ25,B26}; only the work of Dhar and Dvir~\cite{DD22} achieves bounds over arbitrary fields (as we do). 

\paragraph{Potential Function-Based Arguments.}
Prior works~\cite{GHSZ02,LW20,JKZ25,B26} have used potential function-based arguments to study the list-decodability of random linear codes below capacity and linear hashing. 
Briefly, all these arguments consider a potential function of the form $\Phi(C)= \Eover{X \in \F_2^n}{b^{|C \cap (X+S)|}}$, where $S$ is the set under consideration ($B_\rho(0)$ for list-decoding, or an arbitrary set for linear hashing), and $b>0$ is an appropriate base for the exponential (about $2^{n/L}$ for list-decoding, $b\leq 1$ for linear hashing). As in our work, the arguments in these works consider constructing the codes by including a basis vector one at a time, and observing how this potential function changes. The crucial observation is that for any fixed code $C$, we have $\Eover{U \sim \F_2^n}{\Phi(C+\langle U\rangle)} \leq \Phi(C)^2$. That is, on average over the newly included basis vector, the potential is squared. By iterating this, one obtains a bound of the form $\Phi_k \leq \Phi(\{0\})^{2^k}$, which is enough to show that \emph{some} linear code exists with the desired property (this is essentially where the argument of~\cite{GHSZ02} stops); later works~\cite{LW20,JKZ25,B26} refine this argument to also obtain a statement that holds with high probability.

Establishing that this potential function squares crucially uses the assumption that the field is binary. Over the binary field, every line consists of just two points; additionally, note that adding a random basis vector essentially amounts to adding a line at every codeword in a random direction. For random linear codes (over \emph{any} field), the events ``$x \in C$'' (over distinct non colinear vectors $x$) are always \emph{pairwise} independent; however, they fail to be even 3-wise independent (for example, conditioned on $x \in C$ and $y \in C$, it holds that $x+y \in C$ with probability 1). In particular, one obtains correlations in computing $\Eover{U \sim \F_2^n}{\Phi(C+\langle U\rangle)}$ that appear hard to handle. Thus, this potential function-based argument -- while extremely elegant -- is currently limited to $\F_2$. In fact, a motivation for our work was to develop techniques allowing for an analysis over arbitrary fields. 
We remark that by combining the potential function-based argument of~\cite{GHSZ02,LW20} and the recent tail bound of Jaber, Kumar, and Zuckerman~\cite{JKZ25} one can prove a much weaker version of \Cref{thm:informal-single-set} \emph{for binary codes only}, where the $2^{-\Omega(\eta n)}$ term is replaced by a constant larger than $2$ (in particular, it only guarantees list size a constant factor larger than the expectation) and the probability bound is also weaker.
Generalizing this type of argument to larger fields has eluded researchers for the last 25 years. 
For future work, it would be interesting to determine to what extent our techniques can be applied to random linear codes \emph{below} capacity. 

\paragraph{Optimal Polynomial Intersection (OPI).} This problem has recently attracted attention as a possible route to demonstrating quantum advantage~\cite{JSW25DQI,CT25,C25,KSGZYMBJ25,R26,SW26}.
In one version of the problem,\footnote{We state here the version for general linear codes. The original formulation focused on Reed--Solomon codes, hence the word ``polynomial,'' but the same question is natural for arbitrary linear codes.} one is given input lists $S_1,\dots,S_n\subseteq\F_q$ and a description of a linear code $C$, and must output a codeword $c\in C$ maximizing $|\{i\in[n]:c_i\in S_i\}|$.
Equivalently, one seeks a codeword in an output list for a list-recovery problem (with errors) at the smallest possible decoding radius.
The parameter regimes of interest are above capacity: the corresponding list will contain exponentially many codewords. However, the question here is more algorithmic -- \emph{find} some codeword from this (exponentially large) set of possibilities -- rather than combinatorial/discrepancy-theoretic, like the questions we study. 

\paragraph{Random Linear Codes Below Capacity.} Finally, while our main goal is to study random linear codes at rates \emph{above} capacity, we remark that there has been a long line of work studying random linear codes \emph{below} capacity (we already mentioned some such works above when discussing potential function-based arguments). 
Firstly, we consider the task of list-decoding: at rates $1-\frac 1n\log_q\inabs{B_\rho}-\eps$, one aims to establish that random linear codes are with high probability $(\rho,O(1/\eps))$-list-decodable, which matches the performance of uniformly random codes. It is worth mentioning a classical argument of Zyablov and Pinsker~\cite{ZP81} demonstrates they are $(\rho,q^{O(1/\eps)})$-list-decodable; so the challenge is to show an exponentially smaller list-size is still sufficient. The aforementioned work~\cite{GHSZ02} derived the $O(1/\eps)$ list-size, but only over the binary alphabet; later works managed to (largely) extend this to arbitrary, constant-sized fields~\cite{GHK11,CGV13,W13}, although often with some parameter restrictions. 

List-recovery has also seen more attention in recent years, both with errors\footnote{Briefly, a code $C$ is $(\rho,\ell,L)$-list-recoverable \emph{from errors} if for any combinatorial rectangle $S_1 \times \cdots \times S_n$ with each $|S_i| \leq \ell$, we have $|\{c \in C:|\{i \in [n]:c_i \notin S_i\}| \leq \rho n\}| \leq L$. That is, we allow a $\rho$ fraction of the coordinates $c_i$ to not lie in the input list $S_i$.} and without. For list-recovery in this setting, there has only been one positive result establishing that random linear codes can be list-recovered with list-size $O_{q,\ell}(1/\eps)$~\cite{DMRR25} (while to match plain random codes, one would hope for list-size $O(\ell/\eps)$). However, there have been a number of \emph{negative} results~\cite{GLMRSW21,RY24}, in some cases showing that random linear codes provably perform \emph{worse} than plain random codes. Notably, the restrictions for zero-error list-recovery arise over fields of small characteristic, but disappear over prime fields: similarly, to establish zero-error list-recovery bounds (and hence our application to leakage-resilient secret-sharing), we work over prime fields. 

When $q$ is large (at least $\exp{\Omega(1/\eps)}$), many recent works~\cite{SI23,BGM23,GZ23,AGL24,LMS25} have established that random linear codes (and, even more interestingly, random \emph{Reed-Solomon} codes;\footnote{Reed-Solomon codes are defined by evaluating low-degree univariate polynomials at $n$ distinct field elements; to generate a random Reed-Solomon code, one samples these evaluation points randomly.} in this case $q\geq \Omega(n) \cdot \exp{\Omega(1/\eps)}$) of rate $R$ are $(1-R-\eps,O(1/\eps))$-list-decodable. More generally, they can be shown to achieve the \emph{generalized Singleton bound}, i.e., are $(\frac{L}{L+1}(1-R-\eps),L)$-list-decodable for any $L \geq 1$. For list-recovery over large alphabets ($q \geq \ell^{\Omega(1/\eps)}$), recent works have shown matching lower~\cite{LS25} and upper~\cite{RW14,RW18,BCDZ26} bounds on the list-size required at decoding radius $1-R-\eps$: $L = \ell^{O(1/\eps)}$. This is provably \emph{worse} than plain random codes (where $L=O(\ell/\eps)$ is sufficient); furthermore, we emphasize that this limitation is inherent over \emph{every} finite field, not just small characteristic fields.

\section{Main Results}

This section presents the formal versions of the paper's main discrepancy results.
We begin by fixing notation and conventions, and then state two theorems on the discrepancy of random linear codes: the first controls all translations of a single test function, while the second applies to large families of Fourier-concentrated test functions. We then derive the corresponding applications to list-decoding and list-recovery.

The application to leakage-resilient secret-sharing schemes is treated separately in \cref{sec:leakage-resilience}.

\subsection{Notation and Conventions}\label{sec:Notations}

\subsubsection*{Random Linear Codes}
Fix a prime power $q$ and integers $k,n\in\N$ with $k\le n$.
In this work, a \deffont{random linear code (RLC)} of designed dimension $k$ and length $n$ over $\F_q$ is the image of a uniformly random matrix $G\in\F_q^{n\times k}$.
Equivalently, its designed rate is $k/n$.
We refer to this as the \deffont{random generating matrix model}.

Another common model samples $C$ uniformly from the set of all $k$-dimensional subspaces of $\F_q^n$.
The two models agree after conditioning on the event that $G$ has full rank, since the image of a full-rank random matrix is a uniformly random $k$-dimensional subspace.
Moreover, the latter event holds with probability at least $1-\frac{q^{-(n-k+1)}}{1-1/q}$.
Thus, whenever $n-k$ grows linearly with $n$, passing between the two models changes the stated probabilities only by a negligible term. This is the regime of the constant-rate applications below.

\subsubsection*{Fourier Transform}
For functions $f:\F_q^n\to\mathbb{C}$, we use the expectation measure on the primal space and the counting measure on the Fourier side.
The Fourier transform of $f$ at $y\in\F_q^n$ is
\[
    \widehat f(y)=q^{-n}\sum_{x\in\F_q^n} f(x)\chi_y(x),
\]
where $\chi_y$ is the additive character of $\F_q^n$ associated with $y$.
Specifically, if $\F_q$ has characteristic $p$ and $q=p^m$, then
\[
    \chi_y(x)=\exp{\frac{2\pi i}{p}\tr(\langle x,y\rangle)},
\]
where $\langle x,y\rangle=\sum_{j=1}^n x_jy_j$ is the standard inner product over $\F_q$.
Here $\tr:\F_q\to\F_p$ is the absolute field trace,
\[
    \tr(a)=a+a^p+a^{p^2}+\cdots+a^{p^{m-1}}.
\]

Under these normalizations, for $r\ge 1$ we write
\[
    \norm{f}_r
    =\left(q^{-n}\sum_{x\in\F_q^n}|f(x)|^r\right)^{1/r}
    \quad\text{and}\quad
    \norm{\widehat f}_r
    =\left(\sum_{y\in\F_q^n}|\widehat f(y)|^r\right)^{1/r}.
\]
We define convolution by
\[
    (f*g)(y)=q^{-n}\sum_{x\in\F_q^n} f(x)g(y-x).
\]
With these choices, Parseval's identity takes the form $\norm{f}_2=\norm{\widehat f}_2$, and convolution satisfies
\[
    \widehat{f*g}=\widehat f\cdot \widehat g.
\]

A nonnegative function $f:\F_q^n\to\R_{\ge 0}$ is \deffont{normalized} if its expectation is $1$, i.e.,
\[
    \norm{f}_1=\Eover{x\sim\F_q^n}{f(x)}=1.
\]
We will use the following Fourier pseudorandomness condition.

\begin{definition}[$\alpha$-Fourier-concentration]
    A normalized function $f:\F_q^n\to\R_{\ge 0}$ is \deffont{$\alpha$-Fourier-concentrated}, for $\alpha\in[0,1]$, if for every $y\in\F_q^n$,
    \[
        \inabs{\widehat f(y)}\le \alpha^{\wt{y}}.
    \]
\end{definition}

\subsubsection*{Relative Deviation}
In \cref{eq:setDeviation}, we defined the relative deviation of a code $C$ with respect to a set $B$.
We now extend this notion from sets to nonnegative normalized functions.
This formulation is more convenient when the tests are weighted functions rather than indicators.

For a nonempty set $B\subseteq\F_q^n$, let
\[
    f_B:=\frac{q^n}{|B|}1_B.
\]
Then $f_B$ is normalized, and the set-based relative deviation is recovered from the functional definition below.

\begin{definition}[Relative deviation]
    Let $f:\F_q^n\to\R_{\ge 0}$ be normalized, and let $C\subseteq\F_q^n$ be nonempty.
    Write
    \[
        f(C):=\sum_{x\in C}f(x).
    \]
    The \deffont{relative deviation} of $C$ with respect to $f$ is
    \[
        \delta_{C,f}
        :=\Eover{x\sim C}{f(x)}-\Eover{x\sim\F_q^n}{f(x)}
        =\frac{f(C)}{|C|}-1.
    \]
\end{definition}

In particular, for $f_B=(q^n/|B|)1_B$,
\[
    \delta_{C,f_B}
    =\frac{q^n|C\cap B|}{|C|\,|B|}-1
    =\delta_{C,B}.
\]
Thus, $\inabs{\delta_{C,f_B}}\le\tau$ is equivalent to
\[
    (1-\tau)\frac{|C|\,|B|}{q^n}
    \le |C\cap B|\le
    (1+\tau)\frac{|C|\,|B|}{q^n}.
\]

\subsection{Discrepancy Results for Random Linear Codes}

\subsubsection{Small Relative Deviation for Translations of a Single Function}
We first state a detailed version of \cref{thm:informal-single-set}.
It shows that, for any fixed normalized nonnegative function $f$, a random linear code of rate slightly above the natural threshold for $f$ has exponentially small relative deviation with respect to every translation of $f$.

For a normalized function $f$, define its \deffont{capacity parameter} by
\[
    \beta:=\frac1n\log_q\norm{f}_\infty.
\]
This choice is motivated in detail in \cref{sec:proof-overview}.
For now, observe that if $B\subseteq\F_q^n$ is nonempty and $f=f_B=(q^n/|B|)1_B$, then
\[
    \beta=1-\frac1n\log_q|B|.
\]
Moreover, for a code of rate $R$, the expected intersection of $C$ with a uniformly random translate of $B$ is
\[
    \frac{|C|\,|B|}{q^n}=q^{(R-\beta)n}.
\]
Thus $R=\beta$ is the point at which this expected intersection size is $1$, while rates $R=\beta+\eta$ correspond to the above-capacity regime where the expected intersection size is $q^{\eta n}$.

\begin{definition}[Translation]\label{def:translation}
    For a function $f:\F_q^n\to\mathbb{C}$ and a vector $z\in\F_q^n$, define the \deffont{translation} $f_z:\F_q^n\to\mathbb{C}$ by
    \[
        f_z(x)=f(x+z).
    \]
\end{definition}

\begin{theorem}[Small deviation of a translation family]\label{thm:singleFunction}
    Fix a normalized function $f:\F_q^n\to \R_{\ge 0}$, and let $\beta = \frac 1n\cdot \log_q \norm f_\infty$. Let $\mathcal{F} = \{f_z : z \in \F_q^n\}$ be the family of all translations of $f$. 
    
  Let $C\subseteq \F_q^n$ be a random linear code of rate $R = \beta + \eta$ (where $\eta > 0$ and $k = Rn$), and assume that $n$ is large enough that $\eta n \ge 240\log_q n$. Let $\eps = \frac{\eta}{12}$. Then, with probability at least
    $$ 1 - 4q^{1 - \frac{\eta n}{3}} - \frac{q^{k-n+1}}{(q-1)^2} - q^{2n - \frac{\eta^2 n^2}{1440}} $$
    over the choice of $C$, we have 
    $$\max_{z \in \F_q^n}|\delta_{C,f_z}|\le q^{-\eps n}\eperiod$$
\end{theorem}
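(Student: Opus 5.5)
We build $C$ one generator at a time and control every translate simultaneously through a single good event. Set $C_0=\{0\}$ and $C_i=C_{i-1}+\spn\{u_i\}$ with $u_1,\dots,u_k$ independent and uniform in $\F_q^n$, so that $C=C_k$. The term $\frac{q^{k-n+1}}{(q-1)^2}$ in the failure probability pays for the event that the $u_i$ are linearly dependent; we condition on its complement, so $|C_i|=q^i$.

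\textbf{The averaged quantity.} For each $i$ define
\[
\Phi_i:=\Eover{z\sim\F_q^n}{\delta_{C_i,f_z}^2}.
\]
By Parseval,
\[
\Phi_i=\sum_{y\neq 0}|\widehat f(y)|^2\,\mathbf 1[y\in C_i^\perp]\le \norm f_2^2\le \norm f_\infty=q^{\beta n}.
\]
The useful picture is that $\Phi_i$ only loses the Fourier mass surviving in the dual code. Each new generator kills each surviving $y$ with probability $1-1/q$, so $\E[\Phi_{i}\mid C_{i-1}]=\Phi_{i-1}/q$ and hence $\E[\Phi_i]\le q^{\beta n-i}$. At $i=k=(\beta+\eta)n$ this gives $\E\Phi_k\le q^{-\eta n}$. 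Markov's inequality then shows that $\Phi_i\le q^{\beta n-i+\eta n/3}$ holds for all $i$ simultaneously with probability $1-k\,q^{-\eta n/3}$; this is the source of the $4q^{1-\eta n/3}$ term.

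\textbf{From average to maximum.} The average bound alone does not control the maximum over $z$: a union bound over $q^n$ translates costs a factor $q^n$, and $q^n\Phi_k$ is not small. This is the main obstacle. To get around it I use the increment structure. Writing $C_i=\bigcup_{a\in\F_q}(C_{i-1}+au_i)$, we have
\[
\delta_{C_i,f_z}=\frac1q\sum_{a\in\F_q}\delta_{C_{i-1},f_{z+au_i}}.
\]
Thus the deviation at $z$ after step $i$ is the average of the old deviations along the random line $z+\F_q u_i$.

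\textbf{Tracking the deviations along the chain.} I define the event $\mathcal E_i$: for every $z$,
\[
|\delta_{C_i,f_z}|\le \max\bigl(q^{(\beta-\epsilon')n-i/2}\cdot\text{poly},\;q^{-\eps n}\bigr).
\]
Take the first index $i_0\approx(\beta+\eta/2)n$ at which the trivial bound $|\delta_{C_{i_0},f_z}|\le \norm f_\infty/|C_{i_0}|$ is already at most $q^{-\eta n/2}$. After $i_0$, I control all $z$ at once by a second-moment bound on the $q$-point averages. For fixed $C_{i-1}$ and a fixed $z$, the values $\delta_{C_{i-1},f_{z+au_i}}$ for $a\ne0$ are evaluations of the function $g=\delta_{C_{i-1},f_\cdot}$ at random points. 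These points are pairwise independent across distinct lines, and $\E g=0$ over $u_i$, with variance at most $\Phi_{i-1}/q$.

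\textbf{Handling all centers at once.} To cover all centers, I use the fact that the bad set $\{z: |g(z)|>\tau\}$ has density at most $\Phi_{i-1}/\tau^2$ by Chebyshev. I then require that no coset of $C_i$ contains too many bad points. The proportion of bad points in $z+C_i$ is the $C_i$-average of the indicator of the bad set, and its variance is bounded by the same Parseval computation applied to that indicator. Iterating this over the final $\approx \eta n/2$ steps, with a union bound over the $q^{n}$ cosets times a per-step failure probability of $q^{-\Omega(\eta^2 n^2)}$, gives the $q^{2n-\eta^2n^2/1440}$ term. That per-step bound would come from a higher-moment, Chernoff-type estimate over $\Theta(\eta n)$ nearly independent generators. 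The assumption $\eta n\ge 240\log_q n$ absorbs the polynomial factors, and taking $\eps=\eta/12$ leaves slack for the losses at each of the three thresholds.

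I expect the hard part to be making this union over cosets rigorous with only pairwise independence. To do it, I would first try to bound the $2t$-th moment of the coset-bad-count via independence of the disjoint generator blocks.
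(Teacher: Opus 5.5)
Your first ingredient is correct, and it is the paper's common event. $\Phi_i=\sum_{y\in C_i^\perp\setminus\{0\}}|\widehat f(y)|^2$ equals $\delta_{C_i,F_f}$ for the mirrored self-convolution $F_f=f*\check f$, and it is the same for every translate. Controlling it for all $i$ by Markov is a fine substitute for \cref{lem:fSmoothBySecondMoment}, since it is nonnegative with $\mathbb{E}[\Phi_i\mid C_{i-1}]=\Phi_{i-1}/q$. The coset-averaging identity is also the right tool.

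After that, the argument breaks in two places.
\begin{enumerate}
\item Your ``trivial bound'' $|\delta_{C_{i_0},f_z}|\le\norm{f}_\infty/|C_{i_0}|$ is false. $\norm{f}_\infty/|C|$ is only the size of the \emph{mean} $q^{-m}(f(0)-1)$, i.e.\ the ideal trajectory you are trying to prove. Deterministically one only has $-1\le\delta_{C,f_z}\le\norm{f}_\infty-1$; for instance $\delta_{C,f_z}=-1$ whenever $C\cap\mathrm{supp}(f_z)=\emptyset$. If your bound held, the theorem would follow deterministically at $i=k$ with $\eps=\eta$.
\item The part meant to beat the $q^n$ union bound is only a hope. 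A bound of $q^{-\Omega(n^2)}$ cannot hold per step: given $C_{i-1}$, any event depending on $u_i$ alone has probability $0$ or at least $q^{-n}$. It must come from compounding conditional bounds over $\Omega(n)$ steps, and you do not say which event is compounded or why its conditional probability is small.
\end{enumerate}
A smaller point: the variance of the line average is $((q-1)/q)^2\Phi_{i-1}$, not $\Phi_{i-1}/q$, because the points $z+au_i$, $a\ne 0$, determine one another.

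The missing mechanism works for one fixed translate at a time. Conditional on $C_{i-1}$, \cref{lem:prob_step} gives $\Eover{u_i}{\delta_{C_i,f}}=\delta_{C_{i-1},f}/q$ and $\Varover{u_i}{\delta_{C_i,f}}\le((q-1)/q)^2\Phi_{i-1}$. On the common event, $\Phi_{i-1}\le q^{k-i+1}\Gamma$. Set $\Lambda=\max\{\norm{f}_\infty q^{-k},\sqrt\Gamma\}$. Whenever $|\delta_{C_{i-1},f}|>q^{k-i+t/2}\Lambda$, Chebyshev shows the step fails to contract by the factor $1/q+\lambda$ with conditional probability at most $q^{1-t}/\lambda^2$. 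Leaving the envelope $q^{k-i+t}\Lambda$ forces at least $\tau\ge 0.1t$ such non-contracting steps. Each bound holds conditionally on the past, so they multiply, and summing over positions gives $k^\tau(q^{1-t}/\lambda^2)^\tau\le q^{-0.1t^2}$ (\cref{lem:fSmoothOrIncreasing}).

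This counting needs the deviation to be non-increasing up to the failure time, since one increase can undo many contractions. The paper gets this from your own averaging identity by backward selection (\cref{lem:fSmooth}). If $f$ first fails at time $j$, set $z_j=0$ and $z_i=z_{i+1}+\gamma_iu_{i+1}$, with $\gamma_i$ maximizing $|\delta_{C_i,f_{z_{i+1}+\gamma u_{i+1}}}|$. Since $z_0-z_i\in C_i$, the single translate $f_{z_0}$ has monotone deviations up to $j$ and still fails at $j$. All translates share $F_f$, so the union bounds over translates cost only $q^{2n}$, and $t=\eta n/12$ gives the $q^{2n-\eta^2n^2/1440}$ term.

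Your proposal lacks these two ideas: compounding the conditional Chebyshev bound over $\Omega(t)$ steps for a single translate, and reducing to a monotone translate.
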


\subsubsection{Small Relative Deviation for a Family of Fourier-Concentrated Functions}
We next state a detailed version of \cref{thm:informal-general}, dealing with general families of Fourier-concentrated functions.

\begin{theorem}[Small deviation of Fourier-concentrated Functions]\label{thm:mainGeneral}
Fix a family $\cF$ of normalized and $\alpha$-Fourier-concentrated functions $f:\F_q^n\to \R_{\ge 0}$ (where $0<\alpha<1$), and set $\beta = \max_{f\in \cF} \frac 1n\cdot \log_q \norm f_\infty$. Let $C\subseteq \F_q^n$ be a random linear code of rate $R = \beta + \eta\le 1$ for some $\eta > 0$. Denote
$$d = \max\inset{0, \left\lceil \log_2 \left( \frac{\log_2\inparen{(q-1)/(q^{\eta/100}-1)}}{\log_2(1/\alpha)} \right) \right\rceil} \eperiod$$
Fix a parameter $t > 20\log_q(n)$ such that  $2t/n < 2^{-d} \min\inparen{R - \frac{\eta }{50}, 2\eta}$ and let $$\eps = 2^{-d} \min\inparen{R - \frac{\eta }{50}, 2\eta} - \frac{2t}{n}\eperiod$$ Then, for sufficiently large $n$, with probability at least 
$$ 1 - n(Rn+1)q^{-\frac{\eta n}{100}} - |\cF| \cdot d \cdot q^{n - t^2/10} $$
over the choice of $C$, we have $$|\delta_{C,f}|\le q^{-\eps n}$$ for every $f\in \cF$.
\end{theorem}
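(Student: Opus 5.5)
We grow the code one generator at a time, $C_0=\{0\}$ and $C_i=C_{i-1}+\spn\{u_i\}$ with $u_i$ uniform in $\F_q^n$. For each $f\in\cF$ we track the normalized mass $f(C_i)/|C_i|$. The first tool is a Fourier identity. For a subspace $V$,
\[
\frac{f(V)}{|V|}=\sum_{y\in V^\perp}\widehat f(y)=1+\sum_{y\in V^\perp\setminus\{0\}}\widehat f(y),
\]
so that
\[
\delta_{V,f}=\sum_{0\ne y\in V^\perp}\widehat f(y).
\]
By $\alpha$-Fourier-concentration, $|\widehat f(y)|\le \alpha^{\wt y}$. Hence the only dangerous dual vectors are those of low weight, and $\delta$ is controlled by the weight distribution of $C^\perp$ weighted by $\alpha^{\wt y}$. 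For a random code, though, $C^\perp$ has dimension $(1-R)n$. The low-weight dual words alone are then too many to sum crudely against $\alpha^{w}$ when $\alpha$ is close to $1$. This is why the theorem includes the parameter $d$, and why we also need the $\|f\|_\infty$ bound. The plan is therefore to combine a trivial bound at the $\ell_\infty$ level with Fourier decay via repeated self-convolution.

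\textbf{Step 1 (amplification by convolution).} Set $g_0=f$ and $g_{j+1}=g_j*g_j$, so that $\widehat{g_j}=\widehat f^{\,2^j}$ is $\alpha^{2^j}$-concentrated. After $d$ squarings we have $\alpha^{2^d}\le (q^{\eta/100}-1)/(q-1)$. Then
\[
\sum_{y\ne 0}|\widehat{g_d}(y)|\le \bigl(1+(q-1)\alpha^{2^d}\bigr)^n-1\le q^{\eta n/100}.
\]
So $g_d$ has small $\ell_1$ Fourier mass, and in particular $\|g_d\|_\infty\le q^{\eta n/100}$, which is nearly flat. We also use $\|g_j\|_\infty\le \|f\|_\infty=q^{\beta n}$ for all $j$.

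\textbf{Step 2 (the single common event).} We want one event, independent of $f$, from which we can read off the evolution of $g(C_i)/|C_i|$. A natural choice uses $\delta_{C_i,g}$ and the identity
\[
\frac{g(C_{i})}{|C_{i}|}=\frac1q\sum_{a\in\F_q}\frac{g(C_{i-1}+au_i)}{|C_{i-1}|}.
\]
The fluctuation at step $i$ has mean $0$ over $u_i$. Its variance is governed by $\sum_{y\in C_{i-1}^\perp}|\widehat g(y)|^2$, which relates to $\delta_{C_{i-1},|g|^{*2}}$, a quantity of the type $\delta_{C_{i-1},g_{j+1}}$. This gives a ladder: the second moment at level $j$ is controlled by the first-moment deviation at level $j+1$. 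At the top level $d$ the $\ell_1$ Fourier bound finishes the argument directly. The generic event, which accounts for the term $n(Rn+1)q^{-\eta n/100}$, is that for every $i$ and every low weight $w$, the number of weight-$w$ vectors in $C_i^\perp$ is close to its expectation. By a first-moment/Markov argument on $C_i$, this is a single event for all of $\cF$.

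\textbf{Step 3 (per-function tail and union bound).} Conditioned on the common event, for each $f$ and each level $j\le d$ we use Chebyshev or a martingale bound across the $k$ steps. This gives a deviation bound $q^{-\eps n}$ with failure probability $q^{n-t^2/10}$. The loss $2t/n$ in $\eps$ is the slack from Azuma-type concentration with $t$ standard deviations, and the $\min(R-\eta/50,2\eta)$ term comes from the $\ell_\infty$ budget $q^{\beta n}$ against $|C|=q^{Rn}$. Each level halves the achievable exponent, which explains the factor $2^{-d}$. A union bound over $\cF$ and the $d$ levels gives the stated probability.

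\textbf{Main obstacle.} The hardest part is Step 2: making the ladder close. We must show that the variance of the step-$i$ increment of $g_j(C_i)$ is bounded by quantities involving only $g_{j+1}$ on $C_{i-1}$, plus collinear corrections, in a way that loses only a factor $2$ in the exponent per level. My first attempt would write the increment variance explicitly as $\sum_{0\ne y\in C_{i-1}^\perp}|\widehat{g_j}(y)|^2$, up to the nonuniformity over $a\in\F_q$, and then bound this by induction.
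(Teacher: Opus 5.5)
\textbf{What matches.} Your overall architecture is the paper's: expose generators one at a time, use $\delta_{V,f}=\sum_{0\ne y\in V^\perp}\widehat f(y)$, impose a single Markov-type event $|C_i^\perp\cap W_j|\le q^{\eta n/100}q^{-i}|W_j|$ (for all weights $1\le j\le n$, not only low ones), handle level $d$ of an iterated self-convolution by that event, and descend a ladder. In that ladder the one-step variance of $\delta_{C_i,g_j}$ is at most $((q-1)/q)^2\delta_{C_{i-1},g_{j+1}}$, and each level costs a square root.

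\textbf{The gap: Step 3.} This is where the real work lies, and the tools you name there cannot produce $q^{n-t^2/10}$. The ladder gives you only a \emph{conditional variance} bound per step. The increments of $\delta_{C_i,g}$, or of the martingale $q^i\delta_{C_i,g}$, are not almost surely bounded on the scale of their standard deviation. The trivial bound is of order $\|g\|_\infty$, which is exponentially larger. So Azuma and Freedman are unavailable, and Chebyshev decays only polynomially in the jump size. The theorem's slack is multiplicative: the threshold is $q^{k-i+t}\Lambda$, which is what the $2t/n$ loss encodes, not ``$t$ standard deviations'' in the Gaussian sense. Against such a threshold, Chebyshev cannot exclude a single step at which $|\delta|$ jumps above it, except with probability about $q^{-2t}$. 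That is far from $q^{-\Omega(t^2)}$, and it does not survive the union bound over $|\cF|$.

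\textbf{Missing idea 1: step counting.} The paper gets $q^{-\Omega(t^2)}$ from two ideas absent from your proposal. Condition on $(F_g,\Gamma)$-smoothness. Whenever $|\delta_{C_{i-1},g}|>q^{k-i+t/2}\Lambda$, Chebyshev shows that step $i$ fails to shrink $|\delta|$ by the factor $\frac1q+\lambda$ (with $\lambda=t/(4qk)$) with probability only $O(q^{1-t}/\lambda^2)$. If $|\delta|$ never increases, exceeding $q^{k-i+t}\Lambda$ forces at least $0.1t$ such failed steps. Multiplying the conditional probabilities then gives $q^{-0.1t^2}$.

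\textbf{Missing idea 2: monotone translates.} Upward jumps are eliminated using $\delta_{C_i,g_z}=\frac1q\sum_{a}\delta_{C_{i-1},g_{z+au_i}}$. Starting from the first failure time, choose backwards the coset that maximizes $|\delta|$ at each step. This yields a translate $g_z$ that fails at the same time and along which $|\delta|$ is non-increasing. The union bound over the $q^n$ translates is exactly the factor $q^n$ in $q^{n-t^2/10}$. It is legitimate only because $F_{g_z}=F_g$, so every translate shares the same conditioning event.

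\textbf{Mirrored convolution.} This forces the ladder to use $g_{j+1}=g_j*\check g_j$, as your $|g|^{*2}$ in Step 2 suggests, rather than $g_j*g_j$ as in Step 1. For $q>2$, $\widehat{g*g}=\widehat g^{\,2}$ is complex, so $\delta_{C,g*g}$ need not dominate the variance $\sum_y|\widehat g(y)|^2$. Moreover, $g_z*g_z$ is the translate $(g*g)_{2z}$ rather than $g*g$ itself, which breaks the shared-event union bound. Finally, you misplaced the main obstacle: the variance identity is a short computation, since only collinear dual pairs contribute, while Step 3 is the heart of the proof.
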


The following corollary is a simplified and slightly weaker version of \cref{thm:mainGeneral}.
It is intended to make the trade-off between the family size, the success probability, and the resulting deviation bound easier to parse.

\begin{corollary}[Simplified version of \cref{thm:mainGeneral}]\label{cor:SimplerMain}
For every $n\in \N$, prime power $q$, and $0<\alpha<1$, the following holds. Let
$\cF$ be a family of normalized and $\alpha$-Fourier-concentrated functions
$f:\F_q^n\to\R_{\ge 0}$. 
Let
\[
\beta:=\max_{f\in\cF}\frac1n\log_q\norm f_\infty .
\]
Let $0 < \eta\le 50\beta/51$, and let $C\subseteq\F_q^n$ be a random linear code
of rate $R=\beta+\eta \le 1$. Set
\[
\lambda
:=
\min\left\{
1,\frac{\log(1/\alpha)}{2\log(100q/\eta)}
\right\}.
\]
Let $0 < \theta < \eta \lambda$. Then, assuming $n\ge n_0(q,\eta,\alpha,\theta)$, it holds with probability at least
\[
1-q^{-\frac{\eta n}{200}} - |\cF|q^{n-\frac{\theta^2n^2}{20}},
\]
that
\[
|\delta_{C,f}|\le q^{-2(\eta\lambda-\theta)n}
\]
for every $f\in\cF$. In particular, it suffices to take
$$
n_0 \ge \max\inset{\frac{c\log_q(1/\eta)}\eta, 
    \frac{c}{\theta}
    \sqrt{
        \log_q\left(
            2+\log\left(
                \frac{q}{\eta(1-\alpha)}
            \right)
        \right)}}\ecomma
$$
where $c > 0$ is a sufficiently large universal constant.
\end{corollary}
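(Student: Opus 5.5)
The plan is to derive the corollary directly from \cref{thm:mainGeneral} by choosing the free parameter $t$ and bounding $d$ from above. No new probabilistic argument is needed; the work is a parameter calculation plus checking that every hypothesis of \cref{thm:mainGeneral} holds once $n\ge n_0$.

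\textbf{Step 1: the effective exponent.} The hypothesis $\eta\le 50\beta/51$ gives $\beta\ge 51\eta/50$. Hence
\[
R-\frac{\eta}{50}=\beta+\frac{49\eta}{50}\ge 2\eta,
\]
so $\min(R-\eta/50,\,2\eta)=2\eta$. The exponent in \cref{thm:mainGeneral} then becomes $\eps=2^{1-d}\eta-2t/n$. Comparing with the target $2(\eta\lambda-\theta)$, it suffices to prove $2^{-d}\ge\lambda$ and to choose $t=\theta n$. With these choices $\eps\ge 2\eta\lambda-2\theta$, and the hypothesis $2t/n<2^{1-d}\eta$ follows from $\theta<\eta\lambda\le 2^{-d}\eta$. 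The failure term $|\cF|\,d\,q^{n-t^2/10}$ becomes $|\cF|\,d\,q^{n-\theta^2n^2/10}$. We absorb $d$ into half of the exponent: since $d\le q^{\theta^2n^2/20}$ once $n\ge n_0$, we get the stated bound $|\cF|q^{n-\theta^2n^2/20}$. The condition $t>20\log_q n$ reads $\theta n>20\log_q n$, which holds for large $n$.

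\textbf{Step 2: bounding $2^{-d}$, the main obstacle.} If $d=0$ there is nothing to prove, since $\lambda\le 1$. Otherwise, by the definition of $d$ as a ceiling,
\[
2^{d}<2\cdot\frac{\log_2\!\big((q-1)/(q^{\eta/100}-1)\big)}{\log_2(1/\alpha)}.
\]
We use $q^{\eta/100}-1\ge \frac{\eta}{100}\ln q$. When $\ln q\ge 1$ this gives $(q-1)/(q^{\eta/100}-1)\le 100q/\eta$. The case $q=2$ needs a slightly more careful elementary estimate, since then $\ln q<1$; one can use $q^{x}-1\ge x\ln q$ and absorb the constant factor $1/\ln 2$ into the slack of $100q/\eta$ versus $100(q-1)/\eta$, or adjust the comparison constant. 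Either way we obtain $2^{d}\le 2\log(100q/\eta)/\log(1/\alpha)$, where the base of the logarithm is irrelevant because it is a ratio. This gives $2^{-d}\ge\lambda$. The delicate point is exactly this constant bookkeeping for small $q$, and I would check it first.

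\textbf{Step 3: the probability terms and the threshold $n_0$.} The term $n(Rn+1)q^{-\eta n/100}\le (n^2+n)q^{-\eta n/100}$ must be at most $q^{-\eta n/200}$. This requires $q^{\eta n/200}\ge n^2+n$, which holds for $n\ge c\log_q(1/\eta)/\eta$ with $c$ a large constant; a logarithmic loss in $n$ is absorbed since $n/\log_q n$ grows. The second constraint is the absorption of $d$ and the condition $t>20\log_q n$. Since
\[
d\le 1+\log_2\!\Big(\frac{\log(100q/\eta)}{\log(1/\alpha)}\Big),\qquad \log(1/\alpha)\ge 1-\alpha,
\]
the requirement $d\le q^{\theta^2n^2/20}$ is implied by
\[
n\ge \frac{c}{\theta}\sqrt{\log_q\!\Big(2+\log\frac{q}{\eta(1-\alpha)}\Big)}.
\]
Together with the "sufficiently large $n$" clause inherited from \cref{thm:mainGeneral}, these give the stated $n_0$ and complete the deduction.
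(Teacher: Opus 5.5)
Your proposal is correct and follows the paper's proof essentially step for step. You apply \cref{thm:mainGeneral} with $t=\theta n$, use $\beta\ge 51\eta/50$ to get $2^{-d}\min\{R-\eta/50,2\eta\}=2^{1-d}\eta$, show $2^{-d}\ge\lambda$ via $q^{\eta/100}-1\ge\frac{\eta}{100}\ln q$, and absorb the factors $n(Rn+1)$ and $d$ using the two lower bounds on $n$; your separate treatment of $q=2$ is unnecessary, since $\ln q\ge 1-1/q$ gives $(q-1)/\ln q\le q$ for every $q\ge 2$, which is what the paper uses implicitly.
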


\begin{remark}[On the probability--deviation trade-off in \cref{cor:SimplerMain}]
The parameter $\theta$ controls the trade-off between the success probability and the deviation bound.
For a simple way to read the corollary, suppose that $\log_q|\cF|$ grows at least linearly in $n$.
If one only wants success probability $1-o(1)$, one may take
\[
    \theta:=\frac{\sqrt{\log_q|\cF|}}{n}\,g(n),
\]
where $g(n)\to\infty$ sufficiently slowly that
\[
    g(n)\sqrt{\log_q|\cF|}=o(\lambda n).
\]
The resulting deviation bound is
\[
    \inabs{\delta_{C,f}}
    \le q^{-2(\eta\lambda-\theta)n}
    =q^{-2\eta\lambda n+2g(n)\sqrt{\log_q|\cF|}}.
\]

In particular, if $|\cF|\le q^{n^\kappa}$ for some fixed $1\le\kappa<2$, then the same calculation yields constants $\gamma=\gamma_{\kappa,\eta}>0$ for which, in the corresponding parameter regimes, the bound takes the form
\[
    \inabs{\delta_{C,f}}\le q^{-n^\gamma},
\]
for all sufficiently large $n$.
For example, this applies in regimes such as $q\le 2^{n^\gamma}$ and $\alpha\le 1-n^{-\gamma}$, after choosing $\gamma$ sufficiently small as a function of $\kappa$ and $\eta$.
\end{remark}

We defer the proof of \cref{cor:SimplerMain} to \cref{sec:corProof}.

\subsection{Applications to List-Decoding and List-Recovery}
We now record two direct coding-theoretic consequences of \cref{thm:singleFunction,thm:mainGeneral}.
The definitions of list-decoding and zero-error list-recovery were given in the introduction; we repeat them here only to fix notation for the formal statements.

\begin{definition}[List-decodable code]
    A code $C\subseteq\F_q^n$ is \emph{$(\rho,L)$-list-decodable} if, for every $z\in\F_q^n$,
    \begin{equation*}
        |B_\rho(z)\cap C|\leq L.
    \end{equation*}
\end{definition}

Recall that $1-\frac 1n\log_q\inabs{B_\rho}$ is the list-decoding capacity.
Below this rate, uniformly random codes achieve constant list size with high probability; above this rate, the relevant benchmark is the expected intersection size with a Hamming ball.
More precisely, a uniformly random code of rate $1-\frac 1n\log_q\inabs{B_\rho}+\gamma$ has about
\[
    \frac{|C|\cdot |B_\rho|}{q^n}\approx q^{\gamma n}
\]
codewords in each Hamming ball with high probability.
The next theorem shows that random linear codes satisfy the same above-capacity discrepancy guarantee.

\begin{theorem}[Random linear codes achieve nearly optimal list size in list-decoding above capacity]\label{thm:list-dec}
   Let $n\in \N$, let $q$ be a prime power, let $0<\rho<1-1/q$, and let
    $0<\gamma<\frac1n\log_q |B_\rho(0)|$, where $\gamma n \ge 240\log_q n$.
Set
    \[
        R:=1-\frac1n\log_q |B_\rho(0)|+\gamma.
    \]
    Let $C\subseteq\F_q^n$ be a random linear code of designed rate $R$.
    Then, with probability at least $$
    1
    -4q^{1-\gamma n/3}
    -\frac{q^{-(1-R)n+1}}{(q-1)^2}
    -q^{2n-\gamma^2n^2/1440}\ecomma$$
     the following holds
    simultaneously for every $z\in\F_q^n$:
    \begin{equation}\label{eq:list-dec-discrepancy}
        |B_\rho(z)\cap C|
        =\left(1\pm q^{-\gamma n/12}\right)\frac{|C|\cdot |B_\rho(0)|}{q^n}.
    \end{equation}
   In particular, $C$ is $(\rho,L)$-list-decodable with
    \[
        L=\left\lceil\left(1+q^{-\gamma n/12}\right)q^{\gamma n}\right\rceil.
    \]
\end{theorem}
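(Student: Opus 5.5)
This theorem follows directly from \cref{thm:singleFunction}, so the plan is to instantiate that result with the normalized indicator of the Hamming ball and then translate its conclusion back into the language of intersection sizes.

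\textbf{Choice of test function.} Take $B:=B_\rho(0)$ and $f:=f_B=\frac{q^n}{|B|}1_B$. This function is nonnegative and normalized, with $\norm{f}_\infty=q^n/|B|$. Its capacity parameter is therefore
\[
\beta=\tfrac1n\log_q\norm f_\infty=1-\tfrac1n\log_q|B_\rho(0)|,
\]
so the designed rate satisfies $R=\beta+\gamma$. Set $\eta:=\gamma$. Then $\eta>0$, and the hypothesis $\gamma n\ge 240\log_q n$ is exactly the largeness assumption required by \cref{thm:singleFunction}. The condition $\gamma<\frac1n\log_q|B_\rho(0)|$ ensures $R<1$, so $k=Rn\le n$ and the random linear code is well defined. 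The assumption $\rho<1-1/q$ plays no role beyond making the setting meaningful; it is not needed for the argument.

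\textbf{Translates are balls.} For any $z\in\F_q^n$, the translation is $f_z(x)=f(x+z)=\frac{q^n}{|B|}1_{B-z}(x)$. Since $B-z=B_\rho(-z)$, we get $f_z=f_{B_\rho(-z)}$. As $z$ ranges over $\F_q^n$, so does $-z$, and hence the family $\{f_z\}$ consists exactly of the normalized indicators of all radius-$\rho$ Hamming balls. By the identity $\delta_{C,f_{B'}}=\delta_{C,B'}$ from \cref{sec:Notations}, the conclusion $\max_z|\delta_{C,f_z}|\le q^{-\eps n}$ with $\eps=\gamma/12$ is equivalent to
\[
|B_\rho(z)\cap C|=\bigl(1\pm q^{-\gamma n/12}\bigr)\frac{|C|\,|B_\rho(0)|}{q^n}\quad\text{for all } z,
\]
where we use $|B_\rho(z)|=|B_\rho(0)|$.

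\textbf{Probability bound.} Substitute $\eta=\gamma$ into the failure probability of \cref{thm:singleFunction}, and note that $q^{k-n+1}=q^{-(1-R)n+1}$. This reproduces the stated bound term by term.

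\textbf{List-decoding consequence.} On the good event, $|C|\le q^k=q^{Rn}$, so $|C|\,|B_\rho(0)|/q^n\le q^{\gamma n}$. Hence every ball contains at most $(1+q^{-\gamma n/12})q^{\gamma n}$ codewords, and taking the ceiling gives the stated $L$. Here $|C|$ may be smaller than $q^k$ if the generator matrix is rank-deficient, but the upper bound still holds, and the discrepancy statement was phrased in terms of $|C|$ anyway.

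There is no real obstacle here. The only point needing care is the sign in the translation ($B-z$ versus $B+z$), and it is harmless because the family of all translates is closed under $z\mapsto -z$.
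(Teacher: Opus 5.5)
Your proposal is correct and follows the paper's proof: both apply \cref{thm:singleFunction} to $f=\frac{q^n}{|B_\rho(0)|}\mathbf{1}_{B_\rho(0)}$ with $\eta=\gamma$, note that the translates $f_z$ are exactly the normalized indicators of the balls $B_\rho(-z)$, and finish with $|C|\le q^{Rn}$ to obtain the list-size bound. Your remarks on the sign of the translation and on rank-deficient generator matrices are accurate. The paper handles both points the same way, with ``$B_\rho(-z)$'' and ``$|C|\le q^{Rn}$ by construction''.
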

\begin{remark}
    Observe that the probability bound is $1-o(1)$ whenever $\gamma^2 n\to \infty$ and $(1-R)n\to \infty$
\end{remark}

\begin{proof}
    We apply \cref{thm:singleFunction} to the normalized indicator of the Hamming ball centered at the origin,
    \[
        f=\frac{q^n}{|B_\rho(0)|}\mathbf{1}_{B_\rho(0)}.
    \]
    Its capacity parameter is
    \[
        \beta_n:=\frac1n\log_q\norm f_\infty
        =1-\frac1n\log_q|B_\rho(0)|.
    \]
    Thus \cref{thm:singleFunction} gives
    \[
        \max_{z\in\F_q^n}|\delta_{C,f_z}|
        \le q^{-\gamma n/12}
    \]
    with probability at least $$1 - 4q^{1 - \frac{\gamma n}{3}} - \frac{q^{Rn-n+1}}{(q-1)^2} - q^{2n - \frac{\gamma^2 n^2}{1440}}\eperiod$$ The functions \(f_z\) are the normalized indicators of the balls \(B_\rho(-z)\);
as \(z\) ranges over \(\F_q^n\), these are exactly all Hamming balls of radius \(\rho n\).
    Translating this relative-deviation bound gives \cref{eq:list-dec-discrepancy}.

    Finally, $|C|\le q^{Rn}$ by construction, so
    \[
        \frac{|C|\cdot |B_\rho(0)|}{q^n}
        \le q^{\gamma n}.
    \]
    This yields the stated list-size bound.
\end{proof}

We now turn to zero-error list-recovery.

\begin{definition}[List-recoverable code]
    A code $C\subseteq\F_q^n$ is \emph{$(\ell,L)$-list-recoverable} if, for every collection of sets $S_1,S_2,\dots,S_n\subseteq\F_q$ with $|S_i|\le \ell$ for all $i\in[n]$,
    \begin{equation*}
        |(S_1\times S_2\times\cdots\times S_n)\cap C|\leq L.
    \end{equation*}
    We call $\ell$ the \emph{input list size} and $L$ the \emph{output list size}.
\end{definition}

For the upper-bound guarantee, it suffices to consider rectangles $S_1\times\cdots\times S_n$ with $|S_i|=\ell$ for every $i$, since smaller input lists can be padded to size $\ell$.
The zero-error list-recovery capacity is $1-\log_q\ell$: below this rate uniformly random codes achieve constant output list size, while above this rate the expected intersection with an $\ell$-by-$\cdots$-by-$\ell$ rectangle is exponential.
We focus on the nontrivial range $2\le \ell\le q-1$, since $\ell=1$ and $\ell=q$ give the trivial bounds $L=1$ and $L=|C|$, respectively.

\begin{theorem}[Random linear codes achieve nearly optimal output list size in zero-error list-recovery above capacity]\label{thm:list-rec}
    There exist universal constants $c,c'>0$ such that the following holds.
    Let $n\in\N$, let $q$ be prime, and let $2\le \ell\le q-1$.
    Let
    \[
        0<\gamma\le
        \frac12\min\{\log_q\ell,\,1-\log_q\ell\},
    \]
    and set $R=1-\log_q\ell+\gamma$.  Assume that
    \[
        n\ge
        c' \cdot \frac{q^5\log^2(q/\gamma)}{\gamma^2}\eperiod
    \]

    Let $C\subseteq\F_q^n$ be a random linear code of designed
    rate $R$.  Then, with probability at least
    \[
        1-2q^{-c\gamma n/(q^2\log(q/\gamma))},
    \] 
        the following events both hold:
        \begin{enumerate}
            \item the code $C$ is $(\ell,L)$-list-recoverable with
                \[
                    L=
                    \left\lceil
                    \left(1+q^{-c\gamma n/(q^2\log(q/\gamma))}\right)q^{\gamma n}
                    \right\rceil\eperiod
                \]
            \item For every combinatorial rectangle $S = S_1\times \dots \times S_n$ with $|S_i| = \ell$ for each $i$, we have
            $$\inabs{C\cap S} = \inparen{1\pm q^{-c\gamma n/(q^2\log(q/\gamma))}}\frac{|C|\ell^n}{q^n}\eperiod$$
        \end{enumerate}
\end{theorem}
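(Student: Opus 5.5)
We plan to apply \cref{cor:SimplerMain} to the family $\cF$ of normalized indicators $f_S=(q^n/\ell^n)\mathbf 1_S$ of all rectangles $S=S_1\times\cdots\times S_n$ with $S_i\in\binom{\F_q}{\ell}$. Here $\norm{f_S}_\infty=(q/\ell)^n$, so $\beta=1-\log_q\ell$ for every member. The family has size $|\cF|=\binom{q}{\ell}^n\le 2^{qn}$. The hypothesis $\gamma\le\frac12\log_q\ell$ will give $\eta=\gamma\le 50\beta/51$, provided we check $\gamma\le\frac12(1-\log_q\ell)=\beta/2$, which is what the second half of the min gives. The condition $R\le 1$ follows from $\gamma\le\log_q\ell$.

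The first substantive step is the Fourier estimate. Since $f_S$ is a tensor product $\bigotimes_i (q/\ell)\mathbf 1_{S_i}$, we have $\widehat{f_S}(y)=\prod_i \widehat{g_i}(y_i)$, where $g_i=(q/\ell)\mathbf 1_{S_i}$ and $\widehat{g_i}(a)=\frac1\ell\sum_{s\in S_i}\omega^{as}$ with $\omega=e^{2\pi i/q}$. This factor equals $1$ at $a=0$, so it suffices to bound $|\frac1\ell\sum_{s\in S}\omega^{as}|$ for $a\ne 0$ and $1\le \ell\le q-1$.

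Because $q$ is prime, multiplication by $a$ permutes $\F_q$. Hence this is $|\frac1\ell\sum_{s\in aS}\omega^s|$, an average of $\ell$ distinct $q$-th roots of unity. We will show this is at most $1-c_0/q^2$ as follows. The sum has modulus at most $\ell$, and equality would force all points to coincide. More quantitatively, pick two elements of $aS$, which are distinct roots at angular distance at least $2\pi/q$. Replacing their contribution by $|\omega^{s}+\omega^{s'}|=2|\cos(\pi(s-s')/q)|\le 2-c/q^2$ loses $c/q^2$, so the average is at most $1-c/(\ell q^2)$. The loss is only $\Theta(1/(\ell q^2))$, which is weaker than $1/q^2$. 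To do better, we note that the sum of $\ell$ distinct roots is maximized by a contiguous arc, giving $\sin(\pi\ell/q)/(\ell\sin(\pi/q))$. Bounding this ratio for $2\le\ell\le q-1$ gives $1-\Omega(1/q^2)$; the worst case is $\ell=2$, which gives $\cos(\pi/q)$. So we may take $\alpha=1-c_0/q^2$, and $\log(1/\alpha)=\Theta(1/q^2)$. (Primality is exactly what prevents $aS$ from being a union of cosets of a subgroup, which would make the estimate fail in small characteristic.)

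Then $\lambda=\Theta\big(1/(q^2\log(q/\gamma))\big)$. We choose $\theta=\gamma\lambda/2$, so the deviation bound becomes $q^{-\gamma\lambda n}=q^{-c\gamma n/(q^2\log(q/\gamma))}$. The failure probability is $q^{-\gamma n/200}+2^{qn}q^{n-\theta^2n^2/20}$. The second term is at most $q^{-\Omega(\theta^2n^2)}$ once $\theta^2 n\gg q$, that is, once $n\gtrsim q\cdot q^4\log^2(q/\gamma)/\gamma^2$; this is precisely the assumed lower bound $n\ge c'q^5\log^2(q/\gamma)/\gamma^2$. The same bound handles $n_0$ from \cref{cor:SimplerMain}. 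Both failure terms are then at most $q^{-c\gamma n/(q^2\log(q/\gamma))}$, giving total failure at most $2q^{-c\gamma n/(q^2\log(q/\gamma))}$.

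Item 2 is the deviation bound rewritten via $\delta_{C,f_S}=\delta_{C,S}$. Item 1 follows from item 2 by padding smaller lists to size $\ell$ and using $|C|\ell^n/q^n\le q^{\gamma n}$.

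I expect the main obstacle to be the uniform Fourier bound $1-\Omega(1/q^2)$ over all $\ell$, specifically the extremal claim that arcs maximize the modulus. The fallback is the crude two-point argument, which gives $\alpha=1-\Omega(1/(\ell q^2))$ and would cost an extra factor of $q$ in the constants.
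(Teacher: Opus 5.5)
Your proposal is correct and follows the paper's proof essentially step for step. It uses the same family of side-$\ell$ rectangle indicators with $\beta=1-\log_q\ell$ and $|\cF|\le 2^{qn}$, the same Fourier concentration $\alpha=\cos(\pi/q)$ with worst case $\ell=2$, and applies \cref{cor:SimplerMain} with $\lambda=\Omega(1/(q^2\log(q/\gamma)))$ and $\theta$ a fixed fraction of $\gamma\lambda$ (your $\gamma\lambda/2$ versus the paper's $c\gamma/(q^2\log(q/\gamma))$ is immaterial), finishing with padding for item 1. The arc-extremality you flag as the main obstacle is exactly the subset-sum bound of Benhamouda et al.\ that the paper cites as \cref{lem:bound-on-subset-sums}; it also follows in one line, since for any fixed phase $\phi$ the sum $\sum_{s\in S}\cos(2\pi s/q-\phi)$ is maximized by the $\ell$ roots nearest $\phi$. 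So the fallback is unnecessary, which matters because it would turn the $q^2$ in the exponent into $q^3$, not merely change a constant.
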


\begin{proof}
    Observe that the second event in the statement implies the first one. Indeed, let $S = S_1\times\dots\times S_n$ be a sequence of sets $S_i \subseteq \F_q$, each of size at most $\ell$. Then there are sets $S'_1,\dots, S'_n$, such that $S_i \subseteq S'_i$ and $\inabs{S'_i} = \ell$ for all $i$. By the second event,
    $$\inabs{S\cap C} \le \inabs{S'\cap C} \le \left(1+q^{-c\gamma n/(q^2\log(q/\gamma))}\right)\frac{|C|\ell^n}{q^n}\le \left(1+q^{-c\gamma n/(q^2\log(q/\gamma))}\right)q^{\gamma n}\ecomma$$
    implying the first event. It thus suffices to prove that the second event holds with probability at least $$1-2q^{-c\gamma n/(q^2\log(q/\gamma))}\eperiod$$

    Let $\cF$ be the family of normalized indicators
    $f_S=(q^n/|S|)\mathbf{1}_S$ over rectangles defined by $S_1,\dots, S_n$ each of size exactly $\ell$. Then
    \[
        |\cF|=\binom{q}{\ell}^n\le 2^{qn}\le q^{qn},
    \]
    and $\|f_S\|_\infty=(q/\ell)^n$, so the capacity parameter in
    \cref{cor:SimplerMain} is $\beta=1-\log_q\ell$.

    By \cref{lem:zero-error-con}, every $f_S\in\cF$ is
    $\alpha$-Fourier-concentrated with $\alpha=\cos(\pi/q)$. Therefore the
    parameter $\lambda$ from \cref{cor:SimplerMain} satisfies
    \[
        \lambda
        =
        \min\left\{
            1,\frac{\log(1/\cos(\pi/q))}{2\log(100q/\gamma)}
        \right\}
        \ge
        \frac{c_1}{q^2\log(q/\gamma)}
    \]
    for a universal constant $c_1>0$, using \Cref{lem:lower-bound-on-alpha} to lower bound $\log(1/\cos(\pi/q)) \geq \Omega(1/q^2)$.

    Choose the theorem constant $c>0$ so that
    $c\le \min\{c_1/4,1/200\}$, and set
    \[
        \theta:=\frac{c\gamma}{q^2\log(q/\gamma)}.
    \]
    Then $\theta<\gamma\lambda$, and
    \[
        2(\gamma\lambda-\theta)
        \ge
        \frac{c\gamma}{q^2\log(q/\gamma)}.
    \]
    Also, since $\gamma\le \frac12(1-\log_q\ell)=\beta/2$, the hypothesis
    $\gamma\le 50\beta/51$ of \cref{cor:SimplerMain} is satisfied.

    Taking $c'$ large enough, the assumed
    lower bound
    \[
        n\ge c'\frac{q^5\log^2(q/\gamma)}{\gamma^2}
    \]
    implies the lower bound $n\ge n_0(q,\gamma,\alpha,\theta)$ required in
    \cref{cor:SimplerMain}. Hence \cref{cor:SimplerMain}, applied with
    $\eta=\gamma$, gives
    \[
        |\delta_{C,f_S}|
        \le
        q^{-c\gamma n/(q^2\log(q/\gamma))}
    \]
    for every $f_S\in\cF$, except with failure probability at most
    \[
        q^{-\gamma n/200}+|\cF|q^{n-\theta^2n^2/20}.
    \]

    The first term is at most
    $q^{-c\gamma n/(q^2\log(q/\gamma))}$ by the choice of $c$. For the second,
    using $|\cF|\le q^{qn}$ and increasing $c'$ if necessary, the lower bound
    on $n$ gives $\theta^2n/20\ge q+1+\theta$. Therefore
    \[
        |\cF|q^{n-\theta^2n^2/20}
        \le
        q^{qn+n-\theta^2n^2/20}
        \le
        q^{-\theta n}
        =
        q^{-c\gamma n/(q^2\log(q/\gamma))}.
    \]
    Thus the desired deviation event holds with probability at least
    \[
        1-2q^{-c\gamma n/(q^2\log(q/\gamma))}.
    \]

    On this event, every rectangle $S$ with side length exactly $\ell$
    satisfies
    \[
        |C\cap S|
        =
        \left(1\pm q^{-c\gamma n/(q^2\log(q/\gamma))}\right)
        \frac{|C|\ell^n}{q^n}\eperiod
    \]
\end{proof}

\section{Proof Overview}
\label{sec:proof-overview}

We organize the proof around two motivating examples.  We begin with binary
list-decoding, where the relevant sets are Hamming balls of a fixed radius.
This illustrates \cref{thm:singleFunction}: all such balls are
translates of the ball centered at the origin, so the family consists of the
translations of a single normalized indicator function.  We then turn to
list-recovery over a constant-size alphabet, where the relevant sets are
combinatorial rectangles
\[
    A_1\times\cdots\times A_n,
    \qquad |A_i|=\ell,\qquad 2\le \ell\le q-1.
\]
This illustrates \cref{thm:mainGeneral}: these rectangle indicators are
$\alpha$-Fourier-concentrated for some $\alpha<1$ bounded away from $1$, and the number
of such rectangles is only $q^{o(n^2)}$.

\subsubsection*{Hamming Balls and the Translation-Family Case}

Let
\[
    B=B(0,\rho n)\subseteq \F_2^n
\]
be the Hamming ball of radius $\rho n$ around the origin, and let
\[
    f_B=\frac{2^n}{|B|} \cdot 1_B
\]
be its normalized indicator.  The Hamming ball around a center $z\in\F_2^n$ is
$B+z$, whose normalized indicator is the translate
\[
    (f_B)_z(x)=f_B(x+z).
\]
Thus, in the binary list-decoding example, the family we need to control is
precisely
\[
    \{(f_B)_z:z\in\F_2^n\},
\]
the set of all translates of one function.  This is the motivating special
case of \cref{thm:singleFunction}.

We expose the random code one generator at a time.  That is, starting from
$C_0=\{0\}$, we consider the random sequence
\[
    C_0\subseteq C_1\subseteq\cdots\subseteq C_k,
    \qquad
    C_i=C_{i-1}+\spn\{u_i\},
\]
where the vectors $u_1,\dots,u_k$ are chosen independently and uniformly from
$\F_2^n$.  The final code is $C=C_k$.

The reason for using this incremental construction is that the deviation $\delta$ has a
simple expected evolution in one step.  Suppose
\[
    C'=C+\spn\{u\}.
\]
Over $\F_2$, this means
\[
    C'=C\cup(C+u).
\]
The old coset $C$ contributes the old deviation, while the new coset $C+u$
behaves like a random translate of $C$.  A direct calculation gives
\[
    \Eover{u}{\delta_{C',f}}
    =
    \frac12\delta_{C,f}.
\]
Thus one new generator should ideally divide the deviation by $2$.

This explains the rate threshold:  If
\[
    \beta=\frac1n\log_2\norm {f_B}_\infty
\]
then the initial deviation of a translate is at most
on the order of $2^{\beta n}$.  If $k=Rn=(\beta+\eta)n$ and each of the $k$ steps divided the deviation
by $2$, then the final deviation would be roughly
\[
    2^{\beta n-k}
    =
    2^{-\eta n},
\]
which is exponentially small.

We formalize this ideal behavior using smoothness.  Given a function $g$ and a
parameter $\Gamma$, we say that the construction sequence is
\deffont{$(g,\Gamma)$-smooth} if
\[
    |\delta_{C_i,g}|
    \le
    2^{k-i}\Gamma
    \qquad
    \text{for every }0\le i\le k.
\]
This is a two-sided deviation bound.  It says that the deviation at time $i$
is no larger than what one would expect if the final deviation were $\Gamma$
and each preceding step divided the deviation by $2$.  In the $q$-ary setting,
the corresponding definition uses $q^{k-i}\Gamma$ in place of
$2^{k-i}\Gamma$.

An iterative dimension-by-dimension viewpoint also appears in Blinovsky's
covering-radius proof~\cite{Bli87} (see also
\cite[Theorem 12.3.5]{CHLL97}), but the discrepancy problem requires a more
precise version of this process.  For comparison, in a covering argument with
$g$ the normalized indicator of a ball, it is enough to prove the one-sided
condition
\[
    \delta_{C_i,g}>-1
\]
for some $i$, since this is exactly the statement that the ball contains at least one
codeword of $C_i$.  Once this happens, it remains true for every later code
$C_j\supseteq C_i$.  In the present problem we instead want
$|\delta_{C_k,g}|$ to be small, meaning that the final code samples the ball
with approximately the correct multiplicity.  This property is not monotone:
even if $|\delta_{C_i,g}|$ is small at some intermediate stage, later added
codewords may oversample the same ball and make the deviation large again.
Although the theorem only concerns the final code $C_k$, our proof method requires
the stronger inductive invariant that the deviations are controlled
throughout the construction.

A first attempt would be to prove smoothness directly for every ball
$B+z$.  Indeed, the second-moment smoothness lemma (see
\cref{lem:fSmoothBySecondMoment}) shows that, for a fixed normalized function
$g$ and any $\Gamma>2^{-k}g(0)$, the failure probability is at most
\[
    \frac{(k+1)2^{-k}}{(\Gamma-2^{-k}g(0))^2}\norm g_2^2
    +2^{k-n}.
\]
In the overview, it is useful to read this as roughly
$(k+1)2^{-k}\Gamma^{-2}\norm g_2^2$, as long as $\Gamma$ is not too close to
$2^{-k}g(0)$.  Since
\[
    \norm{(f_B)_z}_2^2
    \le
    \norm{(f_B)_z}_\infty
    =
    \norm {f_B}_\infty
    =
    2^{\beta n},
\]
this gives exponentially small failure probability for each fixed center
$z$, provided $\Gamma$ is chosen exponentially small but still larger than
$2^{-k}(f_B)_z(0)$ with sufficient slack.

This direct argument is not enough.  There are $2^n$ possible centers, while
the second-moment estimate gives only a bound of order $2^{-c n}$ for some
small constant $c>0$.  The latter is not large enough to support a
union bound over all centers.  Thus we need a way to upgrade the probability
after paying for only one direct second-moment estimate. We apply
the direct second-moment estimate only once, to a common auxiliary function,
and then use that common event to control all translates with much better
conditional probability.

In the binary case, the auxiliary function is the ordinary self-convolution
\[
    F_f=f*f.
\]
For the ball function $f_B$, the value $F_{f_B}(w)$ is a normalized count of
pairs
\[
    x,y\in B
    \qquad\text{with}\qquad
    x+y=w.
\]
Thus controlling $F_{f_B}$ on a code is the smoothed version of controlling
\[
    (B+B)\cap C.
\]

The reason for introducing $F_f$ is the one-step variance identity
\[
    \Varover{u}{\delta_{C',f}}
    =
    \frac14\,\delta_{C,F_f}\ecomma
\]
readily proven as part of \cref{lem:prob_step}.
Thus $F_f$ measures the noise in the update of $\delta_{C,f}$.  If the
construction sequence is smooth for $F_f$, then the update for $f$ is
concentrated around its mean $\delta_{C_i,f}/2$ at every step.  Consequently,
unless the deviation of $f$ is already small, the next step is very likely to
reduce it by almost the expected factor $2$.

We now bootstrap as follows.  First, apply the second-moment bound only to the
single function $F_{f_B}$.  This is possible because $F_{f_B}$ has essentially
the same second-moment size as $f_B$:
\[
    \norm{F_{f_B}}_2
    =
    \norm{f_B*f_B}_2
    \le
    \norm{f_B}_1\norm{f_B}_2
    =
    \norm{f_B}_2,
\]
and hence
\[
    \norm{F_{f_B}}_2^2
    \le
    \norm{f_B}_2^2
    \le
    \norm{f_B}_\infty
    =
    2^{\beta n}.
\]
Therefore \cref{lem:fSmoothBySecondMoment} implies that, with probability
at least $1-2^{-\Omega(n)}$, the construction sequence is
\[
    (F_{f_B},\Gamma)\text{-smooth}
\]
for a suitable exponentially small $\Gamma=2^{-c n}$.

The crucial point is that this smoothness event is common to all balls.  For
every translate $(f_B)_z$, the binary self-convolution appearing in the
variance identity is the same:
\[
    (f_B)_z*(f_B)_z=f_B*f_B.
\]
Thus we do not need to prove smoothness separately for the self-convolution of
each ball.  We condition on the single event that $F_{f_B}$ is smooth.

After this conditioning, the probability scale improves dramatically.
The smoothness-transfer lemma, \cref{lem:fSmooth}, says that if the sequence is
$(F_{f_B},\Gamma)$-smooth, then for any fixed translate $(f_B)_z$, the
conditional probability that $(f_B)_z$ fails to be smooth is at most
$2^{-\Omega(t^2)}$, where $t$ is a slackness parameter.  A union bound over all
$2^n$ centers gives failure probability at most
\[
    2^{n-\Omega(t^2)}.
\]
Taking $t=\Theta(n)$ makes this probability $2^{-\Omega(n^2)}$.

This proves the translation-family theorem in the motivating binary
list-decoding case.  The statement of \cref{thm:singleFunction} is more
general: it applies to arbitrary normalized functions and to $q$-ary alphabets.
For $q>2$, the only change is that the ordinary self-convolution is replaced
by the \deffont{mirrored self-convolution}
\[
    F_f=f*\check f,
    \qquad
    \check f(x)=
    f(-x)\eperiod
\]

\subsubsection*{Combinatorial Rectangles and the General Family Case}

We now turn to the setting that motivates \cref{thm:mainGeneral}.  Think of
zero-error list-recovery over a constant-size alphabet $\F_q$.  A typical bad set is a
combinatorial rectangle 
\[
    S_1\times S_2\times\cdots\times S_n,
    \qquad
    S_i\subseteq \F_q,\quad |S_i|=\ell,
\]
where
\[
    2\le \ell\le q-1.
\]
Let $f_S$ denote the normalized indicator of such a rectangle.  The important
facts about these functions are the following.  First, they are
$\alpha$-Fourier-concentrated for some constant $\alpha<1$ depending only on $q$ and
$\ell$.  Indeed, the Fourier transform factors coordinate by coordinate, and
the one-dimensional nontrivial Fourier coefficients are bounded away from
$1$.  Second, there are
\[
    \binom{q}{\ell}^n \le q^{o(n^2)}
\]
such rectangles in the constant-alphabet regime under discussion.  More
generally, \cref{thm:mainGeneral} is designed to handle any family of
$q^{o(n^2)}$ normalized $\alpha$-Fourier-concentrated functions.

The argument for Hamming balls breaks down because the functions in this
family are not all translates of a single function.  For different rectangles
$f\in\cF$, the mirrored self-convolutions $F_f$ may be different.  Thus we cannot apply the direct second-moment estimate to
each $F_f$ separately: that estimate gives only $q^{-\Omega(n)}$ failure
probability for each fixed function, whereas the family size allowed in
\cref{thm:mainGeneral} is much larger than what such a direct union bound can
handle.

The extra assumption we use is Fourier-concentration.  The mirrored self-convolution
improves Fourier-concentration: if $f$ is $\alpha$-Fourier-concentrated, then $F_f$ is
$\alpha^2$-Fourier-concentrated.  Iterating this operation, define
\[
    f^{(0)}=f,
    \qquad
    f^{(r+1)}=F_{f^{(r)}}.
\]
Then $f^{(r)}$ is $\alpha^{2^r}$-Fourier-concentrated.  We choose $d$ so that
\[
    1+(q-1)\alpha^{2^d}\le q^{\eta/100}.
\]
At this depth, the Fourier coefficients of $f^{(d)}$ decay quickly enough that
smoothness follows from a mild code-dependent event.

The event we need is that the dual codes do not have too many vectors at any
Hamming weight.  More precisely,
for
\[
    W_j=\{y\in\F_q^n:\wt{y}=j\},
\]
we require
\[
    |C_i^\perp\cap W_j|
    \le
    \gamma q^{-i}|W_j|
    \qquad
    \text{for every }0\le i\le k
    \text{ and every }1\le j\le n,
\]
where $\gamma=q^{\eta n/100}$.  This event depends only on the construction
sequence, not on the function $f$.  It holds with high probability by a
standard first-moment argument applied to the sets $W_j$.

Assume this dual weight event holds.  If $g$ is normalized and
$\alpha'$-Fourier-concentrated, then
\[
    |\widehat g(y)|
    \le
    (\alpha')^{\wt{y}}.
\]
Using the Fourier identity
\[
    \delta_{C_i,g}
    =
    \sum_{y\in C_i^\perp\setminus\{0\}}\widehat g(y),
\]
we obtain
\[
    |\delta_{C_i,g}|
    \le
    \sum_{y\in C_i^\perp\setminus\{0\}}|\widehat g(y)|
    \le
    \gamma q^{-i}
    \sum_{j=1}^n |W_j|(\alpha')^j
    \le
    \gamma q^{-i}\bigl(1+(q-1)\alpha'\bigr)^n.
\]
Thus, under the dual weight event, every sufficiently Fourier-concentrated function is
smooth.  Applying this to $g=f^{(d)}$ gives
\[
    C_0,\dots,C_k
    \text{ is }(f^{(d)},\Gamma_d)\text{-smooth},
\]
with
\[
    \Gamma_d
    =
    q^{-k}\gamma\bigl(1+(q-1)\alpha^{2^d}\bigr)^n
    \le
    q^{-k+\eta n/50}.
\]

We then build back from $f^{(d)}$ to the original function $f=f^{(0)}$.
At each step we use the same smoothness-transfer lemma as in the
translation-family case.  If
\[
    f^{(r)}=F_{f^{(r-1)}}
\]
is smooth with parameter $\Gamma_r$, then $f^{(r-1)}$ is smooth with parameter
\[
    \Gamma_{r-1}
    =
    q^t
    \max\left\{
        \|f^{(r-1)}\|_\infty q^{-k},
        \sqrt{\Gamma_r}
    \right\}.
\]
except with probability $q^{n-\Omega(t^2)}$.

The square root is the main quantitative loss.  If
$\Gamma_r\approx q^{-a n}$, then this step gives
\[
    \Gamma_{r-1}\approx q^{-a n/2}
\]
up to the slack factor $q^t$.  Thus each reverse step essentially halves the
smoothness exponent.  After $d$ steps, this accounts for the factor $2^{-d}$
in the final exponent.  Since the starting functions are
$\alpha$-Fourier-concentrated with $\alpha<1$ bounded away from $1$, the required
number of iterations $d$ does not grow with $n$.

The slack factor $q^t$ is chosen to make the conditional failure probability
small enough.  Taking
\[
    t=\Theta(2^{-d}\eta n)
\]
makes
\[
    q^{n-\Omega(t^2)}=q^{-\Omega(n^2)}.
\]
This is strong enough to union bound over all $f\in\cF$ and over all $d$
levels of the convolution iteration, even when $|\cF|=q^{o(n^2)}$.

Finally, the initial terms in the recurrence are small because
$R=\beta+\eta$.  For every relevant iterate,
\[
    \|f^{(r)}\|_\infty q^{-k}\le q^{-\eta n},
\]
using the bound $\|f^{(r)}\|_\infty\le \|f\|_\infty\le q^{\beta n}$.
Unfolding the recurrence for the $\Gamma_r$'s gives a final smoothness
parameter
\[
    \Gamma_0\le q^{-\eps n},
\]
where
\[
    \eps=\Theta(\eta 2^{-d})
\]
with the constants made explicit in \cref{thm:mainGeneral}.  Hence
\[
    |\delta_{C_k,f}|\le q^{-\eps n}
\]
for every function in the family.  The dual weight event is common to the
entire family, and the remaining bootstrapping failures have
$q^{-\Omega(n^2)}$ probability, so the final union bound gives
\cref{thm:mainGeneral}.

\section{Random Linear Codes are Smooth---Proof of \cref{thm:mainGeneral,thm:singleFunction}}

In this section we prove \cref{thm:mainGeneral,thm:singleFunction}. We begin with several definitions. As explained in \cref{sec:proof-overview}, we view a given RLC as being constructed one dimension at a time, leading us to define a \deffont{construction sequence}.

\begin{definition}
    A \deffont{construction sequence} is a sequence of codes $C_0,\dots,C_k \subseteq \F_q^n$ where $C_i = C_{i-1} + \spn\{u_i\}$ for some vector $u_i\in \F_q^n$. When $u_1,\dots, u_k$ are chosen uniformly at random, we say that this is a \deffont{random construction sequence}, and each $C_i$ is an \deffont{RLC} of designed dimension $i$.
\end{definition}

We would like to talk about a code having small deviation with regard to a specific function not just for the full code but also throughout its construction sequence. This requirement is captured by the notion of \deffont{smoothness}, motivated in \cref{sec:proof-overview}.

\begin{definition}
    Fix a non-zero function $f:\F_q^n\to \R_{\ge 0}$. A construction sequence $C_0,\dots, C_k$ is \deffont{$(f,\Gamma,j)$-smooth} (for some real $\Gamma \ge 0$ and integer $0\le j\le k$) if 
    $$\inabs{\delta_{C_i,f}} \le q^{k-i}\cdot \Gamma$$
    for all $0\le i\le j$.
    As a shorthand, we write $(f,\Gamma)$-smooth to mean $(f,\Gamma,k)$-smooth.
\end{definition}

\begin{definition}
    For $0\le j\le n$, let $W_j = \inset{x\in \F_q^n : \text{wt}(x) = j}$, where $\text{wt}(x)$ denotes the Hamming weight.

    Let $C_0,\dots, C_k$ be a construction sequence. We say that this sequence is \deffont{$\gamma$-dual-typical} ($\gamma \ge 1$) if $$\inabs{ C_i^\perp \cap W_j} \le \gamma \cdot q^{-i} \cdot |W_j|$$ for all $0\le i\le k$ and $1\le j\le n$. 
\end{definition}

In \cref{sec:bootstrap} we begin with a weaker probabilistic argument that shows that RLCs are smooth for any fixed given function, and also dual-typical, with probability $q^{-\Omega(n)}$. In \cref{sec:convolution} we introduce the averaged convolution function $F_f$ and relate $\delta_{C_i,F_f}$ to the $i$-th step change in $\delta_{C_{i+1},f}$. In \cref{sec:nestedSmoothness} we use this relation to control the smoothness of $f$ via that of $F_f$, and use this implication as the basis to prove \cref{thm:mainGeneral,thm:singleFunction}.

\subsection{Smoothness via a Direct Second Moment Argument}\label{sec:bootstrap}

In this subsection we use a simple second-moment argument to show that RLCs are smooth and dual-typical, with probability ${1-q^{-\Omega(n)}}$. The core of the argument is the following lemma.

\begin{lemma}\label{lem:randomCodeConcentration}
    Fix $n\in\N$ and $m\le n$. Let
    $f:\F_q^n\to\R_{\ge 0}$ be normalized, and let $C$ be the linear code generated by a uniformly random matrix $G \in \F_q^{n \times m}$. Then, for every $\Delta>q^{-m}|f(0)-1|$,
    \[
        \PROver{G}{\inabs{\delta_{C,f}}>\Delta}
        \le
        \frac{q-1}
             {q^m\inparen{\Delta-q^{-m}|f(0)-1|}^2}
        \norm f_2^2
        + \frac{q^{m-n}}{q-1}
        \eperiod
    \]
\end{lemma}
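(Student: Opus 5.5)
The plan is to bypass the randomness of $|C|$ by working with the map $v\mapsto Gv$ on $\F_q^m$ and then conditioning on $G$ having full rank. Write $g:=f-1$, so $\Eover{x\sim\F_q^n}{g(x)}=0$ and $\norm g_2^2=\norm f_2^2-1\le\norm f_2^2$. Define the random variable
\[
    X:=\sum_{v\in\F_q^m\setminus\{0\}} g(Gv).
\]
Let $E$ be the event that $G$ is injective, i.e.\ has full column rank. On $E$ we have $|C|=q^m$, and $C$ is the bijective image of $\F_q^m$. Hence
\[
    \delta_{C,f}=q^{-m}\sum_{x\in C}(f(x)-1)=q^{-m}(f(0)-1)+q^{-m}X.
\]
So on $E$, the event $|\delta_{C,f}|>\Delta$ forces $|X|>q^m\bigl(\Delta-q^{-m}|f(0)-1|\bigr)$, which is a positive threshold by the hypothesis on $\Delta$.

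\textbf{Bounding $\Pr[\neg E]$.} $G$ fails to be injective iff $Gv=0$ for some nonzero $v$. Such $v$ can be taken up to scalar multiples, giving $(q^m-1)/(q-1)$ projective representatives. For each fixed nonzero $v$, the vector $Gv$ is uniform on $\F_q^n$, so $\Pr[Gv=0]=q^{-n}$. A union bound gives $\Pr[\neg E]\le q^{m-n}/(q-1)$, which is exactly the second term of the lemma.

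\textbf{Second moment of $X$.} For nonzero $v$, $Gv$ is uniform, so $\E[X]=0$.

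For nonzero $v,w$ that are not collinear, $(Gv,Gw)$ is uniform on $(\F_q^n)^2$. Hence $\E[g(Gv)g(Gw)]=(\E g)^2=0$.

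For collinear pairs $w=cv$ with $c\in\F_q^*$, we get $\E[g(Gv)g(cGv)]=\Eover{x}{g(x)g(cx)}$. Since $x\mapsto cx$ is a bijection, Cauchy--Schwarz bounds this by $\norm g_2^2$.

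Each nonzero $v$ has exactly $q-1$ collinear partners, itself included. Therefore
\[
    \E[X^2]\le (q^m-1)(q-1)\norm g_2^2\le q^m(q-1)\norm f_2^2 .
\]

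\textbf{Conclusion.} Chebyshev's inequality, applied to $X$ without any conditioning, gives
\[
    \Pr\Bigl[|X|>q^m\bigl(\Delta-q^{-m}|f(0)-1|\bigr)\Bigr]\le\frac{q^m(q-1)\norm f_2^2}{q^{2m}\bigl(\Delta-q^{-m}|f(0)-1|\bigr)^2}.
\]
This is the first term of the lemma. A union bound with $\neg E$ finishes the proof.

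The only delicate point is the collinear correlations: the events $Gv$ and $G(cv)$ are fully dependent. This is handled by the Cauchy--Schwarz bound above, and it is the source of the factor $q-1$ in the first term.
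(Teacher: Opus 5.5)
Your proof is correct and follows essentially the same route as the paper: a second-moment bound on the multiset sum of $f(Gv)$ over $v\in\F_q^m$, with zero covariance for non-collinear pairs and Cauchy--Schwarz for collinear ones, then Chebyshev applied without conditioning plus a union bound with the rank-deficiency event. Centering via $g=f-1$ and using the projective union bound are cosmetic differences; your count $(q^m-1)/(q-1)$ equals the paper's column-by-column sum $\sum_{i=1}^m q^{i-1}$ exactly.
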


\begin{proof}
    We evaluate the relative deviation by first considering the multiset of all vectors generated by $G$. Define
    \[
        X = \sum_{v \in \F_q^m} f(Gv) \ecomma
    \]
    and let $\delta_{G,f} = q^{-m}X - 1$ be the relative deviation over this multiset. 

    If $G$ has full column rank, the multiset $\{Gv : v \in \F_q^m\}$ contains exactly $q^m$ distinct vectors, meaning it coincides perfectly with the code subspace $C$. In this case, $\delta_{C,f} = \delta_{G,f}$. Therefore, the event that the set deviation exceeds $\Delta$ can only occur if either the multiset deviation exceeds $\Delta$, or $G$ fails to have full rank. By the union bound:
    \[
        \PROver{G}{\inabs{\delta_{C,f}} > \Delta} \le \PROver{G}{\inabs{\delta_{G,f}} > \Delta} + \PROver{G}{\text{rank}(G) < m} \eperiod
    \]

    We first bound the multiset deviation unconditionally over the choice of $G$. Since $f$ is normalized, $\sum_{x\in\F_q^n}f(x)=q^n$. For any fixed $v \ne 0$, the vector $Gv$ is uniformly distributed over $\F_q^n$, so $\Eover{G}{f(Gv)} = 1$. The expectation of $X$ is
    \[
        \Eover{G}{X} = f(0) + \sum_{v \in \F_q^m \setminus \{0\}} \Eover{G}{f(Gv)} = f(0) + q^m - 1 \eperiod
    \]
    Hence, the exact expected multiset deviation is
    \[
        \mu = \Eover{G}{\delta_{G,f}} = q^{-m}\Eover{G}{X} - 1 = q^{-m}\inparen{f(0) - 1} \eperiod
    \]

    We next bound the variance of $X$. Since $f(0)$ is a constant, we can expand the variance using the sum definition:
    \[
        \Varover{G}{X} = \sum_{v,w \in \F_q^m \setminus \{0\}} \inparen{\Eover{G}{f(Gv)f(Gw)} - \Eover{G}{f(Gv)}\Eover{G}{f(Gw)}} \eperiod
    \]
    If $v$ and $w$ are linearly independent, the vectors $Gv$ and $Gw$ are independent and uniformly distributed in $\F_q^n$. Thus, $\Eover{G}{f(Gv)f(Gw)} = \Eover{G}{f(Gv)}\Eover{G}{f(Gw)} = 1$, making the term exactly zero. 

    The only non-zero terms in the double sum occur when $v$ and $w$ are linearly dependent, meaning $w = \alpha v$ for some $\alpha \in \F_q^*$. In this case, $Gw = \alpha Gv$. Since $f$ is non-negative, we can form an upper bound by dropping the subtracted product $\Eover{G}{f(Gv)}\Eover{G}{f(Gw)} = 1$:
    \[
        \Eover{G}{f(Gv)f(\alpha Gv)} - 1 \le \Eover{G}{f(Gv)f(\alpha Gv)} = q^{-n} \sum_{x \in \F_q^n} f(x)f(\alpha x) \eperiod
    \]
    By Cauchy--Schwarz, for every $\alpha \in \F_q^*$:
    \[
        \sum_{x \in \F_q^n} f(x)f(\alpha x) \le \sqrt{\sum_{x \in \F_q^n} f(x)^2} \sqrt{\sum_{x \in \F_q^n} f(\alpha x)^2} = \sum_{x \in \F_q^n} f(x)^2 = q^n \norm f_2^2 \eperiod
    \]
    Substituting this back into the variance expansion, we sum only over the pairs $(v, \alpha v)$. There are $q^m - 1$ choices for $v \ne 0$ and $q - 1$ choices for $\alpha \in \F_q^*$, giving:
    \[
        \Varover{G}{X} \le \sum_{v \in \F_q^m \setminus \{0\}} \sum_{\alpha \in \F_q^*} q^{-n} \inparen{q^n \norm f_2^2} = (q^m - 1)(q - 1) \norm f_2^2 \le q^m(q - 1) \norm f_2^2 \eperiod
    \]
    Scaling down to the relative deviation $\delta_{G,f}$, we get
    \[
        \Varover{G}{\delta_{G,f}} = q^{-2m} \Varover{G}{X} \le \frac{q-1}{q^m} \norm f_2^2 \eperiod
    \]
    By Chebyshev's inequality, to bound the absolute deviation we measure the distance from the mean $\mu$ to the boundary of the interval $[-\Delta, \Delta]$. Provided $\Delta > |\mu|$, we have:
    \[
        \PROver{G}{\inabs{\delta_{G,f}} > \Delta} \le \PROver{G}{\inabs{\delta_{G,f} - \mu} > \Delta - |\mu|} \le \frac{\Varover{G}{\delta_{G,f}}}{\inparen{\Delta - q^{-m}|f(0)-1|}^2} \eperiod
    \]
    Substituting our variance bound yields:
    \[
        \PROver{G}{\inabs{\delta_{G,f}} > \Delta} \le \frac{q-1}{q^m\inparen{\Delta - q^{-m}|f(0)-1|}^2} \norm f_2^2 \eperiod
    \]

    Finally, we bound the probability that $G$ fails to have full column rank. The matrix $G$ is rank-deficient if and only if at least one of its $m$ columns falls into the span of the previous columns. The $i$-th column falls into the span of the first $i-1$ columns with probability at most $q^{i-1}/q^n$. Applying the union bound over all $m$ columns gives
    \[
        \PROver{G}{\text{rank}(G) < m} \le \sum_{i=1}^m \frac{q^{i-1}}{q^n} = q^{-n} \frac{q^m-1}{q-1} \le \frac{q^{m-n}}{q-1} \eperiod
    \]
    Adding this rank-deficiency bound to the multiset deviation bound completes the proof.
\end{proof}
Using \cref{lem:randomCodeConcentration}  across every dimension in the construction sequence and union-bounding the failures, we can show that a random linear code is highly likely to be smooth with respect to $f$.
\begin{lemma}\label{lem:fSmoothBySecondMoment}
    Let $f:\F_q^n\to\R_{\ge 0}$ be normalized, and let
    $C_0,\dots,C_k$ be a random construction sequence.  If
    \[
        \Gamma>q^{-k}|f(0)-1|,
    \]
    then $C_0,\dots,C_k$ is $(f,\Gamma)$-smooth with probability at least
    \[
        1-
        \frac{(k+1)(q-1)q^{-k}}
             {\inparen{\Gamma-q^{-k}|f(0)-1|}^2}
        \norm f_2^2 - \frac{q^{k-n+1}}{(q-1)^2}
        \eperiod
    \]
\end{lemma}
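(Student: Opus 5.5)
The plan is to apply \cref{lem:randomCodeConcentration} separately at each level $i=0,1,\dots,k$ of the construction sequence and then take a union bound. The key observation is that $C_i$ is exactly the code generated by the random matrix $G_i=[u_1\,\cdots\,u_i]\in\F_q^{n\times i}$, whose columns are independent and uniform. So $C_i$ falls under the hypothesis of \cref{lem:randomCodeConcentration} with $m=i$, and the smoothness requirement at level $i$ is the event $|\delta_{C_i,f}|\le\Delta_i:=q^{k-i}\Gamma$.

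First I check that the lemma's hypothesis holds at every level. We need $\Delta_i> q^{-i}|f(0)-1|$. Multiplying by $q^{i-k}$, this is equivalent to $\Gamma>q^{-k}|f(0)-1|$, which is exactly our assumption. In fact the slack scales uniformly:
\[
    \Delta_i-q^{-i}|f(0)-1| = q^{k-i}\bigl(\Gamma-q^{-k}|f(0)-1|\bigr).
\]
For $i=0$ the code is $C_0=\{0\}$, so $\delta_{C_0,f}=f(0)-1$ deterministically and the level-$0$ condition holds outright. The lemma's bound at $m=0$ is also consistent with this, so there is no harm in including that level in the count.

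Next I plug into the lemma at level $i$. The failure probability is at most
\[
    \frac{(q-1)\norm f_2^2}{q^{i}\,q^{2(k-i)}\bigl(\Gamma-q^{-k}|f(0)-1|\bigr)^2}+\frac{q^{i-n}}{q-1}
    =\frac{(q-1)q^{-k}\,q^{-(k-i)}\norm f_2^2}{\bigl(\Gamma-q^{-k}|f(0)-1|\bigr)^2}+\frac{q^{i-n}}{q-1}.
\]
Since $q^{-(k-i)}\le 1$, the first term is at most $(q-1)q^{-k}\norm f_2^2/(\Gamma-q^{-k}|f(0)-1|)^2$. Summing this over the $k+1$ levels gives the first error term in the statement.

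The rank terms need a little more care, because summing them naively over $i$ would give an extra factor. I will bound them by a single event instead. Every $C_i$ is the column span of a prefix of $G_k$, so if $G_k$ has full column rank then every $G_i$ does too. Hence the event "some $G_i$ is rank-deficient" is contained in "$G_k$ is rank-deficient", which has probability at most $q^{k-n}/(q-1)$ by the same computation as in the lemma. This is at most $q^{k-n+1}/(q-1)^2$ since $q/(q-1)\ge1$. Alternatively, a geometric sum $\sum_{i\le k}q^{i-n}/(q-1)\le q^{k-n+1}/(q-1)^2$ gives the same bound directly.

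To make this decoupling rigorous, I rerun the lemma's Chebyshev step on the multiset deviations $\delta_{G_i,f}$ at each level, and add the single rank-deficiency event once. On the full-rank event we have $\delta_{C_i,f}=\delta_{G_i,f}$ for every $i$, so the union bound yields the claimed probability. The only point needing care is this bookkeeping of the rank terms; everything else is a direct substitution.
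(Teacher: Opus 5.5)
Your proposal is correct and matches the paper's proof. You apply \cref{lem:randomCodeConcentration} to each $C_i$ with $\Delta_i=q^{k-i}\Gamma$, use $q^{-(k-i)}\le 1$ (the paper phrases this as $q^i\le q^k$), and take a union bound over $i=0,\dots,k$. The paper handles the rank terms by exactly the geometric sum $\sum_{i\le k} q^{i-n}/(q-1)\le q^{k-n+1}/(q-1)^2$ that you list as your alternative. Your single-event treatment of rank deficiency (full rank of $G_k$ implies full rank of every prefix $G_i$) is a valid minor tightening, though it requires reopening the Chebyshev step rather than using the lemma as a black box.
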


\begin{proof}
    Fix $0\le i\le k$.  Since $C_i$ is an RLC of dimension $i$, applying
    \cref{lem:randomCodeConcentration} with
    \[
        \Delta_i=q^{k-i}\Gamma
    \]
    gives
    \begin{align*}
        \PROver{C_i}{
            \inabs{\delta_{C_i,f}}>q^{k-i}\Gamma
        }
        &\le
        \frac{q-1}
             {q^i\inparen{q^{k-i}\Gamma-q^{-i}|f(0)-1|}^2}
        \norm f_2^2 + \frac{q^{i-n}}{q-1}
        \\&= \frac{(q-1)q^i}
             {\inparen{q^k\Gamma-|f(0)-1|}^2} \norm f_2^2 + \frac{q^{i-n}}{q-1}
        \\&\le
        \frac{(q-1)q^k}
             {\inparen{q^k\Gamma-|f(0)-1|}^2} \norm f_2^2 + \frac{q^{i-n}}{q-1}
        \\&=
        \frac{(q-1)q^{-k}}
             {\inparen{\Gamma-q^{-k}|f(0)-1|}^2} \norm f_2^2 + \frac{q^{i-n}}{q-1}
        \eperiod
    \end{align*}
    A union bound over $i=0,\dots,k$ gives the bound on the sum of the first terms, and sums the rank-deficiency error as $\sum_{i=0}^k \frac{q^{i-n}}{q-1} \le \frac{q^{k-n+1}}{(q-1)^2}$.
\end{proof}
We next show that a random construction sequence is dual-typical with high probability.

\begin{lemma}\label{lem:typical}
    A random construction sequence $C_0,\dots,C_k$ is $\gamma$-dual-typical
    with probability at least
    \[
        1-\frac{n(k+1)}{\gamma}.
    \]
\end{lemma}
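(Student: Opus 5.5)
The plan is a first-moment (Markov) argument for each pair $(i,j)$, followed by a union bound over the $n(k+1)$ pairs with $0\le i\le k$ and $1\le j\le n$. The case $j=0$ is excluded by definition, and it is exactly the case where the zero vector would break the estimate.

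Fix $i$ and $j\ge 1$. The code $C_i$ is the column span of a uniformly random matrix $G_i\in\F_q^{n\times i}$, whose columns are $u_1,\dots,u_i$. Hence $C_i^\perp=\{y\in\F_q^n : G_i^{\top}y=0\}$. For each fixed nonzero $y$, the map $y\mapsto (\langle u_1,y\rangle,\dots,\langle u_i,y\rangle)$ has independent uniform coordinates in $\F_q$, since each $u_\ell$ is uniform and $y\neq 0$. Therefore $\Pr[y\in C_i^\perp]=q^{-i}$ exactly, and this holds regardless of whether $G_i$ has full rank, so no rank-deficiency term is needed. Every $y\in W_j$ with $j\ge1$ is nonzero, so linearity of expectation gives
\[
\E\bigl[|C_i^\perp\cap W_j|\bigr]=q^{-i}|W_j|.
\]

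Markov's inequality then gives $\Pr\bigl[|C_i^\perp\cap W_j|>\gamma q^{-i}|W_j|\bigr]\le 1/\gamma$. If $W_j$ were empty the bound would be trivial, but for $1\le j\le n$ it is never empty. A union bound over the $n(k+1)$ pairs $(i,j)$ shows that the sequence is $\gamma$-dual-typical except with probability at most $n(k+1)/\gamma$, as claimed.

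There is no real obstacle here. The only care needed is to compute the expectation in the random-generator model rather than the uniform-subspace model, and to note that $j\ge1$ excludes $y=0$, which lies in every dual code.
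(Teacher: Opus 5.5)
Your proposal is correct and follows the paper's proof essentially verbatim. Like the paper, you compute $\E\bigl[|C_i^\perp\cap W_j|\bigr]=q^{-i}|W_j|$ from the independence and uniformity of the generators (using $y\neq 0$ for $j\ge 1$), apply Markov's inequality to each pair $(i,j)$, and union bound over the $n(k+1)$ pairs.
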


\begin{proof}
    Fix $0\le i\le k$ and $1\le j\le n$, and write
    $X_{i,j}:=|C_i^\perp\cap W_j|$.
    For every fixed nonzero $y\in W_j$, the event $y\in C_i^\perp$ is the
    event that $\langle y,u_s\rangle=0$ for every $s\in[i]$.
    Since $y\ne 0$ and the generators $u_s$ are independent and uniform in
    $\F_q^n$, this event has probability $q^{-i}$. Hence
    $\E X_{i,j}=q^{-i}|W_j|$.

    By Markov's inequality,
    \[
        \Pr\left[X_{i,j}>\gamma q^{-i}|W_j|\right]\le \frac1\gamma.
    \]
    A union bound over the $n(k+1)$ choices of $(i,j)$ proves the claim.
\end{proof}

We finish this section by showing that a dual-typical code is smooth on all sufficiently Fourier-concentrated functions.
\begin{lemma}\label{lem:ConcentratedSmooth}
      Suppose that the construction sequence $C_0,\dots,C_k$ is $\gamma$-dual-typical and that $f:\F_q^n\to \R_{\ge 0}$ is normalized and $\alpha$-Fourier-concentrated. Then $C_0,\dots, C_k$ is $(f,q^{-k}\cdot \gamma\inparen{1+(q-1)\alpha}^n)$-smooth. 
\end{lemma}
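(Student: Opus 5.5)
The plan is to use the Fourier expression for the deviation on a linear code together with the dual-typicality bound on weight classes; this is essentially the computation sketched in \cref{sec:proof-overview}. Fix $0\le i\le k$. First I will verify the identity
\[
    \delta_{C_i,f}=\sum_{y\in C_i^\perp\setminus\{0\}}\widehat f(y).
\]
Start from $f(x)=\sum_{y}\widehat f(y)\overline{\chi_y(x)}$ (or with $\chi_y$, depending on the sign convention; this is harmless since $C_i^\perp$ is closed under negation). Averaging over $x\in C_i$ and using that $\Eover{x\sim C_i}{\chi_y(x)}$ equals $1$ if $y\in C_i^\perp$ and $0$ otherwise, we get $\Eover{x\sim C_i}{f(x)}=\sum_{y\in C_i^\perp}\widehat f(y)$. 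The $y=0$ term is $\widehat f(0)=\norm f_1=1$ because $f$ is normalized, so it cancels the $-1$ in $\delta_{C_i,f}$. The identity holds for the set $C_i$ itself. If $u_1,\dots,u_i$ are dependent, $C_i$ is simply a smaller subspace, and the identity is still valid with its true dual.

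Next, take absolute values and split by Hamming weight. Using $\alpha$-Fourier-concentration and then $\gamma$-dual-typicality for each $j\ge 1$, we get
\[
    |\delta_{C_i,f}|\le\sum_{j=1}^n |C_i^\perp\cap W_j|\,\alpha^j\le \gamma q^{-i}\sum_{j=1}^n |W_j|\alpha^j\le \gamma q^{-i}\sum_{j=0}^n\binom nj (q-1)^j\alpha^j=\gamma q^{-i}\bigl(1+(q-1)\alpha\bigr)^n .
\]
This uses $|W_j|=\binom nj(q-1)^j$ and the binomial theorem, where adding the nonnegative $j=0$ term only weakens the bound. Since $q^{-i}=q^{k-i}\cdot q^{-k}$, the bound is exactly $q^{k-i}\Gamma$ with $\Gamma=q^{-k}\gamma(1+(q-1)\alpha)^n$. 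As this holds for every $0\le i\le k$, the sequence is $(f,\Gamma)$-smooth.

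There is no real obstacle here. The only points needing care are the Fourier normalization (expectation measure on the primal side, counting measure on the dual side, which makes $\widehat f(0)=1$) and the possibility that $C_i$ has dimension smaller than $i$. The latter is harmless because dual-typicality is stated directly in terms of $C_i^\perp$.
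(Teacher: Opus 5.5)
Your proposal is correct and follows essentially the same route as the paper. Both bound $|\delta_{C_i,f}|$ by $\sum_{y\in C_i^\perp\setminus\{0\}}|\widehat f(y)|$, split by Hamming weight, apply $\alpha$-Fourier-concentration and $\gamma$-dual-typicality, and finish with the binomial theorem. The only difference is cosmetic: the paper gets the exact value $\gamma q^{-i}\bigl((1+(q-1)\alpha)^n-1\bigr)$ and then drops the $-1$, whereas you add the $j=0$ term; your explicit check of the Fourier identity for a possibly lower-dimensional $C_i$ is a harmless addition.
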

\begin{proof}
    Observe that
    \begin{align*}
        \inabs{\delta_{C_i,f}} &\le \sum_{y\in C_i^\perp\setminus \{0\}} \left|\widehat f(y)\right| \\
        &\le \sum_{j=1}^n |C_i^\perp\cap W_j|\cdot \alpha^j \\
        &\le \sum_{j=1}^n \gamma\cdot q^{-i}\cdot |W_j|\cdot \alpha^j \\
        &= \gamma\cdot q^{-i}\cdot \sum_{j=1}^n \binom n j \cdot (q-1)^{j}\cdot \alpha^j \\
        &= \gamma \cdot q^{-i} \cdot \inparen{(1+(q-1)\alpha)^n-1}\eperiod \qedhere
    \end{align*}    
\end{proof}

\subsection{Self Convolution and Variance of the Relative Deviation}\label{sec:convolution}
The direct second-moment bound from \cref{sec:bootstrap} yields a failure probability of $q^{-\Omega(n)}$, which is too weak to union-bound over exponentially large function families. To achieve the required $q^{-\Omega(n^2)}$ probability, we must sharply bound the variance of the relative deviation at each step of the code's construction. Remarkably, this one-step variance is entirely controlled by the code's deviation with respect to an auxiliary smoothed function, which we define here as the averaged convolution $F_f$.
\begin{definition}[Self-Convolution Function $F_f$]
    Let $f:\F_q^n\to\R_{\ge 0}$ be a function. We define the \deffont{reflected function} $\check{f}:\F_q^n\to\R$ by $\check{f}(x) = f(-x)$. We then define the \deffont{self-convolution function} $F_f$ as $F_f = f * \check{f}$. Equivalently, $F_f$ can be defined by
    \begin{align*}
        F_f(w) &= q^{-n} \sum_{x \in \F_q^n} f(x) \cdot \check{f}(w-x) \\
               &= q^{-n} \sum_{x \in \F_q^n} f(x) \cdot f(x-w) \eperiod
    \end{align*}
\end{definition}

We next prove several simple properties of $F_f$.

\begin{lemma}[Fourier transform of $F_f$]\label{lem:F_fFourier}
Let $f:\F_q^n \to \mathbb R_{\ge 0}$ be a normalized function. Then, 
$$\widehat{F_f}(y) = \inabs{\widehat f(y)}^2$$
for all $y\in \F_q^n$.
\end{lemma}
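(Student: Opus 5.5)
The plan is to use the convolution theorem as recorded in \cref{sec:Notations}, namely $\widehat{g*h}=\widehat g\cdot\widehat h$, applied with $g=f$ and $h=\check f$. Since $F_f=f*\check f$ by definition, this gives
\[
    \widehat{F_f}(y)=\widehat f(y)\cdot\widehat{\check f}(y),
\]
so the claim reduces to showing $\widehat{\check f}(y)=\overline{\widehat f(y)}$ for every $y\in\F_q^n$. The identity $\widehat f\cdot\overline{\widehat f}=|\widehat f|^2$ then finishes the proof.

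To compute $\widehat{\check f}(y)$, I will substitute $x\mapsto -x$ in the defining sum:
\[
    \widehat{\check f}(y)=q^{-n}\sum_{x}f(-x)\chi_y(x)=q^{-n}\sum_{x}f(x)\chi_y(-x).
\]
I then need two facts about characters. First, $\chi_y(-x)=\overline{\chi_y(x)}$; this holds because $\tr$ is $\F_p$-linear, so $\tr(\langle -x,y\rangle)=-\tr(\langle x,y\rangle)$, and conjugating $\exp\bigl(\tfrac{2\pi i}{p}a\bigr)$ negates the exponent. Second, $f$ is real-valued (in fact nonnegative), so $f(x)=\overline{f(x)}$. Combining these, the sum equals $\overline{q^{-n}\sum_x f(x)\chi_y(x)}=\overline{\widehat f(y)}$.

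No step here is a real obstacle. The only point needing care is the normalization: the convolution $(f*g)(y)=q^{-n}\sum_x f(x)g(y-x)$ carries a $q^{-n}$ factor, and with that convention the identity $\widehat{f*g}=\widehat f\,\widehat g$ holds with no extra constant, as the paper already states. If I wanted to avoid relying on that stated identity, I would verify it directly by writing $\widehat{F_f}(y)=q^{-2n}\sum_{w,x}f(x)f(x-w)\chi_y(w)$, substituting $w=x-z$, and using multiplicativity $\chi_y(x-z)=\chi_y(x)\overline{\chi_y(z)}$. The double sum then factors into $\widehat f(y)\cdot\overline{\widehat f(y)}$. Normalization of $f$ plays no role in the argument.
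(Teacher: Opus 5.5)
Your proposal is correct and matches the paper's proof: the paper also shows $\widehat{\check f}(y)=\overline{\widehat f(y)}$ via the substitution $x\mapsto -x$ and $\chi_y(-x)=\overline{\chi_y(x)}$, then applies the convolution identity to $F_f=f*\check f$. Your direct double-sum check of the normalization and your remark that normalization of $f$ plays no role are both accurate.
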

\begin{proof}   
    First, observe that
    $$ \widehat{\check{f}}(y) = q^{-n} \sum_{x \in \F_q^n} f(-x) \chi_y(x) =  q^{-n} \sum_{x \in \F_q^n} f(x) \chi_y(-x) = q^{-n} \sum_{x \in \F_q^n} f(x) \overline{\chi_y(x)} = \overline{\widehat f(y)}\eperiod $$
    
    Thus, by the Convolution Identity, $$\widehat{F_f}(y) = \widehat{f}(y) \cdot \widehat{\check{f}}(y) = \widehat{f}(y) \cdot \overline{\widehat f(y)} = \inabs{\widehat f(y)}^2\eperiod$$ 
\end{proof}
\cref{lem:F_fFourier} immediately yields the following corollary.
\begin{lemma}[$\alpha$-Fourier-concentration of $F_f$] \label{lem:concentration_F_f}
    Let $f:\F_q^n\to \R_{\ge 0}$ be a normalized function. If $f$ is $\alpha$-Fourier-concentrated, then its self-convolution function $F_f$ is $\alpha^2$-Fourier-concentrated.
\end{lemma}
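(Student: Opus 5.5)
The claim follows in a few lines from \cref{lem:F_fFourier}. There are three things to verify for $F_f$: that it is nonnegative, that it is normalized, and that its Fourier coefficients decay at rate $\alpha^2$.

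Nonnegativity is immediate from the definition. Since $f\ge 0$, the value
\[
F_f(w)=q^{-n}\sum_{x}f(x)f(x-w)
\]
is a sum of nonnegative terms.

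For normalization, I evaluate \cref{lem:F_fFourier} at $y=0$. The character $\chi_0$ is identically $1$, so $\widehat g(0)=\Eover{x\sim\F_q^n}{g(x)}$ for every $g$. Hence
\[
\Eover{x\sim\F_q^n}{F_f(x)}=\widehat{F_f}(0)=|\widehat f(0)|^2=1^2=1,
\]
where the last step uses that $f$ is normalized. The same conclusion also follows directly by exchanging the order of the two sums in $q^{-n}\sum_w F_f(w)$.

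For the decay condition, fix any $y\in\F_q^n$. By \cref{lem:F_fFourier} and the $\alpha$-Fourier-concentration of $f$,
\[
|\widehat{F_f}(y)|=|\widehat f(y)|^2\le(\alpha^{\wt y})^2=(\alpha^2)^{\wt y}.
\]
Finally, $\alpha\in[0,1]$ implies $\alpha^2\in[0,1]$, so $\alpha^2$ is an admissible concentration parameter.

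There is no real obstacle here. The only point needing care is to confirm that $F_f$ satisfies the "normalized nonnegative" hypothesis built into the definition of Fourier-concentration, and the $y=0$ evaluation above handles that.
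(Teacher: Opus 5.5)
Your proof is correct and follows the paper's argument: the paper treats this lemma as an immediate corollary of \cref{lem:F_fFourier}, via $|\widehat{F_f}(y)|=|\widehat f(y)|^2\le(\alpha^2)^{\mathrm{wt}(y)}$. Your extra check that $F_f$ is nonnegative and normalized, using $\widehat{F_f}(0)=|\widehat f(0)|^2=1$, supplies a detail the paper leaves implicit.
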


\begin{lemma}[Translation Invariance of $F_f$] \label{lem:translation_invariance}
    For any function $f:\F_q^n\to\R$ and any translation vector $z \in \F_q^n$, let $f_z$ be its translation. Then $F_{f_z} = F_f$.
\end{lemma}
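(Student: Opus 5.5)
The plan is to verify the identity pointwise, directly from the definition of $F_f$ as a sum; a Fourier-side route is available as a cross-check. Recall that $f_z(x)=f(x+z)$ and
\[
    F_{f_z}(w)=q^{-n}\sum_{x\in\F_q^n} f_z(x)\, f_z(x-w)=q^{-n}\sum_{x\in\F_q^n} f(x+z)\, f(x+z-w).
\]
The key (and only) step is the change of variables $x'=x+z$. This is a bijection of $\F_q^n$, so the sum becomes $q^{-n}\sum_{x'} f(x')f(x'-w)=F_f(w)$ for every $w$, which gives $F_{f_z}=F_f$.

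Note that this argument needs no normalization or nonnegativity. It only uses that translation by $z$ is a bijection of the group $\F_q^n$, so the lemma holds for any $f:\F_q^n\to\R$, as stated.

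As a sanity check on the Fourier side, translating multiplies the Fourier transform by a unimodular phase:
\[
    \widehat{f_z}(y)=\overline{\chi_y(z)}\,\widehat f(y).
\]
The computation in \cref{lem:F_fFourier} did not use normalization, so it applies here and gives $\widehat{F_{f_z}}(y)=|\widehat{f_z}(y)|^2$. Hence $\widehat{F_{f_z}}=|\widehat{f_z}|^2=|\widehat f|^2=\widehat{F_f}$, and Fourier inversion yields the same conclusion. I expect no real obstacle. The only point needing care is the sign convention $f_z(x)=f(x+z)$ together with $F_f(w)=q^{-n}\sum_x f(x)f(x-w)$, and the substitution handles either sign convention identically.
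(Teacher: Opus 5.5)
Your proposal is correct and matches the paper's proof: expand $F_{f_z}(w)$ from the definition and substitute $y = x+z$, which is a bijection of $\F_q^n$, to recover $F_f(w)$. Your Fourier-side cross-check, $\widehat{f_z}(y)=\overline{\chi_y(z)}\,\widehat f(y)$ and hence $\widehat{F_{f_z}}=|\widehat{f_z}|^2=|\widehat f|^2=\widehat{F_f}$, is also valid but not needed.
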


\begin{proof}

    We prove this directly using the expanded definition of the self-convolution. Evaluating $F_{f_z}$ at any point $w \in \F_q^n$ yields:
    $$ F_{f_z}(w) = q^{-n} \sum_{x \in \F_q^n} f_z(x) \cdot f_z(x-w) = q^{-n} \sum_{x \in \F_q^n} f(x+z) \cdot f(x-w+z) \eperiod $$
    
    Let us apply the change of variables $y = x+z$. As $x$ iterates over all of $\F_q^n$, $y$ also iterates over all of $\F_q^n$. Substituting $y$ into the summation gives:
    $$ F_{f_z}(w) = q^{-n} \sum_{y \in \F_q^n} f(y) \cdot f(y-w) = F_f(w) \eperiod $$
    Since $F_{f_z}(w) = F_f(w)$ for all $w$, we have $F_{f_z} = F_f$.
\end{proof}

The following lemma describes how $\delta_{C,F_f}$ controls the single-step change in $\delta_{C,f}$.

\begin{lemma}[Deviation of $F_f$ controls the change in deviation of $f$] \label{lem:prob_step}
    Let $f:\F_q^n\to \R_{\ge 0}$ be normalized. Fix a linear code $C\subseteq \F_q^n$ and assume $\delta_{C,f}\ne0$. Let $u \sim \F_q^n$ be a uniformly random vector, and let $C' = C + \operatorname{span}(u)$. Then, for any $\lambda > 0$:
    \[ \PROver{u}{\inabs{\delta_{C',f}} > \inabs{\delta_{C,f}} \inparen{\frac{1}{q} + \lambda}} \le \frac{\inparen{\frac{q-1}{q}}^2 \delta_{C, F_f}}{\lambda^2\cdot \delta_{C,f}^2} \eperiod \]   
\end{lemma}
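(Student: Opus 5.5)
The plan is to compute the mean and variance of $\delta_{C',f}$ over $u$ and then apply Chebyshev's inequality.

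\textbf{Decomposition.} Write $C'=\bigcup_{a\in\F_q}(C+au)$. If $u\in C$, then $C'=C$, so this case needs separate care. Otherwise the $q$ cosets are disjoint and $|C'|=q|C|$. This gives
\[
\delta_{C',f}=\frac1q\Bigl(\delta_{C,f}+\sum_{a\in\F_q^*}\delta_{C+au,f}\Bigr),\qquad \delta_{C+v,f}:=\frac{f(C+v)}{|C|}-1 .
\]
To avoid conditioning on $u\notin C$, I would work with the multiset version, i.e., define the right-hand side as the quantity $D(u)$ for every $u$. When $u\in C$ we get $D(u)=\delta_{C,f}$, and I will need to handle whether this agrees with $\delta_{C',f}$ (see the last paragraph).

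\textbf{Mean.} For $a\neq0$ the vector $au$ is uniform, and $\E_v[f(C+v)]=|C|$, so each $\delta_{C+au,f}$ has mean $0$. Hence $\E[D]=\delta_{C,f}/q$.

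\textbf{Variance.} Set $g(v):=\delta_{C+v,f}$, a function on $\F_q^n$ constant on cosets of $C$. Then
\[
\Var[D]=q^{-2}\sum_{a,b\in\F_q^*}\E_u[g(au)g(bu)].
\]
For $a\ne b$ I would pass to the Fourier side. One has $\widehat g(y)=\widehat f(y)\cdot q^n/|C|\cdot 1[y\in C^\perp]$ (up to conjugation conventions), with the $y=0$ term removed. This gives
\[
g(v)=\sum_{y\in C^\perp\setminus0}\widehat f(y)\overline{\chi_y(v)}.
\]
Therefore $\E_u[g(au)g(bu)]=\sum_{y,y'}\widehat f(y)\widehat f(y')\,1[ay+by'=0]$ (with conjugations), which pairs $y'=-(a/b)y$. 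Summing over all pairs $(a,b)$ and the matching $y,y'$ should produce $\sum_{y\in C^\perp\setminus0}|\widehat f(y)|^2$ weighted by combinatorial factors. By \cref{lem:F_fFourier} and the Poisson identity $\delta_{C,F_f}=\sum_{y\in C^\perp\setminus0}\widehat{F_f}(y)$, that sum equals $\delta_{C,F_f}$.

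The main obstacle is making the constant come out to exactly $((q-1)/q)^2$. Naively each of the $(q-1)^2$ pairs $(a,b)$ contributes, which would give $\Var\le ((q-1)/q)^2\cdot\max_{a,b}(\cdots)$. To bound each term I would use Cauchy--Schwarz,
\[
|\E_u[g(au)g(bu)]|\le \E[g(u)^2],
\]
since $au$ and $bu$ are each uniform. Then $\E_v[g(v)^2]=\sum_{y\in C^\perp\setminus0}|\widehat f(y)|^2=\delta_{C,F_f}$ by Parseval on the quotient. Summing over the $(q-1)^2$ pairs gives $\Var[D]\le ((q-1)/q)^2\,\delta_{C,F_f}$. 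I would also double-check this via the direct computation $\E_v[f(C+v)^2]/|C|^2-1$, which is $\delta_{C,F_f}$ after expanding $\sum_{c,c'}f(c+v)f(c'+v)$.

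\textbf{Conclusion.} Since $|\E D|=|\delta_{C,f}|/q$, the event $|D|>|\delta_{C,f}|(1/q+\lambda)$ implies $|D-\E D|>\lambda|\delta_{C,f}|$. Chebyshev's inequality then gives the stated bound.

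For the $u\in C$ case I have two routes. If $\delta_{C',f}=\delta_{C,f}$ exactly, then $C'=C$ and the multiset $D$ is not $\delta_{C',f}$. The first route is to note that the multiset identity $D(u)=\delta_{C,f}$ in this case, together with $\Pr[u\in C]$, is already absorbed by Chebyshev, since that event has $|D-\E D|=(1-1/q)|\delta_{C,f}|$. The second route, if the first fails, is to argue directly that the coset sum still equals $q|C|(1+\delta_{C,f})$.
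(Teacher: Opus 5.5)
Your argument is correct and gives exactly the constant $\inparen{\frac{q-1}{q}}^2$. The overall structure (mean, then variance, then Chebyshev) is the paper's, but you run it on the primal side rather than the Fourier side.

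The paper expands $\delta_{C',f}=\sum_{y\in C^\perp\setminus\{0\}}\widehat f(y)\,\xi_y(u)$ with $\xi_y(u)=\mathbf{1}[\langle y,u\rangle=0]$. It kills the cross terms using pairwise independence of $\xi_y,\xi_{y'}$ for non-colinear $y,y'$. It then bounds the surviving colinear sum $\frac{q-1}{q^2}\sum_{y}\sum_{c\in\F_q^*}\widehat f(y)\overline{\widehat f(cy)}$ by the triangle inequality and AM--GM.

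You instead write the deviation as an average over the $q$ cosets, $\frac1q\sum_{a\in\F_q} g(au)$ with $g(v)=\delta_{C+v,f}$. You bound each of the $(q-1)^2$ covariances $\Eover{u}{g(au)g(bu)}$ by Cauchy--Schwarz and identify $\Eover{v}{g(v)^2}=\delta_{C,F_f}$. Expanding your covariance in characters gives $\sum_{y}\widehat f(y)\overline{\widehat f((a/b)y)}$, so your variance is literally the paper's.

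Your route is more elementary. Via the direct expansion $\Eover{v}{f(C+v)^2}=|C|\cdot F_f(C)$ it needs no Fourier analysis and no independent/colinear case split. It also explains why $F_f$ appears: it is the second moment of the random coset mass $f(C+v)$. The paper's route has the advantage that its identity holds verbatim for every $u$, so there is no degenerate case at all.

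Your last paragraph, on that degenerate case, is confused. If $u\in C$, then every coset $C+au$ equals $C$, so $D(u)=\delta_{C,f}$. Also $C'=C$, so $\delta_{C',f}=\delta_{C,f}$. Hence $D(u)=\delta_{C',f}$ for all $u$, which is your ``second route'' and settles the matter. Your sentence asserting that $D$ differs from $\delta_{C',f}$ there is false. Your ``first route'' would not have been valid anyway. Chebyshev bounds deviations of $D$, so any event on which $D\neq\delta_{C',f}$ would have to be paid for by adding its probability, not absorbed.
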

\begin{proof}
    For any normalized function $f:\F_q^n\to\R$ (meaning $\widehat{f}(0) = \norm{f}_1 = 1$), we can express its relative deviation over $C$ using the Fourier transform. Utilizing the property that $\sum_{x\in C}\chi_y(x)=|C|\cdot \ind{[y\in C^\perp]}$, we have:
    \begin{equation} \label{eq:fourier_deviation}
        \delta_{C,f} = \frac{1}{|C|}\sum_{x\in C}f(x)-1 = \sum_{y\in C^\perp}\widehat{f}(y)-1 = \sum_{y\in C^\perp\setminus\{0\}}\widehat{f}(y) \ecomma
    \end{equation}
    where the second equality is due to the Convolution Identity.

    For the augmented code $C' = C + \text{span}(u)$, its dual space is $(C')^\perp=\{y\in C^\perp \mid \langle y,u\rangle=0\}$. Let $\xi_y(u)$ be the indicator random variable $\ind{ [\langle y,u\rangle=0]}$. Thus, for any vector $u \in \F_q^n$, the deviation over the augmented code becomes:
    $$ \delta_{C',f} = \sum_{y\in C^\perp\setminus\{0\}}\widehat{f}(y)\cdot \xi_y(u) \eperiod $$

    Since $u$ is chosen uniformly at random, for any non-zero $y \in \F_q^n$, the inner product $\langle y,u\rangle$ is uniformly distributed over $\F_q$. Therefore, $\Eover{u}{\xi_y(u)}=1/q$. By linearity of expectation,
    $$ \Eover{u}{\delta_{C',f}} = \frac{1}{q}\sum_{y\in C^\perp\setminus\{0\}}\widehat{f}(y) = \frac{\delta_{C,f}}{q} \eperiod $$
    
    Next, we bound the variance of $\delta_{C',f}$. We evaluate
    $$\Varover{u}{\delta_{C',f}} = \sum_{y,y'\in C^\perp\setminus\{0\}} \widehat{f}(y)\cdot \overline{\widehat{f}(y')} \inparen{ \Eover{u}{\xi_y(u)\xi_{y'}(u)} - \Eover{u}{\xi_y(u)}\Eover{u}{\xi_{y'}(u)} }\eperiod$$
    Observe that for linearly independent $y$ and $y'$, 
    $$\Eover{u}{\xi_y(u)\xi_{y'}(u)} - \Eover{u}{\xi_y(u)}\Eover{u}{\xi_{y'}(u)} = q^{-2} - q^{-2} = 0\ecomma$$
    so these terms vanish from the sum. For $y$ and $y'$ colinear, we have
    $$\Eover{u}{\xi_y(u)\xi_{y'}(u)} - \Eover{u}{\xi_y(u)}\Eover{u}{\xi_{y'}(u)} = q^{-1} - q^{-2} = \frac{q-1}{q^2}\eperiod$$
    Thus,
    \begin{align*}
        \Varover{u}{\delta_{C',f}} 
        &= \frac{q-1}{q^2}\sum_{y\in C^\perp\setminus\{0\}} \sum_{c\in\F_q^*}\widehat{f}(y)\cdot \overline{\widehat{f}(cy)}  \\
        &\le \frac{q-1}{q^2}\sum_{y\in C^\perp\setminus\{0\}}\sum_{c\in\F_q^*}\inabs{\widehat{f}(y)}\cdot \inabs{\widehat{f}(cy)} && \text{(triangle inequality)} \\
        &\le \frac{q-1}{2q^2}\sum_{y\in C^\perp\setminus\{0\}}\sum_{c\in\F_q^*} \inparen{\inabs{\widehat{f}(y)}^2 + \inabs{\widehat{f}(cy)}^2} && \text{(AM-GM inequality)} \\
        &= \frac{q-1}{2q^2} \inparen{ (q-1)\sum_{y\in C^\perp\setminus\{0\}}\inabs{\widehat{f}(y)}^2 + (q-1)\sum_{y'\in C^\perp\setminus\{0\}} \inabs{\widehat{f}(y')}^2 } && \text{(taking } y'=cy\text{)}\\
        &= \frac{q-1}{2q^2} \cdot 2(q-1) \sum_{y\in C^\perp\setminus\{0\}}\inabs{\widehat{f}(y)}^2 &&  \\
        &= \inparen{\frac{q-1}{q}}^2\sum_{y\in C^\perp\setminus\{0\}}\inabs{\widehat{f}(y)}^2 \\
        &=\inparen{\frac{q-1}{q}}^2\sum_{y\in C^\perp\setminus\{0\}}\widehat{F_f}(y) &&\text{(by \cref{lem:F_fFourier})} \\
        &= \inparen{\frac{q-1}{q}}^2\delta_{C,F_f}\eperiod &&\text{(by \cref{eq:fourier_deviation}})
    \end{align*}
   Therefore, by Chebyshev's inequality,
\begin{align*}
    \PROver{u}{
        |\delta_{C',f}|
        >
        |\delta_{C,f}|\left(\frac1q+\lambda\right)
    }
    &\le
    \PROver{u}{
        \left|
        \delta_{C',f}-\Eover{u}{\delta_{C',f}}
        \right|
        >
        \lambda|\delta_{C,f}|
    } \\
    &\le
    \frac{\Varover{u}{\delta_{C',f}}}
    {\lambda^2\delta_{C,f}^2} \le
    \frac{\left(\frac{q-1}{q}\right)^2\delta_{C,F_f}}
    {\lambda^2\delta_{C,f}^2}.
\end{align*}
\end{proof}

  \subsection{Controlling the Deviation throughout the Construction Sequence}\label{sec:nestedSmoothness}
  To extend our one-step variance bound across the entire $k$-step construction, we must carefully track the deviation's trajectory. The following lemma achieves this by categorizing each step as: 
  \begin{itemize}
      \item \deffont{Horrible}---steps in which the deviation grows.
      \item \deffont{Good}---steps in which the deviation goes down significantly, or is already as small as expected to begin with.
      \item \deffont{Bad}---steps in which the deviation does not increase, but also does not drop significantly even though it is expected to.  
  \end{itemize}
  The lemma demonstrates that a breakdown in $f$'s smoothness requires an accumulation of too many ``bad'' steps or at least one ``horrible'' step. We bound the probability of the first sort of failure, namely, we essentially show that a failure due to $t$ bad steps occurs with probability at most $q^{-\Omega(t^2)}$. Later, in \cref{lem:fSmooth}, we take care of the other failure condition.
\begin{lemma}\label{lem:fSmoothOrIncreasing}
Let $f:\F_q^n\to\R_{\ge 0}$ be normalized, let $C_0,\dots,C_k$ be a random
construction sequence, let $\Gamma>0$, and let $t\in\N$ satisfy $20\log_q(n)\le t\le k$.
Set
\[
    \Lambda:=\max\{\|f\|_\infty \cdot q^{-k},\sqrt{\Gamma}\}.
\]
    \begin{itemize}
        \item Let $S_\textrm{$F$-smooth}(f)$ denote the event that $C_0,\dots, C_k$ is $(F_f, \Gamma)$-smooth.
        \item Let $S_{\textrm{monotone}}(f)$ denote the event that $|\delta_{C_j,f}| \le |\delta_{C_{j-1},f}|$ for all $j$ such that $(C_0,\dots, C_{j-1}) $ is $\inparen{f, q^t\cdot \Lambda,j-1}$-smooth.
        \item Let $S_{\textrm{smooth}}(f)$ denote the event that $C_0,\dots, C_{k}$ is $\inparen{f, q^t\cdot \Lambda}$-smooth.
    \end{itemize}  

    Then \begin{equation}\label{eq:ConcentrationWantToProve}
        \PROver{C_0,\dots, C_k}{S_\textrm{$F$-smooth}(f)\text{ and } S_{\textrm{monotone}}(f) \textrm{ and \underline{not} } S_{\textrm{smooth}}(f)} \le q^{- \frac{1}{10} t^2 }.
    \end{equation}
\end{lemma}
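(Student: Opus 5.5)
The plan is to track, along the construction sequence, the deviation normalized by its ideal decay, and to show that reaching the non-smoothness threshold forces many unlikely ``bad'' steps. Each such step is made unlikely by \cref{lem:prob_step} combined with the event $S_\textrm{$F$-smooth}(f)$.

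\textbf{Setup.} Write $\delta_i:=\delta_{C_i,f}$ and define the potential
\[
    \phi_i:=\log_q\frac{|\delta_i|}{q^{k-i}\Lambda},
\]
with $\phi_i=-\infty$ if $\delta_i=0$. Then $\phi_0\le 0$, because $|\delta_0|=|f(0)-1|\le\|f\|_\infty\le q^k\Lambda$. Failure of $S_{\textrm{smooth}}(f)$ means that there is a first index $j$ with $\phi_j>t$. Up to time $j-1$ the sequence is $(f,q^t\Lambda,j-1)$-smooth, so on $S_{\textrm{monotone}}(f)$ no step before time $j$ increases $|\delta|$; in particular zeros persist. Since the denominator shrinks by a factor $q$ per step, a non-increasing step raises $\phi$ by at most $1$.

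\textbf{Good and bad steps.} Fix $\lambda:=(q^{t/(2k)}-1)/q$. This satisfies $\lambda\ge \frac{t\ln q}{2kq}$, so $\lambda^{-2}\le q^{2\log_q(kq/t)+O(1)}\le q^{2\log_q n+O(1)}$. Call step $i\to i+1$ \emph{good} if $|\delta_{i+1}|\le|\delta_i|(1/q+\lambda)$; such a step raises $\phi$ by at most $\log_q(1+q\lambda)=t/(2k)$. Call it \emph{bad} otherwise. Over at most $k$ steps, good steps raise $\phi$ by at most $t/2$ in total. Hence if $\phi$ climbs from $\le 0$ to $>t$, then after the last time $\phi\le t/4$ there must be at least about $t/4$ bad steps, each taken from a level $\phi_i\ge t/4$.

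\textbf{Probability of a bad step.} Condition on $C_i$, and assume $F$-smoothness holds up to $i$ and $\phi_i\ge s$. By \cref{lem:prob_step},
\[
    \Pr[\text{step $i$ bad}\mid C_i]\le\frac{\delta_{C_i,F_f}}{\lambda^2\delta_i^2}\le\frac{q^{k-i}\Gamma}{\lambda^2q^{2(k-i)}\Lambda^2q^{2s}}\le\frac{q^{-2s}}{\lambda^2},
\]
where the last step uses $\Lambda^2\ge\Gamma$ and $k-i\ge0$. To make this usable, I would replace $S_\textrm{$F$-smooth}(f)$ by the stopped event that $F$-smoothness holds up to step $i$, which is $C_i$-measurable; this makes the bound valid conditionally on the whole past. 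Applying it with $s=t/4$, I fix the positions of $\lceil t/4\rceil$ bad steps and chain the conditional bounds to get probability at most $(q^{-t/2}\lambda^{-2})^{t/4}$. A union bound over the at most $\binom{k}{t/4}\le n^{t/4}$ choices of positions then gives
\[
    q^{\frac t4\log_q n-\frac{t^2}{8}+\frac t2\log_q n+O(t)},
\]
and $t\ge20\log_q n$ absorbs the lower-order terms into $q^{-t^2/10}$.

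\textbf{Main obstacle.} The delicate part is the bookkeeping. I must ensure that the slack $\lambda$ is small enough that good steps cannot by themselves carry $\phi$ up to $t$, yet large enough that $\lambda^{-2}$ costs only $q^{O(\log_q n)}$ per bad step. I also need the conditional chaining to be legitimate despite conditioning on the global events; stopping at the first violation of either smoothness event handles this. If the constants do not close exactly, the fallback is to adjust the split thresholds, for example $t/3$ versus $t/4$, to recover the $1/10$ exponent.
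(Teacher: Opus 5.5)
Your argument follows the same plan as the paper's proof. You classify steps by whether $|\delta|$ drops by a factor $1/q+\lambda$, and use $S_{\mathrm{monotone}}$ to cap the gain of every other step. You bound each bad step taken from a high level by Chebyshev via Lemma~\ref{lem:prob_step} and $F$-smoothness, chain the conditional bounds, and union-bound over the positions of the bad steps. Your stopped-event treatment of the conditioning is the right way to make the chaining rigorous. The potential $\phi$ with a last exit from level $t/4$ is a tidy substitute for the paper's induction on $g_i$.

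The gap is the final accounting, which does not close with your parameters. You force about $t/4$ bad steps, each costing $q^{-t/2}\lambda^{-2}$, and obtain $q^{-t^2/8+\frac34 t\log_q n+O(t)}$. The hypothesis only gives $\log_q n\le t/20$, so the loss term can be as large as $\frac{3}{80}t^2$. The available slack is only $\frac{t^2}{8}-\frac{t^2}{10}=\frac{2}{80}t^2$. For $t$ near $20\log_q n$ and $k$ close to $n$ you get roughly $q^{-0.0875t^2}$, so the claim that ``$t\ge 20\log_q n$ absorbs the lower-order terms'' is false. If your fallback means raising the threshold to $t/3$ with the same good-step budget $t/2$, it does not help. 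It forces only $t/6$ bad steps and gives about $q^{-t^2/9+t^2/40}\approx q^{-0.086t^2}$.

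The parameter to change is the good-step budget. Shrinking $\lambda$ costs only a constant factor in $\lambda^{-2}$, but it increases the number of forced bad steps linearly in $t$. For instance:
\begin{itemize}
\item Take $\lambda=(q^{t/(8k)}-1)/q\ge t\ln q/(8kq)$. Good steps then gain at most $t/8$ in total, and $\lambda^{-2}\le q^{2\log_q n+2}$ once $t$ exceeds an absolute constant.
\item Use threshold $t/2$. After the last time $\phi\le t/2$, the potential must rise by more than $t/2$.
\item This forces more than $3t/8$ bad steps, all but possibly the first starting above level $t/2$. Each has conditional probability at most $q^{-t}\lambda^{-2}$.
\item The union bound gives $q^{-(3t/8-1)(t-3\log_q n-2)}\le q^{-(3t/8-1)(0.85t-2)}\le q^{-t^2/10}$.
\end{itemize}
This is essentially the paper's calibration. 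There, $\lambda=t/(4qk)$ caps the total good-step gain at $t/(4\ln q)\le 0.37t$, and bad steps must start at level about $t/2$. That yields at least $0.13t$ bad steps at cost about $q^{-t}$ each, leaving margin $0.13\cdot(1-\tfrac{3}{20})\approx 0.11>0.1$.
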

\begin{proof}[Proof of \cref{lem:fSmoothOrIncreasing}]
   The event $S_{\textrm{smooth}}(f)$ holds if and only if $|\delta_{C_i,f}| \le q^{k-i+t} \cdot \Lambda$ for all $0\le i\le k$.

Suppose that $S_{\textrm{monotone}}(f)$ holds but $S_{\textrm{smooth}}(f)$ fails. Let $0\le \ell \le k$ be the minimum index for which $C_0,\dots, C_k$ is not $(f,q^t\Lambda,\ell)$-smooth In particular, 
\begin{equation}\label{eq:NotSmooth}
    q^{k-\ell+t} \Lambda < |\delta_{C_\ell,f}|\eperiod
\end{equation}
For a given step $i \in [\ell]$ in the construction sequence, we make the following classification where $\lambda=\frac{t}{4qk}$:
    \begin{enumerate}
        \item The $i$-th step is \deffont{horrible} if $|\delta_{C_i,f}|>|\delta_{C_{i-1},f}|$;

        \item The $i$-th step is \deffont{good} if either $|\delta_{C_i,f}|\leq |\delta_{C_{i-1},f}|(\frac{1}{q}+\lambda)$ or $|\delta_{C_i,f}|\leq |\delta_{C_{i-1},f}|\leq q^k \Lambda \inparen{\frac{1}{q}+\lambda}^{i-t/2}$;

        \item The $i$-th step is \deffont{bad} if it is not horrible, $|\delta_{C_{i-1},f}|>q^k \Lambda (\frac{1}{q}+\lambda)^{i-t/2}$, and $|\delta_{C_i,f}|>|\delta_{C_{i-1},f}|(\frac{1}{q}+\lambda)$.
        In other words, the $i$-th step is bad if it is neither horrible nor good.

    \end{enumerate}
    First, note that if the event $S_{\textrm{monotone}}(f)$ holds then no horrible steps may occur up to and including the $\ell$-th step. 
    Now, for $0\le i\le \ell$ let $g_i$ denote the number of good steps from among the steps $\{1,\dots, i\}$. We claim that if no step is horrible then 
    \begin{equation}\label{eq:GoodStepsGeneral}
        |\delta_{C_i,f}| \leq q^k \Lambda \left(\frac{1}{q}+\lambda\right)^{g_i - t/2} \eperiod
    \end{equation} 
    We prove this by induction on $i$.
    This is clearly true for $i=0$, since $\delta_{C_0,f}\leq q^k \Lambda$.
    Now, suppose \cref{eq:GoodStepsGeneral} holds for $i$.
    We will show that it also holds for $i+1$.
    If the $(i+1)$-st step is bad then $g_{i+1}=g_i$, and so in this case 
    \begin{equation*}
        |\delta_{C_{i+1},f}|\leq |\delta_{C_i,f}|\leq q^k \Lambda \left(\frac{1}{q}+\lambda\right)^{g_i - t/2}=q^k \Lambda \left(\frac{1}{q}+\lambda\right)^{g_{i+1} - t/2}\eperiod
    \end{equation*}
    On the other hand, if the $(i+1)$-st step is good then $g_{i+1}=g_i+1\leq i+1$.
    We now consider two cases.
    \begin{itemize}
        \item If this step is good because 
        \begin{equation*}
            |\delta_{C_{i+1},f}|\leq |\delta_{C_{i},f}|\leq q^k \Lambda 
            \inparen{\frac{1}{q}+\lambda}^{i+1-t/2}\ecomma
        \end{equation*}
        then we also get $|\delta_{C_{i+1},f}|\leq q^k \Lambda (\frac{1}{q}+\lambda)^{g_{i+1}-t/2}$ since $g_{i+1}\leq i+1$.

        \item If this step is good because $|\delta_{C_{i+1},f}|\leq |\delta_{C_{i},f}|(\frac{1}{q}+\lambda)$, then
        \begin{equation*}
            |\delta_{C_{i+1},f}|\leq q^k \Lambda \inparen{\frac{1}{q}+\lambda}^{g_{i}-t/2+1} = q^k \Lambda \inparen{\frac{1}{q}+\lambda}^{g_{i+1}-t/2}, 
        \end{equation*}
        since $g_{i+1}=g_i+1$.
    \end{itemize}
    In both cases we get the desired upper bound on $|\delta_{C_{i+1},f}|$, proving \cref{eq:GoodStepsGeneral}.
    
We turn to show that the event combination $S_{\textrm{monotone}}(f)$ and $\neg S_{\textrm{smooth}}(f)$ implies an upper bound on the number of good steps $g_\ell$. 
Let $c = 1 - \log_q\inparen{1 + \frac{t}{4k}}$. Since $q \ge 2$, we strictly have $c > 0$. We can rewrite our step multiplier exactly as $\frac{1}{q} + \lambda = q^{-c}$. By \cref{eq:GoodStepsGeneral}, $|\delta_{C_\ell,f}|\leq q^k \Lambda q^{-c(g_{\ell}-t/2)}$. Therefore:
$$ q^{k-\ell+t} \Lambda < |\delta_{C_\ell,f}| \le q^{k - c(g_\ell-t/2)} \Lambda \ecomma $$where the left-hand inequality is due to \cref{eq:NotSmooth}. Comparing exponents yields:
$$ c(g_\ell-t/2) < \ell - t \implies g_\ell < \frac{\ell - t}{c} + \frac{t}{2} \eperiod $$

Let $b_\ell$ be the number of bad steps up to step $\ell$, so $g_\ell = \ell - b_\ell$. Combining these inequalities and isolating $b_\ell$:
$$ \ell - b_\ell < \frac{\ell - t}{c} + \frac{t}{2} \implies b_\ell > \ell\left(1 - \frac{1}{c}\right) + t\left(\frac{1}{c} - \frac{1}{2}\right) \eperiod $$

Since $c < 1$, the term $(1 - \frac{1}{c})$ is negative. To find the minimum required bad steps $\tau$, we minimize the right-hand side by setting $\ell = k$. We explicitly set:
 $$ \tau = \max\inset{0, \left\lfloor k\left(1 - \frac{1}{c}\right) + t\left(\frac{1}{c} - \frac{1}{2}\right) \right\rfloor} \eperiod $$
    
    We have shown that the conjunction of events $S_{\textrm{monotone}}(f)\land \neg S_{\textrm{smooth}}(f)$ implies that there exists some failure step $\ell \in [\tau, k]$ such that $b_\ell > \tau$, and crucially, the sequence is monotone up to step $\ell-1$. 

For $0\le i\le k$, let $T_i$ denote the event that $|\delta_{C_j, F_f}| \le q^{k-j}\cdot \Gamma$ for all $1\le j\le i$. Note that the event $S_{\textrm{$F$-smooth}}(f)$ directly implies $T_k$.
For an increasing sequence $1\le i_1 < i_2 < \dots < i_\tau \le k$, let $W_{i_1,\dots,i_\tau}$ denote the event that the steps $i_1,\dots, i_\tau$ are bad \textbf{and the sequence is monotone up to step $i_\tau-1$}.

Therefore, the left-hand side of \cref{eq:ConcentrationWantToProve} is bounded from above by the union over all such failing sequences:
$$\PROver{C_0,\dots, C_k}{S_\textrm{$F$-smooth}(f)\text{ and } \bigcup_{i_1<\dots<i_\tau} W_{i_1,\dots,i_\tau}}\eperiod$$
We bound this by summing over all sequences of $\tau$ bad steps:
\begin{align*}
\PROver{C_0,\dots, C_k}{S_\textrm{$F$-smooth}(f)\text{ and } \bigcup_{i_1<\dots<i_\tau} W_{i_1,\dots,i_\tau}}&\le \sum_{i_1,\dots,i_\tau} \Pr [W_{i_1,\dots,i_\tau} \cap T_k]  \\= \sum_{i_1,\dots,i_\tau} \prod_{r=1}^\tau \Pr_{u_{i_r}} [ W_{i_1,\dots,i_r} \cap T_{i_r} \mid W_{i_1,\dots,i_{r-1}} \cap T_{i_{r-1}}]& \text{(conditioning fixes $C_{i_r-1}$, leaving only $u_{i_r}$ as random)}  \\
    \le \sum_{i_1,\dots,i_\tau} \prod_{r=1}^\tau \Pr_{u_{i_r}} [ W_{i_1,\dots,i_r} \mid W_{i_1,\dots,i_{r-1}} \cap T_{i_{r-1}} ]  \eperiod \numberthis \label{eq:ManyBadStepsBoundGeneral}
\end{align*}

    To bound the probability $\Pr_{u_{i_r}} [ W_{i_1,\dots,i_r} \mid W_{i_1,\dots,i_{r-1}} \cap T_{i_{r-1}} ]$, fix a code $C_{i_r-1}$ for which the conditioning holds. If $\delta_{C_{i_r-1},f}=0$ then the $i_r$-th step cannot be bad, so the probability of $W_{i_1,\dots,i_r}$ under the conditioning is $0$. Suppose now that $\delta_{C_{i_r-1},f}\ne 0$. Then, \cref{lem:prob_step} yields
    $$\PROver{u_{i_r}}{\text{step } i_r \text{ is bad} \mid C_{i_r-1}} \le \frac{\inparen{\frac{q-1}{q}}^2 |\delta_{C_{i_r-1}, F_f}|}{\lambda^2 \cdot \delta_{C_{i_r-1},f}^2} \eperiod$$

    The event $T_{i_r-1}$ yields an upper bound of $q^{k-(i_r-1)} \Gamma = q^{k-i_r+1} \Gamma$ on the numerator. Since $\Lambda = \max\inset{\norm{f}_\infty \cdot q^{-k}, \sqrt \Gamma}$ and $\norm{f}_\infty  \ge |\delta_{C_0,f}|$, we have $\Gamma \le \Lambda^2$, so the numerator is at most $q^{k-i_r+1} \Lambda^2$.
   For the denominator, we need a lower bound on $|\delta_{C_{i_r-1},f}|$. By definition, the $i_r$-th step being bad implies that
$$ |\delta_{C_{i_r-1},f}|>q^k \Lambda\inparen{\frac{1}{q}+\lambda}^{i_r-t/2} \eperiod $$

Note that we must have $i_r \ge t/2$. If $i_r < t/2$, the multiplier $(\frac{1}{q}+\lambda)^{i_r-t/2} > 1$, which would require $|\delta_{C_{i_r-1},f}| > q^k \Lambda$. This is impossible because the event $W_{i_1,\dots,i_r}$ guarantees monotonicity up to $i_r-1$, ensuring $|\delta_{C_{i_r-1},f}| \le |\delta_{C_0,f}| \le \norm{f}_\infty \le q^k \Lambda$.

Since $i_r \ge t/2$ and $1/q+\lambda > 1/q$, we can safely lower-bound the value:
$$ |\delta_{C_{i_r-1},f}| > q^k \Lambda \inparen{\frac{1}{q}}^{i_r-t/2} = q^{k-i_r+t/2} \Lambda \eperiod $$
Squaring this lower bound, the denominator is strictly greater than $\lambda^2 q^{2k - 2i_r + t} \Lambda^2$. Therefore,
    \begin{align*}
    \Pr_{u_{i_r}} [ W_{i_1,\dots,i_r} \mid W_{i_1,\dots,i_{r-1}} \cap T_{i_{r-1}} ] &\le \frac{\inparen{\frac{q-1}{q}}^2 \cdot q^{k-i_r+1} \Lambda^2}{\lambda^2 \cdot q^{2k-2i_r+t} \Lambda^2} \\
    &\le \frac{1}{\lambda^2} q^{-k + i_r + 1 - t} \\
    &\le \frac{q}{\lambda^2} q^{-t} \quad \text{ (since } i_r \le k \text{)} \eperiod
    \end{align*}

Returning to \cref{eq:ManyBadStepsBoundGeneral}, we substitute this uniform bound for all $\tau$ bad steps. Recalling our exact substitution $\lambda = \frac{t}{4qk}$:
    \begin{align*}
    \PROver{C_0,\dots, C_k}{S_\textrm{$F$-smooth}(f)\text{ and } S_{\textrm{monotone}}(f) \textrm{ and \underline{not} } S_{\textrm{smooth}}(f)} 
    &\le \sum_{i_1,\dots,i_\tau} \Pr [W_{i_1,\dots,i_\tau} \cap T_k] \\
    &\le \sum_{i_1,\dots,i_\tau} \prod_{r=1}^\tau \inparen{\frac{16 q^3 k^2}{t^2} q^{-t}} \\
    &\le \binom{k}{\tau} \inparen{\frac{16 q^3 k^2}{t^2} q^{-t}}^\tau \eperiod
    \end{align*}

    We bound the binomial coefficient using the standard inequality $\binom{k}{\tau} \le k^\tau$. Substituting this into our probability bound isolates the exponent:
    \begin{align*}
        \PROver{C_0,\dots, C_k}{S_\textrm{$F$-smooth}(f)\text{ and } S_{\textrm{monotone}}(f) \textrm{ and \underline{not} } S_{\textrm{smooth}}(f)} &\le k^\tau \inparen{\frac{16 q^3 k^2}{t^2} q^{-t}}^\tau \\
        &= \inparen{\frac{16 q^3 k^3}{t^2} q^{-t}}^\tau \\
        &= q^{-\tau \inparen{t - \log_q\inparen{\frac{16 q^3 k^3}{t^2}}}} \eperiod
    \end{align*}

    Now we bound $\tau$. Since $q \ge 2$, we have $\ln q \ge \ln 2 \approx 0.693$. Using the standard bound $\log_q(1+x) \le \frac{x}{\ln q}$, let $L = \log_q\inparen{1 + \frac{t}{4k}}$. We have $L \le \frac{t}{4k \ln q} \le \frac{t}{2.77 k}$. For valid fractional parameters where $t \le k$, we have $L \le 0.36 < 1$, ensuring $c = 1 - L > 0$. Substitute $c = 1-L$ into the required steps condition:
    \begin{align*}
        \tau \ge \inparen{1 - \frac{1}{1-L}} + t\inparen{\frac{1}{1-L} - \frac{1}{2}} &= k\inparen{\frac{-L}{1-L}} + t\inparen{\frac{1}{1-L} - \frac{1}{2}} \\
        &= \frac{1}{1-L} \inparen{-kL + t - \frac{t(1-L)}{2}} \\
        &\ge \frac{1}{1-L} \inparen{-\frac{t}{4\ln q} + \frac{t}{2} + \frac{tL}{2}} \\
        &\ge t \inparen{\frac{1}{2} - \frac{1}{4\ln 2}} \ge 0.13 t \eperiod
    \end{align*}
    Therefore, $\tau \ge \lfloor 0.13 t \rfloor \ge 0.1 t$ for sufficiently large $t$.

    Substituting this simplified lower bound into our probability inequality yields the final dependence:
    \begin{align*}
        \PROver{C_0,\dots, C_k}{S_\textrm{$F$-smooth}(f)\text{ and } S_{\textrm{monotone}}(f) \textrm{ and \underline{not} } S_{\textrm{smooth}}(f)} &\le q^{-0.13t \inparen{t - \log_q\inparen{\frac{16 q^3 k^3}{t^2}}}} \\
        &\le q^{-0.1t^2}= q^{- \Omega(t^2)} \eperiod 
    \end{align*}
    which completes the proof.
\end{proof}
We now show that smoothness of $F_f$ implies smoothness for $f$ with very high probability. In \cref{lem:ConcentratedSmooth} we showed this implication ``modulo'' the possibility of failure due to horrible steps. Below we deal with this obstacle: we show that if $f$ fails to be smooth then there exists some translation  $f_z$ that fails to be smooth but also avoids horrible steps. The claim then follows from \cref{lem:fSmoothOrIncreasing} via a union bound on all translations of $f$.
\begin{lemma} \label{lem:fSmooth}
In the setting of \cref{lem:fSmoothOrIncreasing}, 
$$\PROver{C_0,\dots, C_k}{S_\textrm{$F$-smooth}(f) \textrm{ and \underline{not} } S_{\textrm{smooth}}(f)} \le q^n\cdot q^{-0.1t^2}\eperiod$$
\end{lemma}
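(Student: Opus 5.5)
The plan is to reduce to \cref{lem:fSmoothOrIncreasing} by a union bound over all $q^n$ translations $f_{z'}$. The obstruction in that lemma is the possible occurrence of horrible steps, so the goal is to show: whenever $S_{\textrm{$F$-smooth}}(f)$ holds and $S_{\textrm{smooth}}(f)$ fails, there is some $z'\in\F_q^n$ (depending on the random sequence) such that $S_{\textrm{$F$-smooth}}(f_{z'})$, $S_{\textrm{monotone}}(f_{z'})$ and $\neg S_{\textrm{smooth}}(f_{z'})$ all hold. Given this, the event in question is contained in $\bigcup_{z'}\bigl(S_{\textrm{$F$-smooth}}(f_{z'})\cap S_{\textrm{monotone}}(f_{z'})\cap\neg S_{\textrm{smooth}}(f_{z'})\bigr)$. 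Applying \cref{lem:fSmoothOrIncreasing} to each $f_{z'}$, with the same $\Gamma$ and $t$, gives the bound $q^n\cdot q^{-0.1t^2}$.

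Two preliminary observations make this legitimate. First, by \cref{lem:translation_invariance}, $F_{f_{z'}}=F_f$, so $S_{\textrm{$F$-smooth}}(f_{z'})=S_{\textrm{$F$-smooth}}(f)$. Second, $\|f_{z'}\|_\infty=\|f\|_\infty$, so the parameter $\Lambda$ is identical for every translate, and hence so are the smoothness thresholds $q^{k-i+t}\Lambda$. Third, I will use the invariance $\delta_{C,g_c}=\delta_{C,g}$ whenever $c\in C$, since $C+c=C$. Fourth, I will use the coset decomposition: if $u_j\notin C_{j-1}$, then $C_j=\bigsqcup_{a\in\F_q}(C_{j-1}+au_j)$, and therefore
\[
\delta_{C_j,g}=\frac1q\sum_{a\in\F_q}\delta_{C_{j-1},g_{au_j}} .
\]
Consequently some $a$ satisfies $|\delta_{C_{j-1},g_{au_j}}|\ge|\delta_{C_j,g}|$. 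If instead $u_j\in C_{j-1}$, then $C_j=C_{j-1}$ and we simply take $a=0$.

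Now the construction of the witness. Let $\ell$ be the first index at which $f$ violates smoothness, so that $|\delta_{C_\ell,f}|>q^{k-\ell+t}\Lambda$. Set $g_\ell=f$. For $j=\ell,\ell-1,\dots,1$, choose $a_j$ as above and set $g_{j-1}=(g_j)_{a_ju_j}$, so that $|\delta_{C_{j-1},g_{j-1}}|\ge|\delta_{C_j,g_j}|$. Then $g_0=f_{z'}$ with $z'=\sum_{j\le\ell}a_ju_j$.

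Moreover $g_0$ differs from $g_i$ by a translation by a vector of $C_i$. By the invariance above, $\delta_{C_i,g_0}=\delta_{C_i,g_i}$ for every $i\le\ell$. Putting these together gives $|\delta_{C_{i-1},g_0}|\ge|\delta_{C_i,g_0}|$ for all $i\le\ell$, so no horrible step occurs up to $\ell$. It also gives $|\delta_{C_\ell,g_0}|=|\delta_{C_\ell,f}|>q^{k-\ell+t}\Lambda$, so $g_0$ is not smooth.

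It remains to check $S_{\textrm{monotone}}(g_0)$. Its requirement concerns steps $j$ for which the sequence is $(g_0,q^t\Lambda,j-1)$-smooth. Since $g_0$ fails smoothness at index $\ell$, every such $j$ satisfies $j\le\ell$, and at those steps monotonicity was just established. Hence the witness exists, and the union bound concludes.

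The main subtlety is that the witness $z'$ depends on $u_1,\dots,u_\ell$. This is harmless, because we union-bound over all $q^n$ fixed translates, and \cref{lem:fSmoothOrIncreasing} is applied to each fixed $f_{z'}$ separately.
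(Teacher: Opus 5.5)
Your proposal is correct and follows essentially the same route as the paper. Both use a backward greedy choice of coset shifts via the averaging identity $\delta_{C_{i+1},g}=\frac1q\sum_{a\in\F_q}\delta_{C_i,g_{au_{i+1}}}$, then invariance of $\delta_{C_i,\cdot}$ under translation by vectors of $C_i$ to collapse onto a single translate $f_{z}$ that is monotone up to the failure time and still fails there, and finally a union bound over the $q^n$ translates using $F_{f_z}=F_f$ and $\|f_z\|_\infty=\|f\|_\infty$. Your separate treatment of the case $u_j\in C_{j-1}$ is harmless but unnecessary, since the averaging identity holds trivially there too.
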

\begin{proof}
    We first prove the following implication:
    \begin{equation}\label{eq:ExistenceOfSmooth}
        \textrm{If }\textrm{\underline{not} }S_{\textrm{smooth}}(f)\quad\textrm{then}\quad\exists z\in \F_q^n~~S_{\textrm{monotone}}(f_z) \textrm{ and \underline{not} } S_{\textrm{smooth}}(f_z)\eperiod
    \end{equation}
    
    Assume $S_{\textrm{smooth}}(f)$ fails at a first time $0\le j\le k$, meaning $\inabs{\delta_{C_j,f}} > q^{k-j+t}\Lambda$, where $\Lambda = \max\inset{\norm{f}_\infty q^{-k}, \sqrt \Gamma}$. 
    
    We construct a sequence of vectors $z_j, \dots, z_0$ backward, starting with $z_j = 0$. For $0 \le i < j$, after choosing $z_{i+1}$, we choose $z_i = z_{i+1} + \gamma_i u_{i+1}$ with $\gamma_i \in \F_q$ maximizing
    \[
        \inabs{\delta_{C_i, f_{z_{i+1}+\gamma u_{i+1}}}} \eperiod
    \]
    Since $C_{i+1}=C_i+\spn\{u_{i+1}\}$, we can express the deviation over the larger code as an average over the $q$ cosets:
    \[
        \delta_{C_{i+1},f_{z_{i+1}}} = \frac1q\sum_{\gamma\in\F_q} \delta_{C_i,f_{z_{i+1}+\gamma u_{i+1}}} \eperiod
    \]
    Because the absolute value of an average is bounded by its maximum component, our choice of $\gamma_i$ implies:
    \[
        \inabs{\delta_{C_i,f_{z_i}}} \ge \inabs{\delta_{C_{i+1},f_{z_{i+1}}}} \eperiod
    \]
    
    Next, setting $z = z_0$, we replace the moving translations $f_{z_i}$ with the single fixed translate $f_z$. For every $0\le i\le j$, we have $z-z_i\in C_i$ by construction. Hence, translating by $z-z_i$ preserves the subspace $C_i$, and therefore:
    \[
        \delta_{C_i,f_z}=\delta_{C_i,f_{z_i}} \eperiod
    \]
    Together with the previous bound, this yields:
    \[
        \inabs{\delta_{C_i,f_z}} \le \inabs{\delta_{C_{i-1},f_z}}
    \]
    for all $1 \le i \le j$. Since $S_{\textrm{smooth}}(f_z)$ already fails at time $j$, the definition of $S_{\textrm{monotone}}(f_z)$ imposes no conditions after the first failure time, so the required monotonicity is fully satisfied.

    Furthermore, this fixed translation inherits the original failure at time $j$. Since $z_j=0$, it follows that $z-z_j = z \in C_j$. Thus:
    \[
        \inabs{\delta_{C_j,f_z}} = \inabs{\delta_{C_j,f}} > q^{k-j+t}\Lambda \implies \neg S_{\textrm{smooth}}(f_z) \ecomma
    \]
    where we use the fact that $\norm{f_z}_\infty = \norm{f}_\infty$, ensuring the threshold $\Lambda$ remains identical.
    
    This establishes \cref{eq:ExistenceOfSmooth}. The lemma then follows by applying a union bound:
    \begin{align*}
    &\PROver{C_0,\dots,C_k}{S_\textrm{$F$-smooth}(f)\textrm{ and \underline{not} } S_{\textrm{smooth}}(f)} \\
    &\le \PROver{C_0,\dots,C_k}{\exists z\in \F_q^n~~S_\textrm{$F$-smooth}(f_z)\textrm{ and }S_{\textrm{monotone}}(f_z)\textrm{ and \underline{not} } S_{\textrm{smooth}}(f_z)} \\
    &\le \sum_{z\in \F_q^n}\PROver{C_0,\dots,C_k}{S_\textrm{$F$-smooth}(f_z)\textrm{ and }S_{\textrm{monotone}}(f_z)\textrm{ and \underline{not} } S_{\textrm{smooth}}(f_z)} \\
    &\le q^{n}\cdot q^{-0.1t^2} \eperiod
    \end{align*}
    The first inequality uses \cref{eq:ExistenceOfSmooth} and the fact that $F_f = F_{f_z}$ implies $S_\textrm{$F$-smooth}(f) = S_\textrm{$F$-smooth}(f_z)$. The final bound follows directly from \cref{lem:fSmoothOrIncreasing}.
\end{proof}

\subsubsection{Proof of \cref{thm:singleFunction}}
We now prove \cref{thm:singleFunction}, which controls the deviation for all translations of a single function $f$. The core insight is that every translation $f_z$ shares the exact same averaged convolution $F_f$. By securing a global smoothness bound for this single shared $F_f$, we use \cref{lem:fSmooth} to bound the probability to $q^{-\Omega(n^2)}$, then using union bound over all $q^n$ translations.
\begin{proof}[Proof of \cref{thm:singleFunction}]
    Let $c=\frac \eta 3$. Let $\Gamma = q^{-c n}$. Let $S_{\textrm{$F$-smooth}}(f)$ be the event that the construction sequence $C_0,\dots,C_k$ is $(F_f, \Gamma)$-smooth. This requires $\inabs{\delta_{C_i, F_f}} \le q^{k-i}\Gamma$ for all $0 \le i \le k$. The lower bound $\delta_{C_i, F_f} \ge -q^{k-i}\Gamma$ holds deterministically since 
    $$\delta_{C_i,F_f} = \frac{F_f(C)}{|C|}-1 = \widehat{F_f}(C^\perp)-1 = \sum_{y\in C^\perp\setminus \{0\}} \widehat{F_f}(y) \ge 0\eperiod$$
    Here, the second transition is Parseval's identity, the third is since $F_f$ is normalized so $\widehat{F_f}(0)=1$, and the inequality is because \cref{lem:F_fFourier} implies $\widehat{F_f}(y) \ge 0$ for all $y$.
    
Applying \cref{lem:randomCodeConcentration} to $C_i$ with $\Delta =q^{k-i}\Gamma$ yields:
    \[
        \PR{\delta_{C_i, F_f} > q^{k-i}\Gamma} \le \frac{q-1}{q^i \inparen{q^{k-i}\Gamma - |F_f(0)-1|/q^i}^2} \norm{F_f}_2^2 + \frac{q^{i-n}}{q-1} \le \frac{4(q-1)q^{\beta n}}{q^{2k-i}\Gamma^2} + \frac{q^{i-n}}{q-1} \eperiod
    \]
    where the second inequality comes from $q^{k-i}\Gamma = q^{\beta n - i + 2\eta n / 3}$ tightly dominating the subtraction, $|F_f(0)-1|/q^i \le q^{\beta n - i}$, and substituting $\norm{F_f}_2^2 \le q^{\beta n}$. Then we use union bound for $S_{\textrm{$F$-smooth}}(f)$:
    \begin{align*}
        \PR{\neg S_{\textrm{$F$-smooth}}(f)} &\le \sum_{i=0}^k \inparen{ \frac{4(q-1)q^{\beta n}}{q^{2k-i}\Gamma^2} + \frac{q^{i-n}}{q-1} } \\
        &\le \frac{4 q^{\beta n + 1}}{q^k \Gamma^2} + \frac{q^{k-n+1}}{(q-1)^2} \\
        &= 4q \cdot q^{-\eta n / 3} + \frac{q^{k-n+1}}{(q-1)^2} \eperiod
    \end{align*}

We turn to bound $\PR{S_{\textrm{smooth}}(f_z)}$. Let $t = \theta n$ for $\theta = \eta/12$. By the hypothesis $\eta n \ge 240\log_q n$, we have $t = \eta n/12 \ge 20\log_q n$, and $t = \eta n/12 \le Rn = k$ since $\eta < R$. Hence the condition $20\log_q(n) \le t \le k$ required by \cref{lem:fSmoothOrIncreasing,lem:fSmooth} is satisfied. For each $z \in \F_q^n$, define $S_{\textrm{smooth}}(f_z)$ as the event that the sequence is $(f_z, \Gamma')$-smooth, where
    \[ 
        \Gamma' = q^t \max\inset{\norm{f}_\infty q^{-k}, \sqrt{\Gamma}} \eperiod 
    \]
    Because $F_{f_z} = F_f$, the event $S_{\textrm{$F$-smooth}}(f_z)$ is equivalent to $S_{\textrm{$F$-smooth}}(f)$. By \cref{lem:fSmooth},
    \begin{equation*}
        \PR{S_{\textrm{$F$-smooth}}(f) \cap \neg S_{\textrm{smooth}}(f_z)} \le q^{n - 0.1t^2} \eperiod
    \end{equation*}
    
    We can bound the probability that any translation fails to be smooth by conditioning on the global smoothness of $F_f$:
    \begin{align*}
        \PR{\exists z \in \F_q^n ~~ \neg S_{\textrm{smooth}}(f_z)} 
        &\le \PR{\inparen{\exists z \in \F_q^n ~~ \neg S_{\textrm{smooth}}(f_z)} \cap S_{\textrm{$F$-smooth}}(f)} + \PR{\neg S_{\textrm{$F$-smooth}}(f)} \\
        &\le \sum_{z \in \F_q^n} \PR{\neg S_{\textrm{smooth}}(f_z) \cap S_{\textrm{$F$-smooth}}(f)} + \PR{\neg S_{\textrm{$F$-smooth}}(f)} \\
        &\le q^n \cdot q^{n - 0.1t^2} + 4q \cdot q^{-\eta n / 3} + \frac{q^{k-n+1}}{(q-1)^2} \\
        &\le q^{2n - \frac{\eta^2 n^2}{1440}} + 4q \cdot q^{-\eta n / 3} + \frac{q^{k-n+1}}{(q-1)^2} \eperiod
    \end{align*}

    To conclude, calculate $\Gamma'$. Since $\norm{f}_\infty = q^{\beta n}$, yielding $\norm{f}_\infty q^{-k} = q^{\beta n - Rn} = q^{-\eta n}$. We then get $\sqrt{\Gamma} = q^{-\eta n / 6}$. Substituting these bounds:
    \begin{align*}
        \Gamma' &\le q^{\theta n} \max\inset{q^{-\eta n}, q^{-\eta n / 6}} \\
        &= q^{\frac{\eta n}{12}} \cdot q^{-\frac{\eta n}{6}} \\
        &= q^{-\frac{\eta n}{12}} \eperiod
    \end{align*}
    By setting $\eps = \eta / 12 > 0$, we achieve $\max_{z \in \F_q^n} \inabs{\delta_{C_k, f_z}} \le q^{-\eps n}$, which completes the proof.

\end{proof}

\subsubsection{Proof of \cref{thm:mainGeneral}}
We are now ready to prove \cref{thm:mainGeneral}. The proof analyzes a sequence of functions $f^{(0)}, \dots, f^{(d)}$ generated by repeated averaged convolution, starting with $f^{(0)} = f$. Because each convolution squares the concentration parameter, the top iterate $f^{(d)}$ is highly Fourier-concentrated, allowing its smoothness to be bounded by the code's dual-typicality. We then work backward, inductively applying \cref{lem:fSmooth} to transfer this smoothness step-by-step from $f^{(m)}$ down to the base function $f^{(0)}$.
\begin{proof}[Proof of \cref{thm:mainGeneral}] 

    Let $S_\textrm{typical}$ denote the event that $C_0,\dots,C_k$ is $\gamma$-dual-typical where $\gamma = q^{\frac {\eta n}{100}}$ and $k = Rn$. By \cref{lem:typical}, 
    $$\PR{S_{\textrm{typical}}} \ge 1-\frac{n(k+1)}{\gamma} = 1 - n(Rn+1)q^{-\frac{\eta n}{100}} \eperiod$$

    Choose $d = \max\inset{0, \left\lceil \log_2 \left( \frac{\log\inparen{(q-1)/(q^{\eta/100}-1)}}{\log(1/\alpha)} \right) \right\rceil}$, so that $1+(q-1)\alpha^{2^d} \le q^{\frac{\eta}{100}}$. 
    
    Fix $f\in \cF$. We claim that 
    \begin{equation}\label{eq:fProbablyGood}
        \PR{S_{\textrm{typical}}\textrm{ and }\inabs{\delta_{C,f}} > q^{-\eps n}} \le d \cdot q^{n - 0.1t^2}\eperiod
    \end{equation}
    Observe that \cref{eq:fProbablyGood} implies the theorem since
    \begin{align*}
    \PR{\exists f\in  \cF \mid \inabs{\delta_{C,f}} > q^{-\eps n}} &\le \PR{\textrm{not } S_{\textrm{typical}}} + \PR{S_{\textrm{typical}}\textrm{ and } \exists f\in \cF ~~ \inabs{\delta_{C,f}} > q^{-\eps n}} \\
    &\le \PR{\textrm{not } S_{\textrm{typical}}} + \sum_{f\in \cF} \PR{S_{\textrm{typical}}\textrm{ and } \inabs{\delta_{C,f}} > q^{-\eps n}} \\
    &\le n(Rn+1)q^{-\frac{\eta n}{100}} + |\cF|\cdot d \cdot q^{n - 0.1t^2}
    \end{align*}

    We turn to proving \cref{eq:fProbablyGood}. 
    Define the sequence $f^{(0)} := f$, and $f^{(m)} := F_{f^{(m-1)}}$ for $m \ge 1$. Applying \cref{lem:concentration_F_f} we get that $f^{(m)}$ is $\alpha^{2^m}$-Fourier-concentrated. 

    Assuming $S_{\textrm{typical}}$, \cref{lem:ConcentratedSmooth} implies $C_0,\dots,C_k$ is $(f^{(d)}, \Gamma_d)$-smooth, with:
    $$ \Gamma_d = q^{-k} \cdot \gamma \cdot \inparen{1 + (q-1)\alpha^{2^d}}^n \le q^{-k + \frac{\eta n}{100} + \frac{\eta n}{100}} = q^{-k + \frac{\eta n}{50}} \eperiod $$
    Define the sequence $\Gamma_d \dots \Gamma_0$ where for all $1 \le m\le d$:
    \begin{equation}\label{definition_G}
        \Gamma_{m-1}=q^t\max\left\{\|f^{(m-1)}\|_\infty q^{-k},\sqrt{\Gamma_m}
\right\}.
    \end{equation}

    Define the event $T_{\textrm{smooth}}(f,i)$ as $C_0 \dots C_k$ being $(f^{(i)},\Gamma_i)$-smooth. Let $T_{\textrm{smooth}}(f)$ be the event that $T_{\textrm{smooth}}(f,i)$ holds for all $0\le i\le d$. Applying \cref{lem:fSmooth} to $f^{(m-1)}$ with $\Gamma = \Gamma_{m}$ yields
    $$ \PR{T_{\textrm{smooth}}(f,m) \cap \neg T_{\textrm{smooth}}(f,m-1)} \le q^{n - 0.1t^2} \eperiod $$

    Therefore, 
    $$\PR{S_{\textrm{typical}}\textrm{ and }\inabs{\delta_{C,f}} > \Gamma_0} \le \PR{S_{\textrm{typical}}\textrm{ and \underline{not} }T_{\textrm{smooth}}(f,0)}\eperiod$$

    Suppose that the event on the right-hand side holds, and let $0\le m\le d$ be the maximum index for which $T_{\textrm{smooth}}(f,i)$ \emph{does not} hold. Recall that $m\ne d$ since $S_{\textrm{typical}}$ implies $T_{\textrm{smooth}}(f,d)$. Thus,
    \begin{align*}
        \PR{S_{\textrm{typical}}\textrm{ and \underline{not} }T_{\textrm{smooth}}(f,0)} 
        &\le \PR{\exists~ 0\le m\le d-1 ~~ T_{\textrm{smooth}}(f,m+1) \textrm{ and \underline{not} } T_{\textrm{smooth}}(f,m)} \\
        &\le \sum_{m=0}^{d-1} \PR{T_{\textrm{smooth}}(f,m+1) \textrm{ and \underline{not} } T_{\textrm{smooth}}(f,m)} \\
        &\le \sum_{m=1}^d q^{n - 0.1t^2} \\
        &= d \cdot q^{n - 0.1t^2} \eperiod
    \end{align*}

    Now, we bound the error term $\Gamma_0$. By definition, $C_k$ is $(f^{(0)}, \Gamma_0)$-smooth. Setting $i=k$, we obtain:
    \begin{align*}
        \inabs{\delta_{C_k, f}} \le q^{k-k} \cdot \Gamma_0 = \Gamma_0 \eperiod
    \end{align*}

Unfolding the recurrent definition \cref{definition_G} and setting $A_j=\norm{f^{(j)}}_\infty\cdot q^{-k}$:
    \begin{align*}
        \Gamma_0 &= q^t \max\inset{A_0, \Gamma_1^{1/2}} \\
            &= q^t \max\inset{A_0, \inparen{q^t \max\inset{A_1, \Gamma_2^{1/2}}}^{1/2}} \\
            &\,\,\,\vdots \\
            &\le q^{2t} \cdot \max \inset{ \Gamma_d^{2^{-d}}, \max_{0 \le j \le d-1} \inparen{\norm{f^{(j)}}_\infty q^{-k}}^{2^{-j}} } \eperiod \numberthis \label{eq:Gamma0Max}
    \end{align*}

    We bound the first term of the maximum on the right-hand side of \cref{eq:Gamma0Max}. Recall that $\Gamma_d \le q^{-Rn + \frac{\eta n}{50}}$. Thus,
    \begin{align*}
        q^{2t} \cdot \Gamma_d^{2^{-d}} &\le q^{2t} \cdot \inparen{q^{-Rn + \frac{\eta n}{50}}}^{2^{-d}} \\
        &= q^{-n \inparen{R 2^{-d} - \frac{\eta 2^{-d}}{50} - \frac{2t}{n}}} \eperiod\numberthis \label{eq:gamma0FirstTerm}
    \end{align*}
    
    We turn to bound the other terms in the right-hand side of \cref{eq:Gamma0Max}. We bound the $L_\infty$ norm of $f^{(j)}$. Recall that $f^{(j)} = F_{f^{(j-1)}} = f^{(j-1)} * \check{f}^{(j-1)}$, and so
    \begin{align*}
    \norm{f^{(j)}}_\infty &= \norm{f^{(j-1)} * \check{f}^{(j-1)}}_\infty \\
    &= \max_w q^{-n} \sum_{x \in \F_q^n} f^{(j-1)}(x)\check{f}^{(j-1)}(w-x) \\
    &= \max_w q^{-n} \sum_{x \in \F_q^n} f^{(j-1)}(x)f^{(j-1)}(x-w) \\
    &\le \max_w q^{-n} \norm{f^{(j-1)}}_\infty \sum_{x \in \F_q^n} f^{(j-1)}(x) \\
    &= q^{-n} \norm{f^{(j-1)}}_\infty \inparen{q^n \norm{f^{(j-1)}}_1} \\
    &= \norm{f^{(j-1)}}_\infty \eperiod
    \end{align*}
    Then we get a bound on all $\norm{f^{(j)}}_\infty$:
    $$ \norm{f^{(j)}}_\infty \le \norm{f^{(j-1)}}_\infty \le \dots \le \norm{f^{(0)}}_\infty = \norm{f}_\infty \eperiod $$
    
    Therefore, for all $j \ge 0$, we have $\norm{f^{(j)}}_\infty \le \norm{f}_\infty = q^{\beta n}$. Substituting this into our maximum yields:
    $$ \norm{f^{(j)}}_\infty q^{-k} \le q^{\beta n} q^{-Rn} = q^{(\beta - R)n} = q^{-\eta n} \eperiod $$
    Substituting this bound back into \cref{eq:Gamma0Max}:
    \begin{align*}
        q^{2t} \cdot \max_{0 \le j \le d-1} \inparen{\norm{f^{(j)}}_\infty q^{-k}}^{2^{-j}} 
        &\le q^{2t} \cdot \max_{0 \le j \le d-1} \inparen{q^{-\eta n}}^{2^{-j}} \\
        &= q^{2t} \cdot q^{-\eta n 2^{-(d-1)}} \\
        &= q^{-n \inparen{\eta 2^{-d+1} - \frac{2t}{n}}} \eperiod \numberthis \label{eq:gamma0SecondtTerm}
    \end{align*}
    
    Set $\eps_1 := R 2^{-d} - \frac{\eta 2^{-d}}{50}$ and $\eps_2 := \eta 2^{-d+1}$.
    Finally, letting $\eps = \min(\eps_1, \eps_2) - \frac{2t}{n}$, and noting by our assumption on $t$ that $\eps > 0$, we have by \cref{eq:Gamma0Max,eq:gamma0FirstTerm,eq:gamma0SecondtTerm}:
    \begin{align*}
        \Gamma_0 &\le \max\inset{q^{-n\inparen{\eps_1 - \frac{2t}{n}}}, q^{-n\inparen{\eps_2 - \frac{2t}{n}}}} \\
        &\le q^{-\eps n} \eperiod
    \end{align*}

    Therefore, assuming $S_{\textrm{typical}}$, the event $T_{\textrm{smooth}}(f,0)$ guarantees $\inabs{\delta_{C_k, f}} \le q^{-\eps n}$. Because $T_{\textrm{smooth}}(f,0)$ fails with probability at most $d \cdot q^{n - 0.1t^2}$, we have established \cref{eq:fProbablyGood}, which completes the proof.
\end{proof}

\subsubsection{Proof of \cref{cor:SimplerMain}}\label{sec:corProof}
Finally, we use \cref{thm:mainGeneral} to prove the simplified version given in  \cref{cor:SimplerMain}.

\begin{proof}[Proof of \cref{cor:SimplerMain}]
Let $d$ be the auxiliary integer from \cref{thm:mainGeneral}, and set $\rho_*:=2^{-d}$.
We first show that $\rho_*\ge\lambda$.

Write
\[
    B:=\log(1/\alpha)
    \quad\text{and}\quad
    L:=\log\left(\frac{q-1}{q^{\eta/100}-1}\right).
\]
We claim that
\[
    \rho_*\ge \min\left\{1,\frac{B}{2L}\right\}.
\]
Indeed, if $L\le B$, then $d=0$ and $\rho_*=1$.
Otherwise, $d=\lceil\log_2(L/B)\rceil$, so $2^d\le 2L/B$, and hence $\rho_*=2^{-d}\ge B/(2L)$.

Since
\[
    q^{\eta/100}-1=e^{\eta\log q/100}-1\ge \frac{\eta\log q}{100},
\]
we have
\[
    \frac{q-1}{q^{\eta/100}-1}
    \le \frac{100(q-1)}{\eta\log q}
    \le \frac{100q}{\eta}.
\]
Thus $L\le\log(100q/\eta)$, and therefore
\[
    \rho_*
    \ge
    \min\left\{1,\frac{\log(1/\alpha)}{2\log(100q/\eta)}\right\}
    =\lambda.
\]

Apply \cref{thm:mainGeneral} with $t=\theta n$.
Since $\eta\le 50\beta/51$, we have $\beta\ge 51\eta/50$, and hence
\[
    2^{-d}\left(R-\frac{\eta}{50}\right)
    =\rho_*\left(\beta+\frac{49}{50}\eta\right)
    \ge 2\eta\rho_*.
\]
Therefore,
\[
    2^{-d}\min\left\{R-\frac{\eta}{50},2\eta\right\}=2\eta\rho_*.
\]
Since $\theta<\eta\lambda\le\eta\rho_*$, the condition on $t$ in \cref{thm:mainGeneral} is satisfied.

The theorem gives $\eps=2\eta\rho_*-2\theta$, so for every $f\in\cF$,
\[
    \inabs{\delta_{C,f}}
    \le q^{-2(\eta\rho_* -\theta)n}
    \le q^{-2(\eta\lambda-\theta)n}.
\]

It remains to simplify the failure probability.  By \cref{thm:mainGeneral},
this probability is at most
$$n(Rn+1)q^{-\eta n/100}+|\cF|d q^{n-\theta^2n^2/10}\eperiod$$
Since \(R\le 1\) and \(\eta\le 50/51\), the first term is at most
\(q^{-\eta n/200}\) whenever \(n\ge c\eta^{-1}\log_q(1/\eta)\), after
increasing the universal constant \(c\).  For the second term, the definition
of \(d\), together with \(q^{\eta/100}-1\ge \eta\log q/100\) and
\(\log(1/\alpha)\ge 1-\alpha\), gives
$$d\le c'\log(q/(\eta(1-\alpha)))$$ for a universal constant \(c'>0\).
Thus the second lower bound on \(n\) in the statement implies
\(\log_q d\le \theta^2n^2/20\), again after increasing \(c\), and hence
\(d q^{n-\theta^2n^2/10}\le q^{n-\theta^2n^2/20}\).  Therefore the failure
probability is at most
\[
    q^{-\eta n/200}+|\cF|q^{n-\theta^2n^2/20},
\]
as claimed.
\end{proof}

\section{$\alpha$-Concentration of Combinatorial Rectangles}

In this section, we show $\alpha$-Fourier-concentration for normalized indicator functions of combinatorial rectangles, which are the subsets relevant to zero-error list-recovery, or list-recovery from erasures. In full generality, we fix subsets $S_1,\dots,S_n \subseteq \F_q$, and put $\ell_i := |S_i|$ for all $i \in [n]$. This argument will only apply to the special case that $q=p$, a prime integer. Some such restriction is necessary: for example, if $q = 2^t$, if each $S_i$ is an $\F_2$-subspace of $\F_q$, then it is not hard to see that any such $\alpha$-Fourier-concentration is not possible.\footnote{One could, for example, weaken the requirement to being that $\mathrm{char}(q) > \max_i \ell_i$; however, we do not pursue that here.} 

The main tool which we will use is the following bound on sums over subsets of $p$-th roots of unity, proved by Benhamouda et al~\cite[Lemma~3.11]{BDIR21}. Below, $\omega_p = e^{2\pi i/p}$ denotes a primitive $p$-th root of unity. 

\begin{lemma}[Bound on subset sums] \label{lem:bound-on-subset-sums}
    Let $S \subseteq \F_p$ be a subset of size $\ell$. Then 
    \[
        \left|\frac{1}{\ell}\sum_{x \in S}\omega_p^{x}\right| \leq \frac{\sin(\pi \ell/p)}{\ell\,\sin(\pi/p)} \eperiod
    \]
\end{lemma}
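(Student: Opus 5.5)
The plan is to reduce to a comparison between a sum of roots of unity over $S$ and the same sum over an \emph{interval} of $\ell$ consecutive residues. The interval sum is a geometric series whose modulus is exactly $\frac{\sin(\pi\ell/p)}{\sin(\pi/p)}$, and we will show that intervals are extremal.

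\textbf{Step 1 (reduction to the real part).} Write $\Sigma(S):=\sum_{x\in S}\omega_p^{x}$. Choose a unit complex number $\zeta$ with $|\Sigma(S)|=\mathrm{Re}(\zeta\,\Sigma(S))=\sum_{x\in S}\cos(2\pi x/p+\phi)$, where $\zeta=e^{i\phi}$. Thus it suffices to bound $\sum_{x\in S}\cos(2\pi x/p+\phi)$ for an arbitrary phase $\phi$.

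\textbf{Step 2 (maximize over all $\ell$-subsets).} For fixed $\phi$, the sum over any $\ell$-subset of $\F_p$ is at most the sum of the $\ell$ largest values among $\{\cos(2\pi x/p+\phi):x\in\F_p\}$. The points $2\pi x/p+\phi$ are $p$ equally spaced points on the circle, and cosine is unimodal on the circle. Hence the $\ell$ largest values are attained on $\ell$ \emph{consecutive} points, i.e.\ on an interval $I=\{a,a+1,\dots,a+\ell-1\}$. Concretely, these are the $\ell$ points closest to the angle $0$, and any set of the $\ell$ nearest points among equally spaced points is an arc of consecutive points. It follows that
\[
|\Sigma(S)|\le \mathrm{Re}(\zeta\Sigma(I))\le|\Sigma(I)|.
\]

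\textbf{Step 3 (evaluate the interval).} We have $|\Sigma(I)|=\left|\sum_{j=0}^{\ell-1}\omega_p^{j}\right|=\frac{|1-\omega_p^{\ell}|}{|1-\omega_p|}=\frac{\sin(\pi\ell/p)}{\sin(\pi/p)}$, using $|1-e^{i\theta}|=2|\sin(\theta/2)|$ together with $0<\ell\le p$, so that the sines are nonnegative. Dividing by $\ell$ gives the claim. (For $\ell=p$ the right-hand side is $0$, consistent with $\Sigma(\F_p)=0$.)

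The only step needing care is Step 2, namely the claim that the top-$\ell$ values of cosine at equally spaced angles form a consecutive arc. The justification is that $\cos(\theta)$ depends only on the circular distance from $\theta$ to $0$ and is decreasing in that distance. So sorting the $p$ equally spaced points by distance to $0$ gives them alternately on the two sides, and any prefix of this order is contiguous. Ties at equal distances may be broken arbitrarily without affecting contiguity.
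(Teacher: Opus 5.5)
Your proof is correct and follows the approach the paper points to. The paper does not reprove this lemma; it cites \cite[Lemma~3.11]{BDIR21} and notes that the argument there shows the maximum is attained by $\ell$ consecutive $p$-th roots of unity, which is exactly what your rotation and top-$\ell$ cosine argument establishes (including the tie-breaking and the sign of $\sin(\pi\ell/p)$ for $1\le \ell\le p$) before you evaluate the geometric sum.
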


In fact, the proof establishes that the sum is maximized if the sum is over ``consecutive'' $p$-roots of unity, i.e., it is of the form $\left|\sum_{j=0}^{\ell-1}\omega_p^{j+i}\right|$ for some $i \in \{0,1,\dots,p-1\}$. This is predictable, upon visualizing the $p$-th roots of unity as lying on the radius one disk in the complex plane: to maximize the magnitude of the sum, all the $p$-th roots of unity should ``as much as possible'' point in the same direction. Additionally we remark that there is no need for $p$ to be prime: the same result would hold for any $m$-th root of unity and any subset of $\Z_m$.

We can now easily prove the requisite $\alpha$-Fourier-concentration. 

\begin{lemma} [$\alpha$-Fourier-concentration of rectangles]\label{lem:zero-error-con}
    Let $p$ be a prime, $n \in \N$, and let $S_1 ,\dots,S_n \subseteq \F_p$ be subsets, of size $\ell_1,\dots,\ell_n$, respectively. Let $f = \frac{p^n}{\ell_1\cdots \ell_n}1_{S_1 \times \cdots \times S_n}$ denote the normalized indicator for the combinatorial rectangle $S_1 \times \cdots \times S_n$. Then, $f$ is $\alpha$-Fourier-concentrated for 
    \[
        \alpha :=\max_{i \in [n]}\left\{ \frac{\sin(\pi\ell_i/p)}{\ell_i\, \sin(\pi/p)} \right\} \eperiod
    \]
    If $2\leq \ell_i\leq p-1$ for all $i\in[n]$, then $\alpha\leq \cos(\pi/p)<1$.
\end{lemma}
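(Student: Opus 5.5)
The plan is to compute the Fourier transform of $f$ directly, using the product structure of the rectangle. For $y\in\F_p^n$, and since $q=p$ is prime so that the trace is the identity, $\chi_y(x)=\prod_{i}\omega_p^{x_iy_i}$. The transform therefore factors as
\[
    \widehat f(y)=p^{-n}\sum_{x\in S_1\times\cdots\times S_n}\frac{p^n}{\ell_1\cdots\ell_n}\prod_{i=1}^n\omega_p^{x_iy_i}
    =\prod_{i=1}^n\Bigl(\frac{1}{\ell_i}\sum_{s\in S_i}\omega_p^{s y_i}\Bigr).
\]
Coordinates with $y_i=0$ contribute a factor of exactly $1$. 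We are then left with bounding each factor with $y_i\neq 0$ by $\alpha$, after which $|\widehat f(y)|\le\alpha^{\wt{y}}$ follows immediately.

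For a coordinate with $y_i\ne0$, we use that multiplication by $y_i$ is a bijection of $\F_p$ because $p$ is prime. Hence $\sum_{s\in S_i}\omega_p^{sy_i}=\sum_{x\in y_iS_i}\omega_p^{x}$ with $|y_iS_i|=\ell_i$. Applying \cref{lem:bound-on-subset-sums} to $y_iS_i$ bounds the factor's modulus by $\sin(\pi\ell_i/p)/(\ell_i\sin(\pi/p))\le\alpha$. This proves the first claim.

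For the second claim, set $g(\ell)=\sin(\pi\ell/p)/(\ell\sin(\pi/p))$ for $2\le\ell\le p-1$. The symmetry $\sin(\pi\ell/p)=\sin(\pi(p-\ell)/p)$ shows that the numerator is maximized at the endpoints. We then use the identity $\sin(2\theta)=2\sin\theta\cos\theta$ with $\theta=\pi/p$, which gives $g(2)=\cos(\pi/p)$. This leaves showing $g(\ell)\le g(2)$ for all $\ell\ge2$, which is the only mildly fiddly step. I would argue it via the Dirichlet-kernel interpretation: $\ell g(\ell)=|\sum_{j=0}^{\ell-1}\omega_p^j|$. The average of $\ell$ consecutive unit vectors spanning an arc of angle $2\pi(\ell-1)/p$ has modulus decreasing in $\ell$. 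One way to see this is that $\sin(\ell\theta)/(\ell\sin\theta)$ is decreasing in $\ell\theta\in(0,\pi)$, because $\sin(x)/x$ is decreasing on $(0,\pi)$ and $\sin\theta$ is fixed. Concretely, $g(\ell)=\frac{\theta}{\sin\theta}\cdot\frac{\sin(\ell\theta)}{\ell\theta}$ with $\ell\theta\le (p-1)\pi/p<\pi$, so $g(\ell)\le g(2)=\cos(\pi/p)<1$ because $p\ge3$; when $p=2$ the range $2\le\ell\le p-1$ is empty.
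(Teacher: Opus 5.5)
Your proof is correct and follows essentially the same route as the paper. You factor $\widehat f$ over coordinates, rescale each $S_i$ by the nonzero $y_i$ to invoke \cref{lem:bound-on-subset-sums}, and get $\alpha\le\cos(\pi/p)$ from $g(2)=\cos(\pi/p)$ and the monotonicity of $\sin(x)/x$ on $(0,\pi)$. One small slip: your aside that the symmetry makes $\sin(\pi\ell/p)$ maximal at the endpoints of $[2,p-1]$ is false, since it is largest near $\ell=p/2$. Nothing in your argument uses that aside, so simply delete it.
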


\begin{proof}
    Let $\xi \in \F_p^n$ be given, and let $T = \supp(\xi) = \{i \in [n] \suchthat \xi_i \neq 0\}$. We need to prove
    \[
        |\widehat f(\xi)| = \left|\Eover{X \sim \F_p^n}{f(X)\, \omega_p^{-\langle X,\xi\rangle}}\right| \leq \alpha^{|T|} \ .
    \]
    Firstly, as $f$ is normalized we have
    \begin{align}
        \Eover{X \sim \F_p^n}{f(X)\, \omega_p^{-\langle X,\xi\rangle}} = \Eover{X \sim S_1 \times \cdots \times S_n}{ \omega_p^{-\langle X,\xi\rangle}} \eperiod \label{eq:Fourier-changing-expectation}
    \end{align}
    As $S_1 \times \cdots \times S_n$ is a product set, the coordinates $X_1,\dots,X_n$ of $X$ are independent, so 
    \begin{align}
        \eqref{eq:Fourier-changing-expectation} = \prod_{i=1}^n \Eover{X_i \sim S_i}{\omega_p^{X_i\cdot\xi_i}}\eperiod \label{eq:before-support}
    \end{align}
    Now, for each $i \in [n]$ with $\xi_i=0$ the expectation is clearly 1. Hence: 
    \begin{align*}    
    \eqref{eq:before-support} = \prod_{i\in T} \Eover{X_i \sim S_i}{\omega_p^{X_i\cdot\xi_i}}= \prod_{i\in T} \left(\frac{1}{\ell_i}\sum_{x_i \in S_i}\omega_p^{-x_i\cdot \xi_i}\right) = \prod_{i\in T} \left(\frac{1}{\ell_i}\sum_{x_i \in -\xi_i \cdot S_i}\omega_p^{x_i}\right) \eperiod 
    \end{align*}
    Taking absolute values and applying \Cref{lem:bound-on-subset-sums} yields the desired bound:
    \begin{align*}
        |\widehat f(\xi)| = \prod_{i \in T}\left|\frac{1}{\ell_i}\sum_{x_i \in -\xi_i \cdot S_i}\omega_p^{x_i}\right| \leq \prod_{i \in T}\frac{\sin(\pi\ell_i/p)}{\ell_i\sin(\pi/p)} \leq \alpha^{|T|} \ ,
    \end{align*}
    where the last inequality recalls we defined $\alpha$ as $\max_{i \in [n]}\left\{ \frac{\sin(\pi\ell_i/p)}{\ell_i\, \sin(\pi/p)} \right\}$.

    To see the last claim about the $\cos(\pi/p)$ upper bound on $\alpha$ when $2\leq \ell_i\leq p-1$ for all $i$, it suffices to show that $\frac{\sin(\pi \ell/p)}{\ell \sin(\pi/p)}$ is decreasing in $\ell$, since $\frac{\sin(2\pi/p)}{2\sin(\pi/p)}=\cos(\pi/p)$.
    To see that $\frac{\sin(\pi \ell/p)}{\ell \sin(\pi/p)}$ is decreasing in $\ell$, note that this is equivalent to
    \begin{equation*}
        \frac{\sin(\pi (\ell+1)/p)}{\ell+1}\leq \frac{\sin(\pi \ell/p)}{\ell},
    \end{equation*}
    and that
    $\sin(x)/x$ is decreasing for $x\in[0,\pi]$.
\end{proof}

Finally, in practice what we need to lower bound is $\log(1/\alpha)$. For $\alpha = \cos\frac{\pi}{p}$, we record below the estimate $\log\frac{1}{\cos\pi/p} \geq \Omega(1/p^2)$. 

\begin{lemma} \label{lem:lower-bound-on-alpha}
    There is a constant $c>0$, such that
    \[
        \log\frac{1}{\cos(\pi/p)} = \log \sec(\pi/p) \geq \frac{c}{p^2} 
    \]
    for all $p > 2$.
\end{lemma}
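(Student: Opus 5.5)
The plan is a direct elementary estimate, starting from $-\log\cos x \ge 1-\cos x$. This inequality holds because $\log(1/y)\ge 1-y$ for $y\in(0,1]$, which in turn is equivalent to $\log y\le y-1$. Here $y=\cos(\pi/p)\in(0,1)$ since $p>2$, so $\pi/p<\pi/2$. It therefore suffices to lower bound $1-\cos(\pi/p)$ by $c/p^2$. (If $\log$ denotes $\log_2$ rather than the natural logarithm, we gain an extra factor of $1/\ln 2\ge 1$, so the bound only improves.)

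Next, write $1-\cos x = 2\sin^2(x/2)$ and apply Jordan's inequality $\sin u\ge \frac{2}{\pi}u$ for $u\in[0,\pi/2]$. With $x=\pi/p$, this gives
\[
    1-\cos(\pi/p)=2\sin^2\!\left(\frac{\pi}{2p}\right)\ge 2\left(\frac{2}{\pi}\cdot\frac{\pi}{2p}\right)^2=\frac{2}{p^2}.
\]
Chaining the two steps yields $\log\sec(\pi/p)\ge 2/p^2$ for all $p>2$, so the lemma holds with $c=2$.

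I expect no real obstacle here. The only point to check is the range condition: $\pi/(2p)\le\pi/2$ is automatic, and $\cos(\pi/p)>0$ requires $p>2$, which is exactly the hypothesis. Alternatively, one could use the Taylor bound $\cos x\le 1-x^2/2+x^4/24$ on $[0,\pi/3]$, but the half-angle argument above avoids any case analysis.
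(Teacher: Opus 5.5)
Your argument is correct, and it follows a different (equally short) route from the paper's.

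The paper bounds $\sec z$ from below directly. It invokes the fact that the Maclaurin series of $\sec z$ has nonnegative coefficients (the sign pattern of the Euler numbers, cited from Abramowitz--Stegun) to get $\sec z\ge 1+z^2/2$. It then applies $\ln(1+y)\ge y/2$ on $[0,1]$, which needs the extra check $\pi^2/(4p^2)\le 1$. This gives $c=\pi^2/(8\ln 2)$ with $\log=\log_2$.

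You instead linearize the logarithm first, via $\ln(1/y)\ge 1-y$, and then bound $1-\cos(\pi/p)=2\sin^2(\pi/(2p))$ with Jordan's inequality. Your route buys three things:
\begin{itemize}
\item It is self-contained: it uses only $\ln y\le y-1$ and concavity of $\sin$ on $[0,\pi/2]$, with no appeal to facts about Euler numbers.
\item It needs a single range condition, $\cos(\pi/p)>0$, i.e.\ $p>2$.
\item It gives a better explicit constant: $c=2$ for the natural logarithm, hence $2/\ln 2$ in base $2$.
\end{itemize}

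The paper's route is natural if one thinks in terms of the series expansion of $\sec$, but for this crude $\Omega(1/p^2)$ bound it offers no real advantage. Both arguments lose only constant factors relative to the true asymptotics $\ln\sec(\pi/p)\sim \pi^2/(2p^2)$.
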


\begin{proof}
    For $|z| < \frac{\pi}{2}$, the Taylor series for $\sec z$ about $z=0$ is given by
    \[
    \sec z = \sum_{j=0}^\infty
    \frac{(-1)^jE_{2j}}{(2j)!}z^{2j}.
    \]
    where the $E_{2j}$ are the Euler numbers~\cite[p.~75, eq.~4.3.68]{AS48}. Since all these coefficients are positive (the Euler numbers $E_{2j}$ are positive iff $j$ is even~\cite[pp.~804--805]{AS48}), we in particular have $\sec z \geq 1+\frac{z^2}{2}$. Hence, since $\frac{\pi}{p} < \frac{\pi}{2}$ by the assumption on $p$, 
    \[
        \log\left(\sec\left(\tfrac{\pi}{p}\right)\right) \geq \log\left(1+\frac{\pi^2}{4p^2}\right) \eperiod
    \]
    
    Then, since $\ln(1+y) \geq \frac{y}2$ for all $y\in [0,1]$, as $p>2$ also guarantees $\frac{\pi^2}{4p^2} \leq 1$, we find
    \[
        \log\left(1+\frac{\pi^2}{4p^2}\right) \geq \frac{1}{\ln 2}\ln\left(1+\frac{\pi^2}{4p^2}\right) \geq \frac{1}{\ln 2} \cdot \frac12 \cdot \frac{\pi^2}{4p^2} \ ,
    \]    
    so we get the claimed lower bound (with $c = \frac{\pi^2}{8\ln 2}$). 
\end{proof}

\section{Improved Resilience of Linear Ramp Secret Sharing Schemes against Balanced Leakage Functions} \label{sec:leakage-resilience}

In this section we apply our results to the study of leakage-resilient secret sharing schemes.
Roughly speaking, an $(n,t)$-threshold secret sharing scheme encodes a secret $s$ into a set of $n$ shares $S_1,\dots,S_n$, each assigned to one of $n$ parties, with the property that every subset of at least $t$ parties can pool their shares and perfectly reconstruct $s$, while any subset of fewer than $t$ parties learns no information about the secret.
It is not hard to see that in any $(n,t)$-threshold secret sharing scheme the length (in bits) of each share must be at least the length of the secret.

To obtain small share size compared to the number of parties $n$, we must weaken the requirements above by allowing a gap between the \emph{reconstruction threshold $\trec$} (with the property that any subset of parties of size at least $\trec$ can reconstruct the secret) and the \emph{privacy threshold $\tpriv$} (with the property that any subset of parties of size at most $\tpriv$ learns no information about the secret)~\cite{CCX13}.
This corresponds to the notion of \emph{ramp} secret sharing.
The notion of $(n,t)$-threshold secret sharing corresponds to the special case where $\trec=t$ and $\tpriv=\trec-1=t-1$.
We present a formal definition below.

\begin{definition}[Ramp secret sharing scheme]
    An \emph{$(n,\trec,\tpriv)$-ramp secret sharing scheme} with secret space $\cS$ is a pair of algorithms $(\Share,\Rec)$ with $\Share$ a randomized algorithm that receives as input a secret $s\in \cS$ and outputs a share $S_i$ for the $i$-th party, $i\in[n]$, and $\Rec$ a deterministic algorithm that receives a subset $T\subseteq[n]$ of parties and the shares $(S_i)_{i\in T}$ corresponding to parties in $T$ and outputs a guess for the secret satisfying the following properties:
    \begin{itemize}
        \item \textbf{Privacy:} For any subset $T\subseteq[n]$ such that $|T|\leq \tpriv$ and any two secrets $s,s'\in\cS$, the random variables $\Share_T(s)$ and $\Share_T(s')$ (i.e., the shares of $s$ and $s'$ corresponding to parties in $T$, respectively) are identically distributed.

        \item \textbf{Reconstruction:} For any subset $T\subseteq[n]$ such that $|T|\geq \trec$ and any secret $s\in\cS$ we have
        \begin{equation*}
            \Pr[\Rec(\Share(s)_T,T)=s]=1,
        \end{equation*}
        with the probability taken over the random coins of $\Share$.
    \end{itemize}

    When $\tpriv=\trec-1$, we say that $(\Share,\Rec)$ is an \emph{$(n,\trec)$-threshold secret sharing scheme}.
\end{definition}

\emph{Linear} secret sharing schemes are especially important objects in cryptography (for example, in secure multiparty computation).
Informally, these are secret sharing schemes that operate over some finite field $\F_q$, and such that local linear combinations of shares of two secrets yield a correct sharing of the linear combination of the secrets.
More precisely, we have the following definition.
\begin{definition}[Linear ramp secret sharing scheme]
    We say that a ramp secret sharing scheme $(\Share,\Rec)$ is \emph{linear} if the following holds:
    \begin{itemize}
        \item The $\Share$ procedure maps a secret $s\in\F_q$ to shares $S_i\in\F_q^m$ for some integer $m$;

        \item For every $s\in\F_q$, the tuple $\Share(s)$ is uniformly distributed over an affine subspace of $(\F_q^m)^n$;

        \item If $S=(S_1,\dots,S_n)=\Share(s)$ and $S'=(S'_1,\dots,S'_n)=\Share(s')$ for any two secrets $s,s'\in\F_q$, then $\alpha S+\beta S'$ is identically distributed to $\Share(\alpha s+\beta s')$, for any $\alpha,\beta\in\F_q$.
    \end{itemize}
\end{definition}

We remark that ramp secret sharing schemes are the best we can hope for in regimes where $n$ is much larger than $q$, which is our main focus here.
For example, we must always have $\trec-\tpriv\geq \frac{n-\tpriv+1}{q}$~\cite{CCX13}, regardless of the size of the secret and of linearity.
In particular, if $q$ is held constant and $\tpriv\leq \trec\leq cn$ for some constant $c<1$, then necessarily $\trec-\tpriv =\Omega(n)$.

\subsection{Leakage-Resilience of Linear Secret Sharing Schemes}

Motivated by side-channel attacks in cryptography, Benhamouda, Degwekar, Ishai, and Rabin \cite{BDIR21} studied a different mode of attack on secret sharing schemes.
They considered an attacker that learns some side information about \emph{every} share (as opposed to an attacker that learns full information about a small subset of shares).
The goal is then to ensure that the secret remains hidden from such an adversary, in which case we call the secret sharing scheme \emph{leakage-resilient}.
Studying this requires specifying the leakage functions that the adversary can apply to the shares.
We first provide a general definition of leakage-resilience against an abstract family of leakages, and then discuss the important special case on which we focus (bounded local leakage).

\begin{definition}[Statistical distance]
    The \emph{statistical distance} between two discrete random variables $X$ and $Y$ supported on a finite set $\cS$, denoted by $\SD(X,Y)$, is defined as
    \begin{equation*}
        \SD(X,Y)=\max_{A\subseteq \cS}|\Pr[X\in A]-\Pr[Y\in A]| = \frac{1}{2}\sum_{s\in \cS}|\Pr[X=s]-\Pr[Y=s]|.
    \end{equation*}
    We may write $X\approx_\eps Y$, and say that $X$ and $Y$ are \emph{$\eps$-close}, when $\SD(X,Y)\leq \eps$.
\end{definition}

\begin{definition}[Leakage-resilient ramp secret sharing scheme]
    An \emph{$(n,\trec,\tpriv,\cG,\eps)$-leakage-resilient ramp secret sharing scheme} is an $(n,\trec,\tpriv)$-ramp secret sharing scheme with the following additional property:
    \begin{itemize}
        \item \textbf{Leakage-resilience against $\cG$:} For any leakage function $g\in\cG$ and
        any two secrets $s,s'\in\cS$ with corresponding shares $S=(S_1,\dots,S_n)=\Share(s)$ and $S'=(S'_1,\dots,S'_n)=\Share(s')$, we have
        \begin{equation*}
            g(S)\approx_{\eps} g(S').
        \end{equation*}
    \end{itemize}
    When the remaining parameters are not relevant, we may also say that $(\Share,\Rec)$ is \emph{$(\cG,\eps)$-leakage-resilient}.
\end{definition}

We will be interested in linear ramp secret sharing schemes with $m=1$, i.e., whose secrets and shares are elements of $\F_q$.
The most well-studied leakage model for secret sharing is \emph{bounded local leakage}, which corresponds to independently applying a leakage function with bounded output length to each share.
For simplicity, and because this already yields a highly non-trivial problem, we focus on the special case where leakage functions have $1$-bit output. 
More precisely, this corresponds to the family $\cG$ containing all functions $g=(g_1,\dots,g_n)$ of the form
\begin{equation*}
    g(S) = (g_1(S_1),\dots,g_n(S_n)),
\end{equation*}
for possibly distinct bounded-output functions $g_i:\F_q\to\bits$. 
Secret sharing schemes that are $(\cG,\eps)$-leakage-resilient for this family $\cG$ are commonly called \emph{$\eps$-locally leakage-resilient}, and we use this name too.

As we discuss in the next section, 
known analysis of leakage-resilient secret sharing suggests that the most difficult type of bounded local leakage functions to handle are \emph{balanced} leakage functions.
These will be our main focus, and we now provide a formal definition.
\begin{definition}
    We say that a tuple of functions $g=(g_1,\dots,g_n)$ with $g_i:\F_q\to\bits$ is \emph{$\gamma$-balanced} if
    \begin{equation*}
        \left(\frac{1}{2}-\gamma\right)q\leq |g_i^{-1}(0)|\leq \left(\frac{1}{2}+\gamma\right)q
    \end{equation*}
    for all $i\in[n]$.
    We denote the family of all $\gamma$-balanced tuples by $\cG_\gamma$.
\end{definition}

\subsection{Prior Work on Leakage-Resilience of Linear Secret Sharing Schemes and Our Result}

The study of the leakage-resilience of linear threshold secret sharing schemes has attracted significant attention ever since its introduction in~\cite{BDIR21}.
Establishing the local leakage-resilience of a secret sharing scheme becomes easier when the reconstruction threshold grows.
Given this, one basic question in this area is to pinpoint the smallest reconstruction threshold (as a function of the number of parties $n$) that allows for resilience against bounded local leakage.

\paragraph{General bounded local leakage.}

The state-of-the-art (non-explicit) result for locally leakage-resilient linear threshold secret sharing schemes is due to Maji, Paskin-Cherniavsky, Suad, and Wang~\cite{MPSW21}, who showed the existence of $\eps$-locally leakage-resilient linear $(n,t)$-threshold secret sharing schemes with threshold $t=(\frac{1}{2}+\eps)n$ and exponentially small $\eps$, for any $\eps>0$, over an exponentially large field of prime order $q=2^{\Theta(n)}$.\footnote{It is easy to see that local leakage-resilience cannot be achieved over fields of characteristic $2$~\cite{BDIR21}.}
This is achieved by considering secret sharing schemes induced by random linear codes, as we discuss in more detail below.
The best known explicit construction of locally leakage-resilient linear threshold secret sharing schemes is due to Kasser~\cite{Kas24}, who showed that Shamir's secret sharing is $\eps$-locally leakage-resilient with exponentially small $\eps$ for any threshold $t\geq 0.668n$.
It has been conjectured that there exist locally leakage-resilient linear threshold secret sharing schemes with threshold $t=cn$ for any constant $c>0$, and that this holds even for Shamir's secret sharing~\cite{BDIR21}.

Motivated by the prospect of constructing leakage-resilient secret sharing schemes with small share size 
(for which a gap between the privacy and reconstruction thresholds is unavoidable, as discussed above), Tjuawinata and Xing~\cite{TX22} studied the leakage-resilience of linear ramp secret sharing schemes over constant-sized fields.
They use algebraic-geometric codes to obtain explicit constructions (for thresholds above $n/2$).

\paragraph{Restricted bounded local leakage.}
The state-of-the-art result from~\cite{MPSW21} hits a fundamental barrier at threshold $n/2$.
Indeed, this work uses a Fourier-analytic proxy to establish local leakage-resilience that fails against \emph{quadratic residue leakage} (for each share, leak whether it is a quadratic residue or not) for any threshold $t<n/2$.
This has motivated work on restricted types of bounded leakage, in particular with the aim of breaking this $n/2$ barrier.

Note that since the number of quadratic residues modulo a prime $q>2$ is $\frac{q-1}{2}$, quadratic residue leakage is $\gamma$-balanced with $\gamma=\frac{1}{2q}$.
Therefore, the barrier to the approach of~\cite{MPSW21} is a very balanced leakage function.
Unbalanced bounded local leakage functions, on the other hand, seem to be much easier to handle.
Klein and Komargodski~\cite{KK23} showed that Shamir's secret sharing is resilient against unbalanced leakage functions for any threshold $t\geq 0.01n$.
This goes well below the $n/2$ barrier of~\cite{MPSW21}.
Klein and Komargodski~\cite{KK23} also studied $\gamma$-balanced leakage functions with sufficiently small $\gamma>0$, and showed that Shamir's secret sharing is leakage-resilient for any threshold $t\geq 0.58n$.

Afterwards, Nguyen~\cite{Ngu24} developed an alternative proxy for establishing local leakage-resilience based on Gowers uniformity norms, and used it to show that Shamir's secret sharing with any linear threshold $t=cn$ is resilient against \emph{almost all} bounded local leakage functions, in the following sense: if we sample a tuple of leakage functions $g_i:\F_q\to\bits$ uniformly at random, the probability of hitting a tuple of leakage functions not handled by the proxy is very small.
This result is incomparable to the above.

Summarizing, the current state of affairs is that the existence of linear threshold and ramp secret sharing schemes with reconstruction threshold $\trec<n/2$ resilient against \emph{all} $\gamma$-balanced functions remains open even for very small constant $\gamma>0$.
Our result, which we state below in simplified form for constant field size, makes progress on this question whenever the field size does not grow too fast, in which case we must necessarily focus on ramp secret sharing.

\begin{theorem}[Simplified version of \cref{thm:rampss-full}, see also \cref{cor:rampss-simpler}]\label{thm:rampss-short}
    Fix an arbitrary constant $\gamma\in(0,1/2)$.
    Fix also an arbitrary prime $q>\frac{8}{1-2\gamma}$ and an arbitrary constant $\eta\in(0,\frac{1}{2\log q})$.
    Then, for all sufficiently large $n$ there exists an $(n,\trec,\tpriv,\cG_\gamma,\eps)$-leakage-resilient linear ramp secret sharing scheme over $\F_q$ with
    \begin{equation*}
       \left(\log_q\left(\frac{1}{1-2\gamma}\right)+\eta - o(1)\right)n\leq \tpriv < \trec \leq \left(\log_q\left(\frac{1}{1-2\gamma}\right)+\eta + \frac{2}{\log q} + o(1)\right)n
    \end{equation*}
    and $\eps=q^{-n^{\Omega(1)}}$.
\end{theorem}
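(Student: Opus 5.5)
We will build the scheme from a random linear code via Massey's construction and reduce leakage resilience to the discrepancy of the code on combinatorial rectangles. Concretely, take a random linear code $C\subseteq\F_q^{n+1}$ of designed rate $R$ with generator matrix $G$. Index the first coordinate as the secret coordinate $0$ and the remaining $n$ coordinates as the parties. $\Share(s)$ samples a uniform $c\in C$ with $c_0=s$ and outputs $(c_1,\dots,c_n)$, so linearity is automatic.

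\textbf{Thresholds.} Reconstruction for a set $T$ holds iff $e_0$ is spanned by the columns indexed by $T$ in the dual picture. Equivalently, no codeword of $C$ is supported off $T$ while having $c_0\neq 0$. Privacy for $T$ holds iff the projection of $C$ onto $T\cup\{0\}$ is surjective onto the $0$-coordinate when $T$ is fixed. Standard random-code facts then give, with high probability:
\begin{itemize}
\item $\tpriv\ge(d(C^\perp)-2)$, where $C^\perp$ is the dual code, which has rate $1-R$;
\item $\trec\le n+1-d(C)+1$.
\end{itemize}
The Gilbert--Varshamov behavior of random linear codes gives $d(C)\approx(h_q^{-1}(1-R))n$ and $d(C^\perp)\approx h_q^{-1}(R)n$, each failing with probability $q^{-\Omega(n)}$ by a first-moment bound. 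I would choose $R=\log_q\frac1{1-2\gamma}+\eta+o(1)$ plus a small margin. Then I would use the crude bound $h_q(x)\le x+\frac{h_2(x)}{\log q}\le x+\frac{1}{\log q}$, together with its companion estimate on the inverse, to land in the stated window. The gap of $\frac{2}{\log q}$ comes from the two entropy-correction terms, one at each threshold. This step is routine but needs care with constants.

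\textbf{Leakage.} Fix $g\in\cG_\gamma$ and leakage bits $b$, and put $A_i=g_i^{-1}(b_i)$. Then
\[
\Pr[g(\Share(s))=b]=\frac{|C\cap(\{s\}\times A_1\times\cdots\times A_n)|}{|C\cap(\{s\}\times\F_q^n)|}.
\]
The set $\{s\}\times A$ is a rectangle whose sides have sizes $1,|A_1|,\dots,|A_n|$. The singleton side is not Fourier-concentrated, so I would instead apply \cref{cor:SimplerMain} to the code $C_s$ punctured to a coset: the shares of secret $s$ form a coset $c^{(s)}+C'$ of $C'=\{c\in C: c_0=0\}$ restricted to $[n]$. $C'$ is itself a random linear code of rate about $R-\frac1n$ in $\F_q^n$. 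Discrepancy for a coset of $C'$ with a rectangle $A$ equals discrepancy of $C'$ with the rectangle $A-c^{(s)}$, which is again a rectangle of the same side sizes. So take $\cF$ to be all normalized indicators of rectangles with $|A_i|\in[(\frac12-\gamma)q,(\frac12+\gamma)q]$. This gives $|\cF|\le 2^{qn}$, and by \cref{lem:zero-error-con} each is $\cos(\pi/q)$-Fourier-concentrated, provided both $|A_i|$ and $q-|A_i|$ are at least $2$. That is where $q>\frac{8}{1-2\gamma}$ enters, since it forces $(\frac12-\gamma)q\ge 4$. The capacity parameter is $\beta=\max\log_q\frac{q}{|A_i|}\le 1-\log_q((\frac12-\gamma)q)$.

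\textbf{Concluding.} \cref{cor:SimplerMain} then gives $|C'\cap(A-c)|=(1\pm q^{-n^{\Omega(1)}})|C'||A|q^{-n}$ simultaneously for all such rectangles. I would use the small-$\theta$ regime of the remark, with $\theta\approx n^{-1/2+o(1)}$, to get error $q^{-n^{\Omega(1)}}$. Hence $\Pr[g(\Share(s))=b]=(1\pm q^{-n^{\Omega(1)}})\prod_i\frac{|A_i|}{q}$, independently of $s$. Summing over all $2^n$ values of $b$ gives statistical distance at most $q^{-n^{\Omega(1)}}$, since the main terms sum to $1$.

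\textbf{Main obstacle.} The rate needed for the leakage step is $R>\beta$, and the coarse bound on $\beta$ above gives $1-\log_q((\frac12-\gamma)q)=\log_q\frac{2}{1-2\gamma}$. This carries an extra $\log_q 2=\frac1{\log q}$ term compared with the claimed $\log_q\frac1{1-2\gamma}$. I would expect to absorb it into the privacy side: $\tpriv$ is governed by $d(C^\perp)$, which is smaller than $Rn$ by about $\frac{n}{\log q}$. Reconciling this bookkeeping so that exactly the window $\left(\log_q\frac1{1-2\gamma}+\eta\pm\frac{2}{\log q}\right)n$ comes out is the delicate step. It may require using the entropy function precisely rather than the crude bound.
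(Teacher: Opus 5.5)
Your proposal is correct and is essentially the paper's argument: Massey's scheme from a random linear code, with leakage reduced (via translation-closure of the rectangle family) to discrepancy of the reduced code $\{c: c_0=0\}$ on rectangles with sides in $[(\frac12-\gamma)q,(\frac12+\gamma)q]$ using \cref{cor:SimplerMain} with $\alpha=\cos(\pi/q)$, and with thresholds from first-moment GV bounds on the code and its dual. The step you call delicate is immediate: take $R=\log_q\frac{2}{1-2\gamma}+\eta$ and discard your initial choice $R\approx\log_q\frac{1}{1-2\gamma}+\eta$; the crude bound $h_q^{-1}(x)\ge x-\frac{1}{\log q}$ then gives $\tpriv\ge(R-\frac{1}{\log q}-o(1))n$ and $\trec\le(R+\frac{1}{\log q}+o(1))n$, which is exactly the stated window, as in the paper.
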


The next sections are devoted to proving a more general version of \cref{thm:rampss-short}.
We begin by setting up the standard underlying framework for building linear ramp secret sharing schemes out of linear codes, which matches that used by \cite{MPSW21}.

\subsection{Linear Ramp Secret Sharing from Linear Codes}\label{sec:LCs-to-LSS}

Towards proving \cref{thm:rampss-short}, we will study a classical construction of linear secret sharing schemes from error-correcting codes due to Massey~\cite{massey1995some}, which we proceed to describe.
Let $C^+\subseteq\F_q^{n+1}$ be a linear code with corresponding generator matrix $G^+\in\F_q^{(n+1)\times(k+1)}$ in systematic form, which we assume to be of the form
\[
    G^+ = \begin{bmatrix}
        1 & 0 \\
        v & G
    \end{bmatrix} \ ,
\]
where $v \in \F_q^n$ and $G \in \F_q^{n \times k}$. 
Then, the \emph{Massey secret sharing scheme} associated with $C^+$ has secret space $\F_q$ and works as follows:
\begin{itemize}
    \item To share a secret $s\in\F_q$, choose $x\in\F_q^{k+1}$ by setting $x_0=s$ and sampling $x_1,\dots,x_k\sim\F_q$.
    Then, compute $c^+=(c^+_0,c^+_1,\dots,c^+_n)=G^+ x$, and set the $i$-th share to be $c^+_i$ for $i\in[n]$.
    Note that $c^+_0=s$.

    \item Given a subset of parties $T\subseteq[n]$ and the corresponding shares $(S_i)_{i\in T}$, find the unique codeword $c^+\in C^+$ such that $c^+_i=S_i$ for all $i\in T$ and output $s=c^+_0$.
\end{itemize}

It is not hard to see that the minimum distance and dual distance of $C^+$ respectively control the reconstruction and privacy thresholds of the resulting ramp secret sharing scheme, as formalized in the next lemma.

\begin{lemma}\label{lem:massey-prop}
    Let $C^+\subseteq\F_q^{n+1}$ be a linear code with distance $d$ and dual distance $d^\perp$.
    Then, the Massey secret sharing scheme associated with $C^+$ is an $(n,\trec=n-d+2,\tpriv=d^\perp-2)$-ramp secret sharing scheme.
\end{lemma}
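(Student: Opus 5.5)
Both claims follow from the standard linear-algebra characterizations of minimum distance and dual distance, applied to the codewords of $C^+$ and the generator matrix $G^+$. I will handle reconstruction and privacy separately. Throughout I index coordinates of $\F_q^{n+1}$ by $\{0,1,\dots,n\}$, with coordinate $0$ holding the secret. The systematic form of $G^+$ ensures $c_0^+=s$, and every secret can be shared.

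\textbf{Reconstruction.} Let $T\subseteq[n]$ with $|T|\ge n-d+2$, so $T$ omits at most $d-1$ of the coordinates $1,\dots,n$. Suppose $c^+,c'^+\in C^+$ agree on $T$. Their difference $c^+-c'^+\in C^+$ vanishes on $T$, so its support lies in $\{0\}\cup([n]\setminus T)$. That set has size at most $1+(n-|T|)\le d-1<d$, so $c^+=c'^+$. Hence the codeword consistent with the shares is unique, and the true sharing is one such codeword. $\Rec$ therefore outputs $c_0^+=s$ with probability $1$.

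\textbf{Privacy.} Let $|T|\le d^\perp-2$ and $T'=T\cup\{0\}$, so $|T'|\le d^\perp-1$. I use the standard fact that if $C^+$ has dual distance $d^\perp$, then any $d^\perp-1$ coordinates of a uniformly random codeword are uniform and independent. This holds because the projection $C^+\to\F_q^{T'}$ is linear, and it is surjective: otherwise some nonzero $a\in\F_q^{T'}$ annihilates its image, and extending $a$ by zeros gives a nonzero dual codeword of weight at most $|T'|<d^\perp$.

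A uniform $x\in\F_q^{k+1}$ gives a uniform codeword $G^+x$. Sharing $s$ means conditioning on $x_0=s$, equivalently on $c_0^+=s$. Since $(c^+_0,(c^+_i)_{i\in T})$ is uniform on $\F_q^{T'}$, the conditional law of $(c^+_i)_{i\in T}$ given $c_0^+=s$ is uniform for every $s$. Hence $\Share_T(s)$ and $\Share_T(s')$ are identically distributed.

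The main point requiring care is that conditioning on $x_0=s$ is exactly conditioning on $c_0^+=s$. This holds because $G^+x$ is uniform over $C^+$ (all fibers of $x\mapsto G^+x$ have equal size) and $c_0^+=x_0$ by the systematic form. The remaining steps are routine. I also note that $\tpriv<\trec$, as required of a ramp scheme, follows from these two properties rather than being a separate claim.
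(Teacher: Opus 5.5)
Your proof is correct and follows the paper's argument. Reconstruction holds because the at most $d-1$ coordinates in $\{0\}\cup([n]\setminus T)$ are correctable erasures for a distance-$d$ code, and privacy holds because a uniform codeword of $C^+$ is $(d^\perp-1)$-wise uniform on $\{0\}\cup T$; you additionally supply two steps the paper leaves implicit, namely the surjectivity argument behind this uniformity and the identification of the sharing of $s$ with a uniform codeword conditioned on $c_0^+=s$.
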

\begin{proof}
    We provide a short argument for completeness.
    To see the claim about reconstruction, note that since $C^+$ has minimum distance $d$, it corrects any $d-1$ erasures.
    Therefore, $(n+1)-(d-1)=n-d+2$ shares are sufficient to recover $c^+=Gx$, and so we recover the secret $s=c^+_0$.
    To see the claim about privacy, note that since $C^+$ has dual distance $d^\perp$, then the uniform distribution over $C^+$ is $(d^\perp-1)$-wise independent.
    This means that if $c^+\sim C^+$, then for any indices $1\leq i_1<i_2<\cdots <i_{d^\perp-2}\leq n$ we have that $(c^+_0,c^+_{i_1},\dots,c^+_{i_{d^\perp-2}})$ is uniformly distributed over $\F_q^{d^\perp-1}$.
    This means that the shares associated with parties $i_1,\dots,i_{d^\perp-2}$ are independent of the secret $c_0^+$.
\end{proof}

\begin{remark}
    If $C^+$ is an MDS code of dimension $k+1$, then \cref{lem:massey-prop} yields a linear $(n,k)$-threshold secret sharing scheme.
    First, since $d=(n+1)-k+1=n-k+2$, we get that $\trec=n-d+2=k$.
    Second, since the dual of $C^+$ is also MDS with dimension $k^\perp=n+1-k$, we get that $d^\perp=(n+1)-k^\perp+1=k+1$, and so $\tpriv=d^\perp-2=k-1=\trec-1$.

    If the field size $q$ is sufficiently large compared to the block length $n$, then a random linear code will be MDS with high probability.
    This fact was leveraged by~\cite{MPSW21} to obtain locally leakage-resilient linear threshold secret sharing schemes from random linear codes over exponentially large fields.
\end{remark}

To study the local leakage-resilience of a Massey secret sharing scheme based on a linear code $C^+$ the following alternative perspective is useful.
Let $G\in\F_q^{n\times k}$ be the matrix obtained by removing the first column and first row of $G^+$.
Then, $G$ is the generator matrix of a code $C\subseteq\F_q^n$, which we call the \emph{reduced code} of $C^+$.
Denote the first column of $G^+$ by $v^+=(1,v_1,\dots,v_n)^\top$, and let $v=(v_1,\dots,v_n)^\top$.
Sharing a secret $s\in\F_q$ using the Massey secret sharing scheme associated with $C^+$ is equivalent to the following: sample $x\sim\F_q^k$, compute $c=(c_1,\dots,c_n)^\top=s\cdot v + Gx$, and set the $i$-th share to be $c_i$.
In other words, the shares of $s$ are generated by sampling a vector uniformly at random from the affine subspace $s\cdot v+C$.
This alternative perspective helps us see that the leakage-resilience of this secret sharing scheme is controlled by the discrepancy of the reduced code $C$, as formalized in the next lemma. 

\begin{lemma}\label{lem:LR-discrepancy-reduced}
    Let $(\Share,\Rec)$ be the Massey ramp secret sharing scheme associated to a linear code $C^+\subseteq \F_q^{n+1}$ with reduced code $C\subseteq\F_q^n$.
    Suppose that 
    \begin{equation*}
        \SD(g(C),g(U_n)) \leq \eps
    \end{equation*}
    for all $\gamma$-balanced tuples of leakage functions $g=(g_1,\dots,g_n)$.
    Then, $(\Share,\Rec)$ is $(\cG_\gamma,2\eps)$-leakage-resilient.
\end{lemma}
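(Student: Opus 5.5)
The plan is a triangle-inequality argument through the uniform distribution. Fix a $\gamma$-balanced tuple $g\in\cG_\gamma$ and two secrets $s,s'\in\F_q$. By the alternative description of the Massey scheme given just before the lemma, $\Share(s)$ is uniform over the affine subspace $s\cdot v+C$, and $\Share(s')$ is uniform over $s'\cdot v+C$. We therefore want to show
\[
    \SD\bigl(g(s\cdot v+C),\,g(U_n)\bigr)\le\eps
\]
for every $s$. The triangle inequality for statistical distance then gives $\SD(g(\Share(s)),g(\Share(s')))\le 2\eps$.

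The only issue is that the hypothesis concerns the linear code $C$ itself, not its translates. The key step is to absorb the shift into the leakage function. Write $a:=s\cdot v\in\F_q^n$ and define $g'_i(x):=g_i(x+a_i)$. Then $g'(c)=g(c+a)$, so $g(a+C)$ is identically distributed to $g'(C)$.

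Next check that $g'$ is still $\gamma$-balanced. We have $|g_i'^{-1}(0)|=|g_i^{-1}(0)-a_i|=|g_i^{-1}(0)|$, since translation is a bijection on $\F_q$. Similarly, $g'(U_n)$ has the same distribution as $g(U_n)$, because $U_n+a$ is again uniform on $\F_q^n$. Applying the hypothesis to $g'$ gives
\[
    \SD(g(a+C),g(U_n))=\SD(g'(C),g'(U_n))\le\eps.
\]

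The argument has no real obstacle. The one point needing care is verifying that the translated tuple $g'$ remains in $\cG_\gamma$, so that the hypothesis applies to it.
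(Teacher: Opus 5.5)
Your proposal is correct and follows the paper's proof essentially step for step. Like the paper, you use the triangle inequality through $g(U_n)$, absorb the shift $s\cdot v$ into a translated leakage tuple that stays $\gamma$-balanced, and note that $g(U_n)$ is invariant under that translation.
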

\begin{proof}
    Fix an arbitrary $\gamma$-balanced tuple of leakage functions $g=(g_1,\dots,g_n)$.
    Fix also two secrets $s,s'\in\F_q$, with corresponding tuples of shares $S=(S_1,\dots,S_n)$ and $S'=(S'_1,\dots,S'_n)$, respectively. 
    Recall that the shares of $s$ (resp.\ $s'$) correspond to a uniformly random sample from $s\cdot v+C$ (resp.\ $s'\cdot v+C$).
    Let $X$ and $X'$ be independent and uniformly distributed over $C$.
    Define also $g_s(z) = g(s\cdot v+z)$ for any $s\in\F_q$, and note that $g_s$ is a tuple of $\gamma$-balanced functions as well. 
    Then, 
    \begin{align*}
        \SD(g(S),g(S')) &\leq  \SD(g(s\cdot v +X),g(U_n)) + \SD(g(s'\cdot v +X'),g(U_n))\\
        & = \SD(g_s(X),g(U_n)) +  \SD(g_{s'}(X'),g(U_n))\\
        &= \SD(g_s(X),g_s(U_n)) +  \SD(g_{s'}(X'),g_{s'}(U_n))\\
        &= \SD(g_s(C),g_s(U_n)) +  \SD(g_{s'}(C),g_{s'}(U_n))\\
        &\leq 2\eps.
    \end{align*}
    The first inequality follows by the triangle inequality, the second equality uses the fact that $g(U_n)$ and $g_s(U_n)$ are identically distributed for any $s\in\F_q$, and the last inequality uses the hypothesis from the lemma statement applied to $g_s$ and $g_{s'}$, which are both $\gamma$-balanced.
\end{proof}

\begin{remark}
    We emphasize that there is no real novelty in the above statement and lemma; it is at least implicit in prior works, e.g.,~\cite{BDIR21}. Additionally, we remark that there is nothing particularly special about the family $\cG_\gamma$; all that is required is for the family to be closed under translations (i.e., if $g$ is in the family, then so is the function $g_z(x)=g(x+z)$ for all $z \in \F_q^n$). 
\end{remark}

\subsection{Resilience of Massey Ramp Secret Sharing Schemes from Random Linear Codes against $\gamma$-Balanced Leakage Functions}

We now use the standard framework described in \cref{sec:LCs-to-LSS} to prove \cref{thm:rampss-short}.
In fact, we will prove the stronger statement that a random linear code yields a linear ramp secret sharing scheme with the desired properties with high probability.

\subsubsection{Leakage-Resilience of Low-Rate Random Linear Codes}
We begin by proving our key lemma, which establishes the resilience of low-rate random linear codes against $\gamma$-balanced leakage functions, with the help of our main \cref{thm:mainGeneral}.
We will then couple this result with \cref{lem:LR-discrepancy-reduced} to show the existence of Massey ramp secret sharing schemes with reconstruction and privacy thresholds well below the $1/2$-barrier appearing in prior works.

\begin{lemma}\label{lem:LR-balanced}
    Fix any constant $\gamma\in(0, \frac{1}{2})$.
    For any given integer $n$, let $q>\frac{2}{1-2\gamma}$ be a prime and $\eta\in\left(0,\frac{50}{51}\log_q\left(\frac{2}{1-2\gamma}\right)\right)$ (both possibly depending on $n$) satisfying $\frac{n}{q^2\log(q/\eta)}\to\infty$ as $n\to\infty$. 
    Then, for all sufficiently large $n$ the following holds.
    Let $C\subseteq\F_q^{n}$ be a random linear code of rate $R=\log_q\left(\frac{2}{1-2\gamma}\right)+\eta$.
    For a tuple of leakage functions $g=(g_1,\dots,g_n)$,
    write $g(C)=(g_1(c_1),\dots,g_n(c_n))$ for $c\sim C$ and $g(U_n)=(g_1(u_1),\dots,g_n(u_n))$ for $u\sim \F_q^n$.
    Then,
    \begin{equation*}
        \SD(g(C),g(U_n))\leq q^{-\frac{\eta n}{2q^2\log(100q/\eta)}}
    \end{equation*}
    simultaneously for all $\gamma$-balanced tuples of leakage functions $g$
    with probability at least $1-q^{-\frac{\eta n}{100}}-q^{n(1+(q+1)/\log q) -\frac{\eta^2 n^2}{80 q^4 \log^2(100 q/\eta)}}$ over the sampling of $C$.
\end{lemma}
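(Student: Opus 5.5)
The plan is to reduce statistical distance to relative deviations on combinatorial rectangles, and then invoke \cref{cor:SimplerMain} for the family of all rectangles arising as level sets of $\gamma$-balanced leakage tuples.

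\textbf{Reduction to rectangles.} Fix a $\gamma$-balanced tuple $g$ and an output $b\in\{0,1\}^n$. Set $S_i=g_i^{-1}(b_i)$ and $S=S_1\times\cdots\times S_n$. Then
\[
\Pr[g(C)=b]=\frac{|C\cap S|}{|C|}, \qquad \Pr[g(U_n)=b]=\frac{|S|}{q^n},
\]
so that
\[
\SD(g(C),g(U_n))=\frac12\sum_b \Pr[g(U_n)=b]\,|\delta_{C,S}|\le \frac12\max_b|\delta_{C,S}|.
\]
It therefore suffices to bound $|\delta_{C,S}|$ simultaneously for every rectangle $S$ with $(\frac12-\gamma)q\le|S_i|\le(\frac12+\gamma)q$.

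\textbf{The family and its parameters.} Let $\cF$ be the family of normalized indicators $f_S$ over all such rectangles. Since each $S_i$ is chosen among at most $2^q$ subsets, $|\cF|\le 2^{qn}=q^{qn/\log q}$. Each side satisfies $|S_i|\ge(\frac12-\gamma)q>1$ because $q>2/(1-2\gamma)$, so $\ell_i\ge 2$. Also $|S_i|\le q-1$, which holds since $(\frac12+\gamma)q<q$ and side sizes are integers, and a side equal to $q$ is impossible when the complement is nonempty. Hence \cref{lem:zero-error-con} gives $\alpha=\cos(\pi/q)$. For the capacity parameter,
\[
\norm{f_S}_\infty=\prod_i\frac{q}{|S_i|}\le\left(\frac{2}{1-2\gamma}\right)^n,
\]
so $\beta\le\log_q\frac{2}{1-2\gamma}$. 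If $\beta$ is strictly smaller, I would pad $\cF$ with one harmless function attaining the bound, or simply note that the corollary's proof only needs an upper bound on $\norm{f}_\infty q^{-k}$. The rate $R=\beta+\eta$ then matches the statement, and the hypothesis $\eta\le 50\beta/51$ is exactly the assumed range of $\eta$.

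\textbf{Applying the corollary.} By \cref{lem:lower-bound-on-alpha}, $\log(1/\alpha)\ge c/q^2$, so $\lambda\ge \frac{c}{2q^2\log(100q/\eta)}$; with the explicit constant $c=\pi^2/(8\ln 2)>1$ this is at least $1/(2q^2\log(100q/\eta))$. Choose $\theta=\eta\lambda/2$, or more precisely $\theta=\frac{\eta}{\sqrt{20}\cdot 2q^2\log(100q/\eta)}$ adjusted to match the stated exponent $\frac{\eta^2n^2}{80q^4\log^2(100q/\eta)}$, which corresponds to $\theta^2/20$ with $\theta=\frac{\eta}{2q^2\log(100q/\eta)}$. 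This choice requires $\theta<\eta\lambda$, which holds given the lower bound on $\lambda$ (strictly, using $c>1$). The deviation bound is then
\[
q^{-2(\eta\lambda-\theta)n}\le q^{-\eta n/(q^2\log(100q/\eta))}\cdot(\text{slack}).
\]
Halving via the $\frac12$ from the statistical-distance reduction gives the claimed $q^{-\frac{\eta n}{2q^2\log(100q/\eta)}}$. The failure probability is
\[
q^{-\eta n/200}+|\cF|\,q^{n-\theta^2n^2/20},
\]
and with $|\cF|\le q^{qn/\log q}$ this matches the stated form up to constants; I may need to use the sharper first term $q^{-\eta n/100}$ directly from \cref{thm:mainGeneral}.

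\textbf{Main obstacle.} The principal difficulty is verifying $n\ge n_0(q,\eta,\alpha,\theta)$, which is where the assumption $n/(q^2\log(q/\eta))\to\infty$ enters, together with tracking the constants so that the exponents come out exactly as stated. Everything conceptual is handled by the earlier results.
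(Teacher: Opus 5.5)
Your plan is the paper's proof: reduce the statistical distance to relative deviations on the rectangles $S_{g,b}$, get $\alpha=\cos(\pi/q)$ from \cref{lem:zero-error-con} and $\|f\|_\infty\le(\frac{2}{1-2\gamma})^n$ from balancedness, and invoke \cref{cor:SimplerMain}; your remark that only this upper bound on $\beta$ is needed is right, since the paper's claimed equality fails when $(\frac12-\gamma)q\notin\mathbb{Z}$ and the proofs of \cref{thm:mainGeneral,cor:SimplerMain} only use the upper bound. The one slip is in your exponent bookkeeping: the factor $\frac12$ in the statistical-distance formula cannot halve an exponent, but with your $\theta=\frac{\eta}{2q^2\log(100q/\eta)}$ and $\log(1/\alpha)\ge\frac{\pi^2}{8\ln 2}\,q^{-2}>\frac32\,q^{-2}$ from \cref{lem:lower-bound-on-alpha} you get $2(\eta\lambda-\theta)\ge\frac{\eta}{2q^2\log(100q/\eta)}$ directly, and $\theta^2/20$ reproduces the stated second error term exactly (the paper instead takes $\theta=\eta\lambda/2$, which makes the deviation bound $q^{-\eta\lambda n}$ immediate).
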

\begin{proof}
    We apply \cref{cor:SimplerMain}.
    For each tuple \(g=(g_1,\dots,g_n)\), index \(i\in[n]\), and bit
    \(a\in\bits\), define \(S_{g,i,a}:=g_i^{-1}(a)\). For
    \(b=(b_1,\dots,b_n)\in\bits^n\), define
    \(S_{g,b}:=S_{g,1,b_1}\times\cdots\times S_{g,n,b_n}\).
    Note that
    \begin{equation*}
        \left(\frac{1}{2}-\gamma\right)q\leq |S_{g,i,b}|\leq  \left(\frac{1}{2}+\gamma\right)q,
    \end{equation*}
    since $g$ is $\gamma$-balanced.
    
    Consider the family of functions $\cF$ consisting of all normalized indicator functions $f_{g,b}=\frac{q^n\mathbf{1}_{S_{g,b}}}{|S_{g,b}|}$.
    Note that $|\cF|\leq 2^n\cdot 2^{qn}=q^{n(q+1)/\log q}$.
    Also,
    \begin{equation*}
        \|f_{g,b}\|_\infty \leq \frac{1}{\left(\frac{1}{2}-\gamma\right)^n}
    \end{equation*}
    with equality for some $g$ and $b$, and so
    \begin{equation*}
        \beta = \max_{f_{g,b}\in\cF}\frac{1}{n}\log_q \|f_{g,b}\|_\infty =\log_q\left(\frac{2}{1-2\gamma}\right).
    \end{equation*}
    We can also show that the functions $f_{g,b}$ are appropriately Fourier-concentrated.
    Fix $y\in\F_q^n$. Since $q$ is prime, by \cref{lem:zero-error-con} we get that
    \begin{equation*}
        \left|\widehat{f_{g,b}}(y) \right|\leq \prod_{i:y_i\neq 0} \frac{\sin(\pi |S_{g,i,b}|/q)}{|S_{g,i,b}|\sin(\pi/q)} \leq \cos(\pi/q)^{\wt{y}},
    \end{equation*}
    where we have used the fact that $2\leq |S_{g,i,b}|\leq q-1$ by the fact that $\gamma<\frac{1}{2}-\frac{1}{q}$.
    Therefore, the functions $f_{g,b}$ are $\alpha$-Fourier-concentrated with $\alpha=\cos(\pi/q)<1$.

    We now show how to instantiate \cref{cor:SimplerMain} to obtain the desired result.
    First, note that $C$ has rate $\beta+\eta$ with $\eta\in\left(0,\frac{50}{51}\beta\right)$.
    To control the $\lambda$ parameter in the statement of \cref{cor:SimplerMain}, note that $\alpha=\cos(\pi/q)\geq \frac{1}{100^2 q^2}$ for all $q>2$, and so
    \begin{equation*}
        \frac{\log(1/\alpha)}{2\log(100q/\eta)} \leq 1,
    \end{equation*}
    meaning that $\lambda=\min\left(1,\frac{\log(1/\alpha)}{2\log(100q/\eta)}\right)=\frac{\log(1/\alpha)}{2\log(100q/\eta)}$.
    Take
    \begin{equation*}
        \theta =\frac{\eta\lambda}{2}\in(0,\eta\lambda).
    \end{equation*}
    Note that $\theta n\to\infty$ as $n\to\infty$ since $\log(1/\alpha)=\log(1/\cos(\pi/q))\geq \Omega(1/q^2)$ for all $q>2$ by \Cref{lem:lower-bound-on-alpha}, and since $\frac{n}{q^2\log(q/\eta)}\to\infty$ as $n\to\infty$ by hypothesis.
    Finally, instantiating \cref{cor:SimplerMain} with $\cF$ and these choices of $\eta$ and $\theta$ gives that
    \begin{equation}\label{eq:delta-UB-g}
        |\delta_{C,f_{g,b}}|\leq q^{-2(\eta\lambda-\theta)n}=q^{-\eta\lambda n} \leq q^{-\frac{\eta n}{2q^2 \log(100q/\eta)}}
    \end{equation}
    simultaneously for every $f_{g,b}$ with probability at least
    \begin{equation*}
        1-q^{-\frac{\eta n}{100}}-|\cF| q^{n-\frac{\theta^2 n^2}{20}} \geq 1- q^{-\frac{\eta n}{100}} - q^{n(1+(q+1)/\log q) - \frac{\eta^2 n^2}{80q^4\log^2(100q/\eta)}}.
    \end{equation*}

    To conclude the proof, fix an arbitrary tuple of $\gamma$-balanced leakage functions $g$ and note that
    \begin{align*}
        \SD(g(C),g(U_n)) &= \frac{1}{2}\sum_{b\in\bits^n} \left|\Pr_{c\sim C}[g(c)=b] - \Pr_{u\sim\F_q^n}[g(u)=b]\right|\\
        &=\frac{1}{2}\sum_{b\in\bits^n} \left|\frac{|C\cap S_{g,b}|}{|C|} - \frac{|S_{g,b}|}{q^n}\right|\\
        & = \frac{1}{2}\sum_{b\in\bits^n} \frac{|S_{g,b}|}{q^n}\left|\delta_{C,f_{g,b}}\right|\\
        &\leq \frac{1}{2}\sum_{b\in\bits^n} \frac{|S_{g,b}|}{q^n}\cdot q^{-\frac{\eta n}{2q^2 \log(100q/\eta)}}\\
        &\leq q^{-\frac{\eta n}{2q^2 \log(100q/\eta)}},
    \end{align*}
    where the first inequality follows from \cref{eq:delta-UB-g} and the second inequality uses the fact that, for any fixed $g$, the sets $S_{g,b}$ partition $\F_q^n$.
\end{proof}

The following corollary highlights that \cref{lem:LR-balanced} gives good guarantees even when the field size $q$ is polynomial in $n$.
\begin{corollary}\label{cor:simple-LR-balanced}
    In the context of \cref{lem:LR-balanced}, take $\eta=\Theta(\log_q(\frac{2}{1-2\gamma}))$ and any prime $q\leq n^{1/5-\nu}$ for an arbitrary constant $\nu>0$.
    Let $C\subseteq\F_q^{n}$ be a random linear code of rate $R=\log_q\left(\frac{2}{1-2\gamma}\right)+\eta$.
    Then,
    \begin{equation*}
        \SD(g(C),g(U_n))\leq q^{-n^{\Omega(1)}}
    \end{equation*}
    simultaneously for all $\gamma$-balanced tuples of leakage functions $g$
    with probability at least $1-q^{-n^{\Omega(1)}}$ over the sampling of $C$.
\end{corollary}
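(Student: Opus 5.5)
This corollary is a direct instantiation of \cref{lem:LR-balanced}. The work is to check that the lemma's hypotheses hold with these parameters, and then to simplify its two bounds. Write $\beta_0:=\log_q\left(\frac{2}{1-2\gamma}\right)$. Since $\gamma$ is a constant, $\beta_0=\Theta(1/\log q)$.

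First, choose the hidden constant in $\eta=\Theta(\beta_0)$ so that $\eta\in\left(0,\frac{50}{51}\beta_0\right)$; for instance $\eta=\beta_0/2$. Also require $q>\frac{2}{1-2\gamma}$, which holds for all large $n$ provided $q$ exceeds this constant (otherwise the statement is read for such primes). Next check the growth condition $\frac{n}{q^2\log(q/\eta)}\to\infty$. Since $1/\eta=\Theta(\log q)$, we have $\log(q/\eta)=O(\log q)=O(\log n)$. With $q\le n^{1/5-\nu}$ this gives $q^2\log(q/\eta)\le n^{2/5}\log n=o(n)$, as needed.

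Now the deviation bound. The exponent $\frac{\eta n}{2q^2\log(100q/\eta)}$ is at least
\[
\Omega\!\left(\frac{n}{q^2\log^2 q}\right)\ge n^{1-2/5+2\nu-o(1)},
\]
which is $n^{\Omega(1)}$. Hence $\SD(g(C),g(U_n))\le q^{-n^{\Omega(1)}}$ for every $\gamma$-balanced $g$.

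For the failure probability there are two terms. The term $q^{-\eta n/100}$ is $q^{-\Omega(n/\log q)}=q^{-n^{\Omega(1)}}$. The second term is the main obstacle, since it needs
\[
n\bigl(1+(q+1)/\log q\bigr)\ll \frac{\eta^2n^2}{80q^4\log^2(100q/\eta)}.
\]
The left side is $O(nq)$. The right side is $\Omega\!\left(\frac{n^2}{q^4\log^4 q}\right)$, because $\eta^2=\Theta(1/\log^2 q)$ and $\log^2(100q/\eta)=O(\log^2 q)$. So the requirement reduces to $q^5\log^4 q=o(n)$, with a polynomial gap. This is exactly where $q\le n^{1/5-\nu}$ enters: it gives $q^5\log^4 q\le n^{1-5\nu+o(1)}$. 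The right side is therefore at least $n^{1+5\nu-o(1)}$, which dominates $nq\le n^{6/5}$ only if we compare carefully. Indeed,
\[
\frac{n^2}{q^4\log^4 q}\Big/ (nq) = \frac{n}{q^5\log^4 q}\ge n^{5\nu-o(1)},
\]
so the exponent of the second term is at most $-\Omega\!\left(nq\cdot n^{5\nu-o(1)}\right)\le -n^{\Omega(1)}$. Summing the two terms gives total failure probability $q^{-n^{\Omega(1)}}$, which finishes the proof.
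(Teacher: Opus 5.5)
Your proposal is correct and matches the paper's proof. Both arguments simply instantiate \cref{lem:LR-balanced}, use $\eta=\Theta(1/\log q)$ to reduce the growth condition and the deviation exponent to bounds on $n/(q^2\log^2 q)$, and invoke $q\le n^{1/5-\nu}$ to get $\frac{n^2}{q^4\log^4 q}=\omega(nq)$ for the second failure term. Your intermediate bound of $n^{1+5\nu-o(1)}$ on the right side is a detour, but the ratio computation that follows it is exactly the right comparison.
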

\begin{proof}
    Note that $\eta=\Omega(1/\log q)$ since we take $\gamma$ to be a fixed constant.
    Therefore, $q^2\log(100q/\eta)=O(q^2\log q)$, and so for the conditions of \cref{lem:LR-balanced} to be satisfied it is enough for $q$ to satisfy $\frac{n}{q^2\log q}\to\infty$ as $n\to\infty$, which holds due to our upper bound on $q$ above.
    Again because $\eta=\Omega(1/\log q)$, from \cref{lem:LR-balanced} we get that
    \begin{equation*}
        |\delta_{C,f_{g,b}}| \leq q^{-\Omega\left(\frac{n}{q^2\log^2 q}\right)}=q^{-n^{\Omega(1)}}
    \end{equation*}
    simultaneously for all $f_{g,b}$ with probability at least
    \begin{equation*}
        1- q^{-\frac{\eta n}{100}} - q^{n(1+(q+1)/\log q) - \frac{\eta^2 n^2}{80q^4\log^2(100q/\eta)}} = 1-q^{-n^{\Omega(1)}}.
    \end{equation*}
    In the equality we used the fact that $\eta=\Omega(1/\log q)$ and that $\frac{n^2}{q^4\log^4 q}=\omega(nq)$ under our assumption that $q\leq n^{1/5-\nu}$.
\end{proof}

\subsubsection{The Main Result for Massey Ramp Secret Sharing}

The following theorem is an easy consequence of \cref{lem:massey-prop,lem:LR-discrepancy-reduced,lem:LR-balanced}.
It shows that the Massey ramp secret sharing scheme induced by a code of rate slightly above $\log_q\left(\frac{2}{1-2\gamma}\right)$ will be leakage-resilient against the family of $\gamma$-balanced leakage functions with high probability.
For sufficiently large $q$ (as a function of $\gamma$), both thresholds will be linear in $n$ and much smaller than the $n/2$-barrier present in prior works~\cite{MPSW21,TX22}.

First, we state two well-known facts that will be useful in our analysis.
The first one is easily implied from the proof of \cite[Section~4.2.2, Theorem~4.2]{GRS}.
Recall that the $q$-ary entropy function $h_q$ has a well-defined continuous inverse on the interval $[0,1-1/q]$.

\begin{lemma} \label{lem:prob-bound-to-GV}
    Suppose that $C \subseteq \F_q^n$ is sampled from a distribution over linear codes of rate $R$ such that for all $x \in \F_q^n\setminus\{0\}$ we have $\Pr[x \in C] \leq q^{k-n}$.
    Let $\theta>0$ be arbitrary (possibly depending on $n$ and $q$).
    Then, $C$ has minimum distance at least $h_q^{-1}(1-R-\theta)n$ with probability at least $1-q^{-\theta n}$.
\end{lemma}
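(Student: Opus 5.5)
This is the standard first-moment (Gilbert--Varshamov) argument: bound the expected number of nonzero codewords of low weight, then apply Markov's inequality. Set $\rho:=h_q^{-1}(1-R-\theta)$; if $1-R-\theta\le 0$ the statement is vacuous, so assume $1-R-\theta\in(0,1-1/q]$. Let
\[
X:=\bigl|\{x\in C\setminus\{0\}:\wt{x}<\rho n\}\bigr|.
\]
The code $C$ has minimum distance at least $\rho n$ exactly when $X=0$, so it suffices to show $\Pr[X\ge 1]\le q^{-\theta n}$.

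First, by linearity of expectation and the hypothesis $\Pr[x\in C]\le q^{k-n}$ for every nonzero $x$, where $k=Rn$,
\[
\E[X]\le \bigl(|B_\rho(0)|-1\bigr)\, q^{k-n}\le |B_\rho(0)|\, q^{(R-1)n}.
\]

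Second, I would use the standard volume bound $|B_\rho(0)|\le q^{h_q(\rho)n}$, valid for $0\le\rho\le 1-1/q$. For the reader's convenience I would sketch it. Expand
\[
1=\bigl(\rho+(1-\rho)\bigr)^n\ge\sum_{j\le\rho n}\binom nj (q-1)^j\Bigl(\frac{\rho}{q-1}\Bigr)^j(1-\rho)^{n-j}.
\]
Since $\rho/((q-1)(1-\rho))\le 1$ when $\rho\le 1-1/q$, each term is at least $\binom nj (q-1)^j\,(\rho/(q-1))^{\rho n}(1-\rho)^{(1-\rho)n}=\binom nj (q-1)^j q^{-h_q(\rho)n}$. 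Summing over $j$ gives the bound.

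Third, combine the two estimates. By the choice of $\rho$ and continuity of $h_q^{-1}$, $h_q(\rho)=1-R-\theta$, so
\[
\E[X]\le q^{(1-R-\theta)n+(R-1)n}=q^{-\theta n}.
\]
Markov's inequality then gives $\Pr[X\ge 1]\le q^{-\theta n}$, which is the claim.

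The only delicate point is the volume bound. It needs $\rho\le 1-1/q$, which holds automatically because $h_q^{-1}$ maps into $[0,1-1/q]$. Boundary cases such as $\rho n$ not being an integer are harmless, since only the inequality $j\le\rho n$ is used. Note that only the marginal bound $\Pr[x\in C]\le q^{k-n}$ is needed, so the lemma applies equally to the random generating-matrix model, where $\Pr[x\in C]\le q^{k}\cdot q^{-n}$ by a union bound over messages.
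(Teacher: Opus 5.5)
Your proof is correct and is essentially the argument the paper defers to (the proof that random linear codes attain the Gilbert--Varshamov bound, \cite[Theorem~4.2]{GRS}): a first-moment/union bound over nonzero vectors of weight below $\rho n$, combining the hypothesis $\Pr[x\in C]\le q^{k-n}$ with the volume bound $|B_\rho(0)|\le q^{h_q(\rho)n}$, and union-bounding over low-weight vectors rather than messages is exactly what the marginal hypothesis calls for. One harmless slip: the argument $1-R-\theta$ of $h_q^{-1}$ should range over $(0,1]$, not $(0,1-1/q]$ (it is $\rho$ that lies in $[0,1-1/q]$), but your proof goes through verbatim on that full range.
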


The second fact gives a simple bound on the $q$-ary entropy function that will be useful to control its inverse later.
A proof can be found in~\cite[proof of Proposition 3.3.4]{GRS}.
\begin{lemma}\label{lem:bound-hq}
    For any positive integer $q$ and real number $\rho\in(0,1-1/q]$ we have
    \begin{equation*}
        h_q(\rho)\leq \rho+\frac{1}{\log q}.
    \end{equation*}
    Therefore, $h_q^{-1}(\rho+\frac{1}{\log q})\geq \rho$.
\end{lemma}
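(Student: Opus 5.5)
The claim is $h_q(\rho)\le \rho+\frac{1}{\log q}$ for $\rho\in(0,1-1/q]$. The plan is to split the $q$-ary entropy into a term that is at most $\rho$ and a term that is at most the binary entropy rescaled by $1/\log q$. Writing
\[
    h_q(\rho)=\rho\log_q(q-1)+\rho\log_q\frac1\rho+(1-\rho)\log_q\frac{1}{1-\rho},
\]
the last two terms combine to $h_2(\rho)/\log_2 q$, where $h_2$ is the binary entropy with logarithms in base $2$. So
\[
    h_q(\rho)=\rho\log_q(q-1)+\frac{h_2(\rho)}{\log q}.
\]

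Each piece is then bounded separately. First, $\log_q(q-1)\le 1$ because $q-1<q$, so the first term is at most $\rho$. Second, $h_2(\rho)\le 1$ for every $\rho\in(0,1)$, since binary entropy peaks at $\rho=1/2$ with value $1$. Together these give $h_q(\rho)\le \rho+\frac{1}{\log q}$. Here $\log$ denotes $\log_2$, the convention the paper uses in expressions such as $\log(1/\alpha)$. The edge case $q=2$ needs no separate treatment, since the same decomposition holds there and both bounds are exact ($\log_2 1=0$).

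For the statement about the inverse, use that $h_q$ is continuous and strictly increasing on $[0,1-1/q]$, with range $[0,1]$. Suppose $\rho+\frac{1}{\log q}$ lies in this range, which is the implicit requirement for the expression $h_q^{-1}(\rho+\frac{1}{\log q})$ to make sense. Since $h_q(\rho)\le \rho+\frac1{\log q}$, applying the increasing function $h_q^{-1}$ to both sides gives $\rho\le h_q^{-1}(\rho+\frac1{\log q})$.

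None of these steps is a real obstacle. The only point needing care is the domain condition in the inverse statement: it must be read as applying only when the argument $\rho+\frac{1}{\log q}$ is at most $1$, since otherwise $h_q^{-1}$ is not defined there.
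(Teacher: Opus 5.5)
Your proof is correct and is essentially the argument the paper defers to (the proof of Proposition 3.3.4 in GRS): write $h_q(\rho)=\rho\log_q(q-1)+h_2(\rho)/\log q$, bound $\log_q(q-1)\le 1$ and $h_2(\rho)\le 1$, and obtain the inverse statement from the monotonicity of $h_q$ on $[0,1-1/q]$. Your observation that the inverse statement implicitly requires $\rho+\frac{1}{\log q}\le 1$ is the right reading of the domain condition.
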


We are now ready to prove our main result.
\begin{theorem}[Full version of \cref{thm:rampss-short}]\label{thm:rampss-full}
    Fix an arbitrary constant $\gamma\in(0, \frac{1}{2})$.
    For any given integer $n$, let $q>\frac{8}{1-2\gamma}$ be an arbitrary prime, $\eta\in\left(0,\frac{50}{51}\log_q\left(\frac{2}{1-2\gamma}\right)\right)$, and $\theta\in \left(\frac{\log_q n}{n},\log_q\left(\frac{2}{1-2\gamma}\right)\right)$ (all possibly depending on $n$) satisfying $\frac{n}{q^2\log(q/\eta)}\to\infty$ as $n\to\infty$. 
    Then, for all sufficiently large $n$ the following holds.
    Let $C\subseteq\F_q^{n}$ be a random linear code of dimension $Rn$ with $R=\log_q\left(\frac{2}{1-2\gamma}\right)+\eta < 1-\theta$, let $v \in \F_q^n$ be sampled uniformly (and independently of $G$), and let $C^+ \subseteq \F_q^{n+1}$ be the code generated by 
    \[
        G^+ = \begin{bmatrix}
        1 & 0 \\ v & G
        \end{bmatrix} \in \F_q^{(n+1)\times(k+1)} \eperiod
    \]
    Then, the Massey ramp secret sharing scheme associated with $C^+$ is an $(n,\trec,\tpriv,\cG_\gamma,\eps)$-leakage-resilient ramp secret sharing scheme with
    \begin{align*}
         &\trec \leq (1-h_q^{-1}(1-R-\theta))n+2\leq \left(R+\theta+\frac{1}{\log q}\right)n+2,\\
         &\tpriv \geq h_q^{-1}(R-\theta)n-2 \geq \left(R-\theta-\frac{1}{\log q}\right)n - 2,
    \end{align*}
    and 
    \begin{equation*}
        \eps \leq 2q^{-\frac{\eta n}{2q^2\log(100q/\eta)}}
    \end{equation*}
    with probability at least
    \begin{equation*}
        1-q^{-\theta n/2} - q^{-(1-R-\theta)n} -q^{-\frac{\eta n}{100}}-q^{n(1+(q+1)\log_q 2) -\frac{\eta^2 n^2}{80 q^4 \log^2(100 q/\eta)}}
    \end{equation*}
    over the sampling of $C^+$.
\end{theorem}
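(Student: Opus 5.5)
The proof combines three ingredients: \cref{lem:massey-prop} for the thresholds, \cref{lem:LR-discrepancy-reduced} together with \cref{lem:LR-balanced} for leakage resilience, and \cref{lem:prob-bound-to-GV} for the distance bounds. The failure events are then combined by a union bound. I handle each ingredient in turn.

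\textbf{Leakage resilience.} The reduced code of $C^+$ is exactly the column span $C$ of $G$, a random linear code of rate $R=\log_q(\frac{2}{1-2\gamma})+\eta$. The hypotheses of \cref{lem:LR-balanced} hold: $q>\frac{8}{1-2\gamma}>\frac{2}{1-2\gamma}$, the range of $\eta$ matches, and $n/(q^2\log(q/\eta))\to\infty$. So, except with probability $q^{-\eta n/100}+q^{n(1+(q+1)/\log q)-\eta^2n^2/(80q^4\log^2(100q/\eta))}$, we have $\SD(g(C),g(U_n))\le q^{-\eta n/(2q^2\log(100q/\eta))}$ for every $\gamma$-balanced $g$. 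Applying \cref{lem:LR-discrepancy-reduced} then gives $\eps\le 2q^{-\eta n/(2q^2\log(100q/\eta))}$. This uses that the first row of $G^+$ is $(1,0)$, so the shares of $s$ are uniform on $sv+C$.

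\textbf{Reconstruction threshold.} By \cref{lem:massey-prop}, $\trec=n-d+2$, so I need $d(C^+)\ge h_q^{-1}(1-R-\theta)n$. For any nonzero $x\in\F_q^{n+1}$ I bound $\Pr[x\in C^+]$, where $C^+$ is spanned by the random vectors $(1,v)$ and the columns $(0,G_j)$.
\begin{itemize}
\item If $x_0=0$, membership requires $x_{[1..n]}\in C+\spn(v)$. A union bound over the $q$ multiples $a\,v$ with $a\in\F_q$ gives probability roughly $q^{k+1-n}$.
\item If $x_0\neq 0$, membership forces $x_{[1..n]}-x_0v\in C$. Since $v$ is uniform and independent of $G$, this has probability $\Pr[\text{uniform vector}\in C]\le q^{k-n}$.
\end{itemize}
So every nonzero $x$ lies in $C^+$ with probability at most about $q^{(k+1)-(n+1)}$. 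This is the hypothesis of \cref{lem:prob-bound-to-GV}, applied to a code of length $n+1$ and dimension $k+1$. After absorbing the $\pm1$ length and dimension shifts into $\theta$ (the gap between $\theta$ and $\theta/2$ in the failure bound leaves room for this), $d\ge h_q^{-1}(1-R-\theta)n$ with failure probability at most $q^{-\theta n/2}$. Then \cref{lem:bound-hq} converts $1-h_q^{-1}(1-R-\theta)$ into at most $R+\theta+1/\log q$.

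\textbf{Privacy threshold.} By \cref{lem:massey-prop}, $\tpriv=d^\perp-2$, so I need $d^\perp\ge h_q^{-1}(R-\theta)n$.
\begin{itemize}
\item First I show $G^+$ has full rank $k+1$ except with probability $\le q^{-(1-R-\theta)n}$ or so, using the rank bound from \cref{lem:randomCodeConcentration}.
\item Conditioned on full rank, $C^+$ is uniform over $(k+1)$-dimensional subspaces containing a vector with first coordinate $1$. Its dual has dimension $n-k$. For fixed nonzero $y$, $\Pr[y\perp C^+]$ is the probability that $\langle y,(1,v)\rangle=0$ and $y_{[1..n]}\perp G$.
\item If $y_{[1..n]}\ne 0$ this is $q^{-(k+1)}=q^{(n-k)-(n+1)}$. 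If $y_{[1..n]}=0$ then $y_0\neq0$ and $\langle y,(1,v)\rangle=y_0\ne0$, so the probability is $0$.
\end{itemize}
Then \cref{lem:prob-bound-to-GV} applied to the dual, which has rate about $1-R$, gives $d^\perp\ge h_q^{-1}(R-\theta)n$. Applying \cref{lem:bound-hq} with $\rho=R-\theta-1/\log q$ gives $h_q^{-1}(R-\theta)\ge R-\theta-1/\log q$.

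\textbf{Main obstacle.} The leakage part is immediate. The real work is the bookkeeping: matching the failure terms $q^{-\theta n/2}$ and $q^{-(1-R-\theta)n}$ to the distance and dual-distance/rank events while absorbing the $\pm1$ shifts in length and dimension. I expect to tweak $\theta$ slightly there, using $\theta>\log_q n/n$ to absorb polynomial factors.
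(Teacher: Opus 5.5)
Your proposal is correct and follows the paper's proof essentially step for step. Leakage resilience comes from \cref{lem:LR-balanced} plus \cref{lem:LR-discrepancy-reduced}. The distance and dual-distance bounds come from bounding $\Pr[x\in C^+]$ and $\Pr[y\in(C^+)^\perp]$ and applying \cref{lem:prob-bound-to-GV}, then \cref{lem:massey-prop} and \cref{lem:bound-hq}. The full-rank event gives the $q^{-(1-R-\theta)n}$ term, and a union bound finishes. The one loose step is your $x_0=0$ case: since $(1,v)$ is the only generator with a nonzero first coordinate, $x_0=0$ forces its coefficient to vanish, so membership is exactly $x_{[1..n]}\in C$, with probability at most $q^{k-n}$. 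Together with your $x_0\neq 0$ case (equivalently, the paper's union bound over messages $m$ with $m_0=x_0$), this gives exactly $\Pr[x\in C^+]\le q^{(k+1)-(n+1)}$, the clean hypothesis of \cref{lem:prob-bound-to-GV}, and spares you absorbing a factor of $q$. Conditioning on full rank in the dual computation is likewise unnecessary, though harmless.
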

\begin{proof}
    Note that $C^+\subseteq\F_q^{n+1}$ is a linear code of design rate $R'$ satisfying $R\leq R'\leq R+\frac{1}{n+1}$. We first note that $C^+$ has rate $R'$ with high probability. Indeed, this is the probability that $G^+$ has rank $k+1$, which is the same as the probability that $G$ has rank $k$, which is
    \[
        \prod_{i=0}^{k-1}\left(1-\frac{1}{q^{n-i}}\right) \geq 1-kq^{k-n} \geq 1-q^{-(1-R)n +\log_q n}\geq 1-q^{-(1-R-\theta)n}
    \]
    since $\theta> \frac{\log_q n}{n}$ by hypothesis. 
    
    We next observe that if $x \in \F_q^{n+1}\setminus \{0\}$, then
    \begin{align}
        \PROver{v,G}{x \in C^+} = \PROver{v,G}{\exists m \in \F_q^{k+1}\setminus\{0\}\text{ s.t. }G^+m=x} \leq \sum_{m \in \F_q^{k+1}\setminus\{0\}}\PROver{v,G}{G^+m=x} \eperiod \label{eq:prob-sum-over-message}
    \end{align}
    Now, if one writes $m = (m_0,\tilde m)$ with $\tilde m \in \F_q^k$ and $x = (x_0,\tilde x)$ with $\tilde x \in \F_q^n$, we see $G^+m = x \iff m_0 v + G\tilde m=x$, which holds iff $m_0=x_0$ and $G\tilde m = \tilde x-x_0v$. Now, if $\tilde m=0$, so $m_0=x_0\neq 0$, we have $G\tilde m=\tilde x-x_0v$ iff $v=\tilde x$, which occurs with probability $q^{-n}$. Otherwise, $G\tilde m$ is uniformly random over $\F_q^n$, so again $\Pr[G\tilde m = \tilde x-x_0v]=q^{-n}$. Thus,
    \[
        \eqref{eq:prob-sum-over-message} =\sum_{\substack{m \in \F_q^{k+1}\setminus\{0\}\\m_0=x_0}} q^{-n} \leq q^k \cdot q^{-n} = q^{(k+1)-(n+1)} \eperiod
    \]
    We can therefore apply \Cref{lem:prob-bound-to-GV} to conclude that $C^+$ has distance at least $h_q^{-1}(1-R-\theta)(n+1)$ with probability at least $1-q^{-\theta (n+1)}$.
    
    As for the dual of $C^+$, we have $x \in (C^+)^\perp$ if and only if $x^\top G^+=0$. Writing again $x = (x_0,\tilde x)$ with $x_0 \in \F_q$ and $\tilde x \in \F_q^n$, this is thus equivalent to $\langle v,\tilde x\rangle=x_0$ and $\tilde x^\top G=0$. If $\tilde x = 0$, then $x_0\neq 0$, so the probability that $\langle v,\tilde x \rangle=x_0$ is $0$ (we have $\Pr[\langle v,\tilde x\rangle=0]=1$). Else, $(\langle v,\tilde x\rangle,\tilde x^TG)$ is distributed uniformly over $\F_q^{k+1}$ (over the randomness of $v$ and $G$), so it equals $0$ with probability $q^{-(k+1)}$. Hence, 
    \[
        \PROver{v,G}{x \in (C^+)^\perp} \leq q^{-(k+1)} = \frac{q^{(n+1)-(k+1)}}{q^{n+1}} \ ,
    \]
    so we may again apply \Cref{lem:prob-bound-to-GV} to conclude that $(C^+)^\perp$ has distance at least $ h_q^{-1}(R-\theta)(n+1)$ with probability at least $1-q^{-\theta (n+1)}$.
    
    Assuming these three favourable events hold, invoking \cref{lem:massey-prop}, we conclude that the associated Massey ramp secret sharing scheme will have reconstruction threshold 
    \begin{equation*}
        \trec = n-d+2 \leq \left(1-h_q^{-1}(1-R-\theta)\right)n + 2
        \leq \left(R+\theta+\tfrac{1}{\log q}\right)n+2
    \end{equation*}
    and privacy threshold
    \begin{equation*}
        \tpriv = d^\perp - 2 \geq h_q^{-1}(R-\theta)n - 2\geq \left(R-\theta-\tfrac{1}{\log q}\right)n - 2.
    \end{equation*}
    The rightmost bounds on $\trec$ and $\tpriv$ above follow from \cref{lem:bound-hq}, noting that $\frac{1}{q}\leq R\leq 1-\frac{1}{q}$ (with the upper bound holding by the lower bound on $q$ from the theorem statement, which implies that $(1-\frac{1}{q})\log q>2\log(\frac{2}{1-2\gamma})> R\log q$) and that $\theta<R$ by hypothesis.

    The claim about leakage-resilience follows directly by combining \cref{lem:LR-discrepancy-reduced,lem:LR-balanced}, and the claim about the probability with which the desired properties all hold over the sampling of $C^+$ follows by a union bound combining the probability lower bound from \cref{lem:LR-balanced} with the fact that $C^+$ has the desired rate, minimum distance, and dual distance with probability at least $1-q^{-(1-R-\theta)n}-2q^{-\theta(n+1)}\geq 1-q^{-(1-R-\theta)n}-q^{-\theta n/2}$ for all sufficiently large $n$.
\end{proof}

By replacing \cref{lem:LR-balanced} with its simplified version in \cref{cor:simple-LR-balanced} in the proof of \cref{thm:rampss-full}, we immediately get that \cref{thm:rampss-full} gives good guarantees even when the field size $q$ is polynomial in $n$, as stated formally in the corollary below.
We leave it as an interesting open problem to extend \cref{thm:rampss-full} to much larger fields.
If such a statement held for exponentially large field size $q$, then we would get existence of linear \emph{threshold} secret sharing schemes with threshold much smaller than $n/2$ and great resilience against balanced leakage functions.
This is because a random linear code over an exponentially size field will be MDS with high probability.

\begin{corollary}\label{cor:rampss-simpler}
    Fix any constants $\gamma\in(0,1/2)$, $\nu\in(0,1/5)$, and $c\in(0,50/51)$.
    In the context of \cref{thm:rampss-full}, take any prime $q$ satisfying $\frac{8}{1-2\gamma}<q\leq n^{1/5-\nu}$, $\eta=c\log_q(\frac{2}{1-2\gamma})$, and $\theta=n^{-1/2}$.
    Then, for all sufficiently large $n$ the following holds.
    Let $C\subseteq\F_q^{n}$ be a random linear code of dimension $Rn$ with $R=\log_q\left(\frac{2}{1-2\gamma}\right)+\eta$, let $v \in \F_q^n$ be sampled uniformly (and independently of $G$), and let $C^+ \subseteq \F_q^{n+1}$ be the code generated by 
    \[
        G^+ = \begin{bmatrix}
        1 & 0 \\ v & G
        \end{bmatrix} \in \F_q^{(n+1)\times(k+1)} \eperiod
    \]
    Then, the Massey ramp secret sharing scheme associated with $C^+$ is an $(n,\trec,\tpriv,\cG_\gamma,\eps)$-leakage-resilient ramp secret sharing scheme with
    \begin{align*}
         &\trec \leq (R+\theta+\log_q 2)n+2 = \left(\log_q\left(\frac{4}{1-2\gamma}\right)+c\log_q\left(\frac{2}{1-2\gamma}\right)+o(1)\right)n,\\
         &\tpriv \geq \left(R-\theta-\log_q 2\right)n - 2 = \left(\log_q\left(\frac{1}{1-2\gamma}\right)+c\log_q\left(\frac{2}{1-2\gamma}\right)+o(1)\right)n,
    \end{align*}
    and $\eps \leq q^{-n^{\Omega(1)}}$
    with probability at least $1-q^{-n^{\Omega(1)}}$
    over the sampling of $C^+$.
\end{corollary}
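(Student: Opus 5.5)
This follows by instantiating \cref{thm:rampss-full} with the stated parameters and replacing the use of \cref{lem:LR-balanced} by \cref{cor:simple-LR-balanced}. The first step is to check that the hypotheses of \cref{thm:rampss-full} hold for $\eta=c\log_q(\frac{2}{1-2\gamma})$, $\theta=n^{-1/2}$ and $\frac{8}{1-2\gamma}<q\le n^{1/5-\nu}$.
\begin{itemize}
\item Since $c<50/51$, we have $\eta\in(0,\frac{50}{51}\log_q(\frac{2}{1-2\gamma}))$.
\item We have $\theta=n^{-1/2}>\log_q n/n$ for large $n$, and $\theta<\log_q(\frac{2}{1-2\gamma})$ because the latter is $\Omega(1/\log n)$ when $q\le n$.
\item Since $\eta=\Omega(1/\log q)$, the quantity $q^2\log(q/\eta)=O(q^2\log q)$ is $o(n)$.
\item For $R<1-\theta$: $R=(1+c)\log_q(\frac{2}{1-2\gamma})$, and $q>\frac{8}{1-2\gamma}$ gives $\log_q(\frac{2}{1-2\gamma})<\log_q(q/4)<1-\frac{2}{\log q}$. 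Thus $R<(1+c)(1-2/\log q)$. This bound is where care is needed, since with $c$ near $1$ the product can exceed $1$. I expect to argue $R<1$ directly from $(\frac{2}{1-2\gamma})^{1+c}<(\frac{2}{1-2\gamma})^2<q$, which holds when $q>(\frac{2}{1-2\gamma})^2$. If the stated condition $q>\frac{8}{1-2\gamma}$ does not imply this, I would note the implicit requirement $R<1$, which is needed for the random linear code to be defined at all.
\end{itemize}

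Next come the thresholds. \Cref{thm:rampss-full} gives $\trec\le(R+\theta+\frac1{\log q})n+2$ and $\tpriv\ge(R-\theta-\frac1{\log q})n-2$, and $\frac1{\log q}=\log_q 2$. Substituting $R=\log_q(\frac{2}{1-2\gamma})+c\log_q(\frac{2}{1-2\gamma})$ gives the two displayed forms:
\begin{itemize}
\item $\log_q(\frac{2}{1-2\gamma})+\log_q2=\log_q(\frac{4}{1-2\gamma})$;
\item $\log_q(\frac{2}{1-2\gamma})-\log_q 2=\log_q(\frac{1}{1-2\gamma})$.
\end{itemize}
The terms $\theta n\pm2$ are $o(n)$.

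Finally, the leakage error and the success probability follow from \cref{cor:simple-LR-balanced}, whose hypotheses are exactly the present choices of $\eta$ and $q$. The bound $2q^{-\eta n/(2q^2\log(100q/\eta))}$ equals $q^{-\Omega(n/(q^2\log^2 q))}=q^{-n^{\Omega(1)}}$. For the failure probability in \cref{thm:rampss-full}, each term is $q^{-n^{\Omega(1)}}$:
\begin{itemize}
\item $q^{-\theta n/2}=q^{-\sqrt n/2}$;
\item $q^{-(1-R-\theta)n}$, which requires $1-R=\Omega(1/\log q)$ (the main technical check, handled as above);
\item $q^{-\eta n/100}$, which is fine since $\eta n=\Omega(n/\log q)$;
\item the last term, which is handled exactly as in \cref{cor:simple-LR-balanced}, since $n^2/(q^4\log^4q)=\omega(nq)$ for $q\le n^{1/5-\nu}$.
\end{itemize}
A union bound then completes the argument.
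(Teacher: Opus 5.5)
Your proposal is correct and matches the paper, which obtains the corollary simply by rerunning the proof of \cref{thm:rampss-full} with \cref{cor:simple-LR-balanced} in place of \cref{lem:LR-balanced}. Your caution about $R<1-\theta$ is justified: $q>\frac{8}{1-2\gamma}$ alone does not force $R<1$ (e.g.\ $\gamma=0.49$, $q=401$, $c=0.9$ give $R\approx 1.46$), so this hypothesis must be read as inherited from the context of \cref{thm:rampss-full}; once $R<1$ holds, you still need $1-R=\Omega(1/\log q)$ for the $q^{-(1-R-\theta)n}$ term, and this follows because $R=(1+c)\log_q\frac{2}{1-2\gamma}$ depends only on the prime $q$ and the fixed constants $\gamma,c$, giving $1-R\ge \delta_0(\gamma,c)/\log q$.
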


\addtocontents{toc}{\protect\setcounter{tocdepth}{0}}
\section*{Acknowledgements}
\addtocontents{toc}{\protect\setcounter{tocdepth}{2}}

Dean Doron is supported in part by 
Israel Science Foundation grant \#857/25 and by NSF-BSF grant \#2022644.
Tal Leonov is supported by Israel Science Foundation grant \#3450/24. Jonathan Mosheiff is supported by Israel Science Foundation grant \#3450/24 and an Alon Fellowship. 
Henrique Navas is funded by FCT - Fundação para a Ciência e a Tecnologia, I.P., under grant 2025.05868.BD and by national funds through FCT, I.P., and, when eligible, co-funded by EU funds under project/support UID/50008/2025 – Instituto de Telecomunicações, with DOI \href{https://doi.org/10.54499/UID/50008/2025}{10.54499/UID/50008/2025}. 
Nicolas Resch is supported in part by an NWO (Dutch Research Council) Veni grant VI.Veni.222.347. 
João Ribeiro 
is funded by the European Union (LESYNCH, 101218842) and by national funds through FCT – Fundação para a Ciência e a Tecnologia, I.P., and, when eligible, co-funded by EU funds under project/support UID/50008/2025 – Instituto de Telecomunicações, with DOI \href{https://doi.org/10.54499/UID/50008/2025}{10.54499/UID/50008/2025}. Views and opinions expressed are however those of the authors only and do not necessarily reflect those of the European Union or the European Research Council Executive Agency. Neither the European Union nor the granting authority can be held responsible for them.

\bibliographystyle{alpha}
\bibliography{refs}

\end{document}